\documentclass[conference]{IEEEtran}

\usepackage{multirow}
\usepackage{subcaption}
\usepackage{graphicx}
\usepackage{amsmath}
\usepackage{amssymb}
\usepackage{amsthm}
\usepackage{booktabs}
\usepackage{algorithm}
\usepackage{algorithmic}

\newtheorem{theorem}{Theorem}
\newtheorem{assumption}{Assumption}
\newtheorem{definition}{Definition}
\newtheorem{proposition}{Proposition}
\newtheorem{corollary}{Corollary}

\begin{document}
%
\title{Public Diffusion Models, Private Images: Key-Controlled Inversion for Conditional Reconstruction}

	

%
\author{\IEEEauthorblockN{Lijunxian Zhang,
Weihai Li\IEEEauthorrefmark{1},
Bin Liu and
Zikai Xu}
\IEEEauthorblockA{School of Cyber Science and Technology,
University of Science and Technology of China,
Hefei, China, 230026\\ Email: \texttt{\{ljxzhang@mail., whli@, flowice@, ustcxzk@mail.\}ustc.edu.cn}}}


\IEEEoverridecommandlockouts
\makeatletter\def\@IEEEpubidpullup{1.5\baselineskip}\makeatother
\IEEEpubid{\parbox{\columnwidth}{
		\IEEEauthorrefmark{1} Corresponding Author
}
\hspace{\columnsep}\makebox[\columnwidth]{}}

\maketitle

\begin{abstract}
Diffusion models are often deployed in settings where model parameters are publicly accessible (e.g., open‑source libraries or released checkpoints). This white‑box scenario creates a serious security risk: any user who obtains an intermediate latent representation can invert the process to recover the original input image. Most prior work on access control for generative models assumes a black‑box model (i.e., parameters are kept secret), typically under an honest‑but‑curious adversary. By contrast, we address the more challenging and realistic white‑box setting where all parameters are public.

We present a key‑controlled inversion framework that turns the inherent error propagation of diffusion models, which exponentially amplifies small perturbations, into a security asset. By injecting key‑dependent noise into the inversion formula, we ensure that only a user with the correct key can reconstruct the original image; any other key yields unrecognizable output.

Theoretically, by leveraging existing error-propagation theory for diffusion models, we prove that the resulting ciphertext distribution is IND-CPA secure and derive that the adversary’s advantage is exponentially small in a tunable security parameter, hence negligible for any probabilistic polynomial‑time (PPT) adversary. Experimentally, we validate these security guarantees across several models and datasets and further demonstrate cross-model robustness, that the injected key noise does not amplify the performance drop caused by model discrepancies.
\end{abstract}


%
\IEEEpeerreviewmaketitle

\section{Introduction}
Diffusion models (DMs) are used in commercial image generation services and open‑source libraries. A user who obtains an intermediate latent representation can invert the process to recover the original input. This creates a security risk when the input is sensitive (e.g., a face or a medical image). Existing protections, such as image watermarking~\cite{fernandez2023stable} or adversarial perturbations~\cite{liu2024stable}, are reactive. They detect or respond after the fact rather than prevent misuse before it. For example, image watermarking can prove ownership, but does not stop an adversary from inverting a latent to recover the original image. This motivates the need for an active access control mechanism that determines who can invert a latent to recover the original input.

The first, studied in the context of privacy‑preserving diffusion models~\cite{lei2025secure}, uses homomorphic encryption or secure multi‑party computation to protect a user's prompt while generating a plaintext latent image. These schemes assume the diffusion model parameters remain black‑box to the user. More importantly, they are designed for text‑to‑image generation and do not address image-to-image reconstruction. The second direction fine‑tunes the VAE decoder~\cite{gai2025pcdiff} so that a secret key is required to obtain a perceptually meaningful image.  Both lines of work rely on keeping the model inaccessible. This assumption fails in open‑source settings or when model checkpoints are publicly released, a common practice in the research community.

In this white‑box setting, where all model parameters are public, an adversary can simply download the model and invert any latent they obtain, rendering black‑box protections ineffective. How can one enable active access control without hiding the parameters?

Our solution rests on two key insights: 

First, there exists a strictly invertible sampling formula (e.g., the O‑BELM sampler~\cite{wang2024belm}) that allows exact reconstruction when the forward and backward paths are aligned. This gives us a correctness guarantee: with the correct key, the inversion process recovers the original input without error. 

Second, diffusion models exhibit exponential error propagation~\cite{li2023error}: a small perturbation injected early in the reverse process accumulates as the chain proceeds. This mirrors the avalanche effect in symmetric cryptography, where a tiny change in the key produces a completely different ciphertext. We turn this traditionally undesirable property into a security asset: by injecting a key‑dependent noise into the noise prediction term, any mismatch (i.e., an incorrect key) is exponentially amplified, making the reconstructed output unrecognizable.

Consequently, we utilize the strictly reversible sampler based on O‑BELM and inject a key‑dependent noise into the noise prediction term during inversion. With the correct key, the noise cancels exactly during decryption, yielding lossless reconstruction. With any other key, the mismatch is exponentially amplified along the chain, producing completely unrecognizable output. 

Thus, by leveraging the error propagation of diffusion models, we prove that the distinguishing advantage of any PPT adversary in the IND‑CPA game is exponentially small in a tunable security parameter, hence negligible. Also, experiments on three datasets confirm that correct keys enable recognizable reconstruction, while wrong keys yield severely distorted outputs. An adaptive adversary cannot gain a non‑negligible advantage even after thousands of queries. Overhead is less than 5\% compared to standard inversion.

\begin{figure}[!t]
    \centering
    \begin{subfigure}{0.9\linewidth}
        \includegraphics[width=1.0\linewidth]{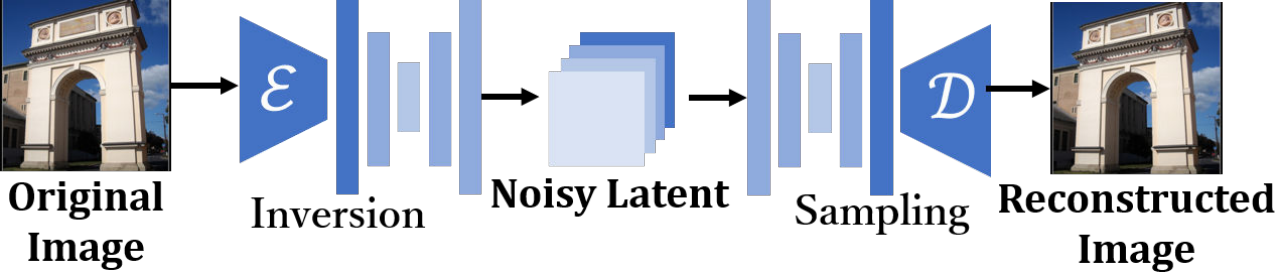}
        \caption{w/o Protection}
    \end{subfigure}
    \begin{subfigure}{0.9\linewidth}
        \includegraphics[width=1.0\linewidth]{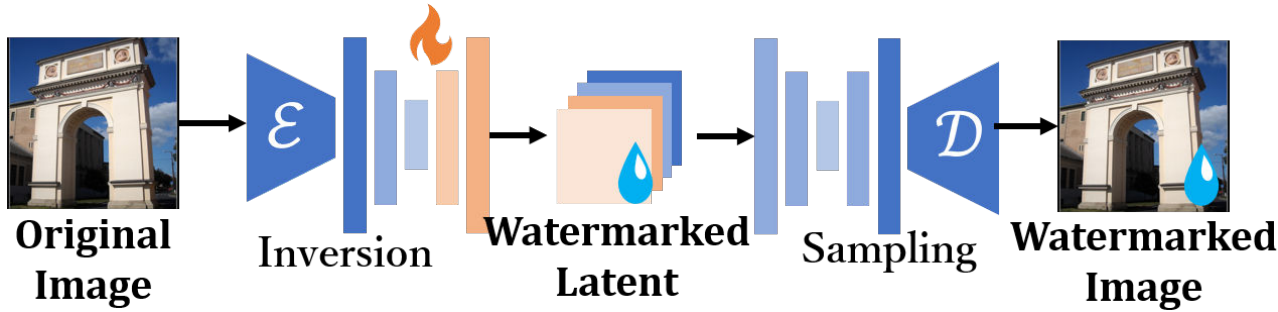}
        \caption{Post-hoc Watermarking~\cite{fernandez2023stable}}
    \end{subfigure}
    \begin{subfigure}{0.9\linewidth}
        \includegraphics[width=1.0\linewidth]{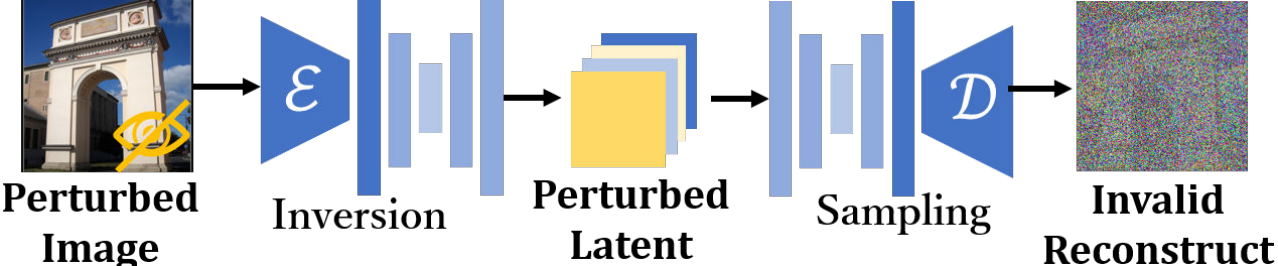}
        \caption{Adversarial Perturbation~\cite{chen2024editshield}}
    \end{subfigure}
    \begin{subfigure}{0.9\linewidth}
        \includegraphics[width=1.0\linewidth]{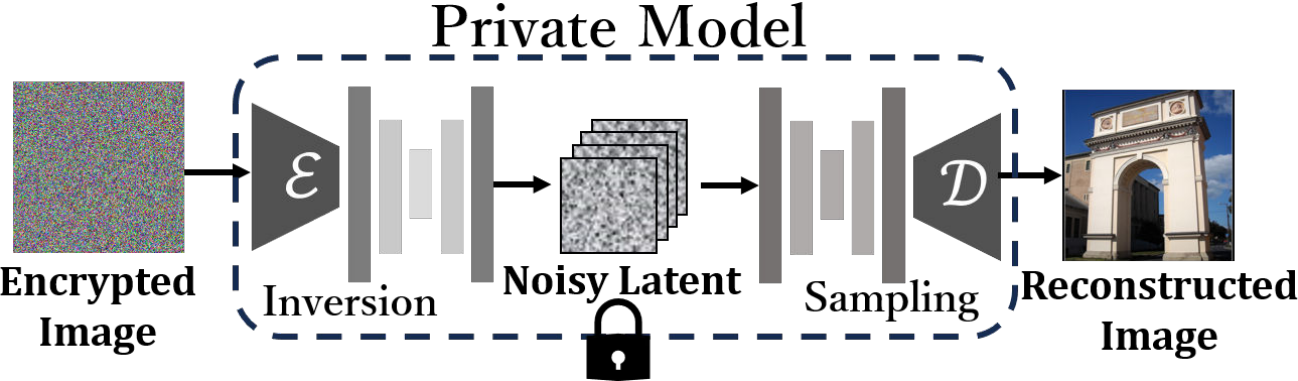}
        \caption{Privacy-Preserving Model~\cite{he2025private}}
    \end{subfigure}
    \begin{subfigure}{0.9\linewidth}
        \includegraphics[width=1.0\linewidth]{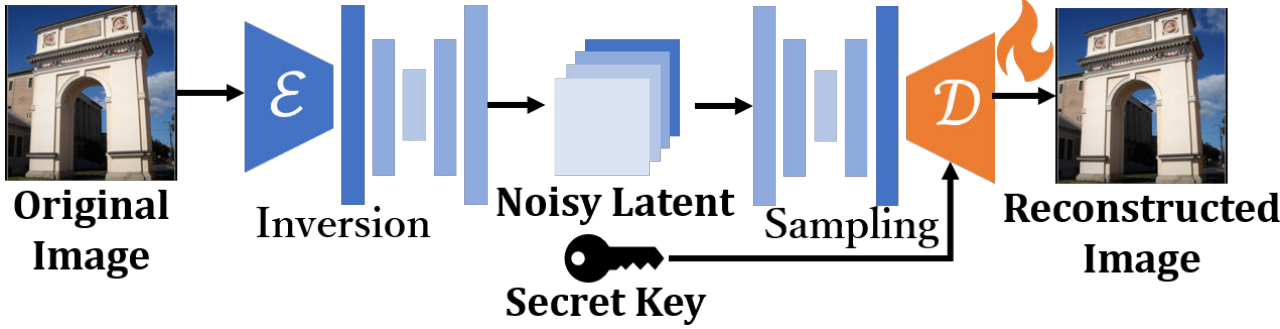}
        \caption{Fine-tuning-based Access Control~\cite{gai2025pcdiff}}
    \end{subfigure}
    \begin{subfigure}{0.9\linewidth}
        \includegraphics[width=1.0\linewidth]{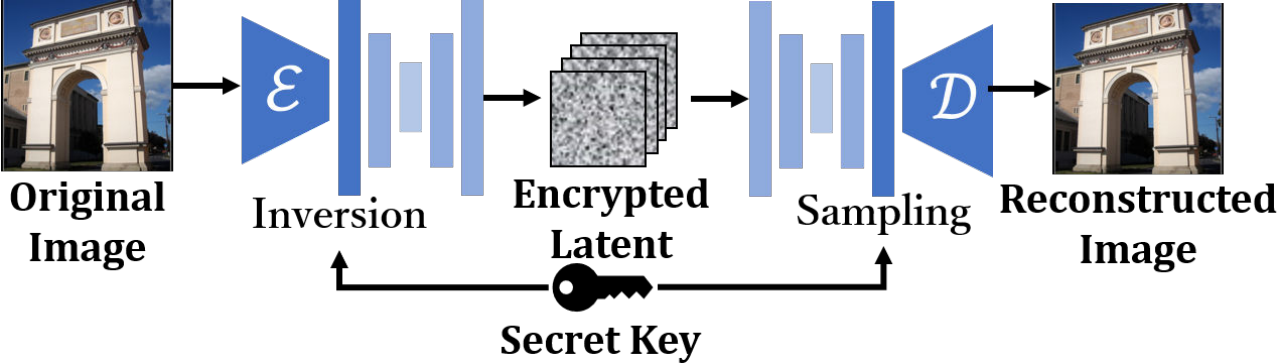}
        \caption{Our Method}
    \end{subfigure}
    \caption{Existing methods for image protection.}
    \label{fig:imageProtect}
\end{figure}

In summary, the main contributions of this work are listed as follows:
\begin{itemize}
    \item Key‑controlled inversion for white‑box diffusion models. We introduce a strictly invertible sampling formula that embeds a secret key into the noise prediction term, enabling active access control without hiding model parameters.
    \item Provable security via error propagation. Leveraging the exponential error amplification property of diffusion models, we conduct a formal security analysis for the key-controlled framework and prove IND‑CPA security with adversary advantage exponentially small in the effective inversion depth.
    \item Comprehensive experimental validation. We evaluate the correctness and security through validation experiments across multiple datasets and models. The results further demonstrate cross-model robustness, i.e., the system remains effective even when encryption and decryption are performed with different models. This property is crucial for practical deployment for frequently updated or fine-tuned models.
\end{itemize}

\section{Background and Preliminaries}
\subsection{Diffusion Models and DDIM Inversion}
Diffusion models are a class of generative models inspired by non-equilibrium thermodynamics~\cite{sohl2015deep}. They operate by gradually destroying the data structure via a forward diffusion process, then learning to reverse it to restore the original data distribution~\cite{jiang2024survey}. Therefore, diffusion models involve two fundamental processes: forward diffusion and reverse denoising. DDPM~\cite{ho2020denoising}formulates them as a probabilistic Markov chain, while DDIM~\cite{song2020denoising} reframes them into an ordinary differential equation (ODE). For DDIM, the forward process is defined as:
\[
\mathbf{x}_t = \sqrt{\bar{\alpha}_t}\,\mathbf{x}_0 + \sqrt{1-\bar{\alpha}_t}\,\epsilon,\quad \epsilon\sim\mathcal{N}(0,\mathbf{I}),
\]
where \(\bar{\alpha}_t = \prod_{i=1}^t \alpha_i\) with \(\alpha_i = 1-\beta_i\) following a predefined schedule. The standard reverse (sampling) step is:
\[
\mathbf{x}_{t-1} = \sqrt{\bar{\alpha}_{t-1}}\,\hat{\mathbf{x}}_{0|t} + \sqrt{1-\bar{\alpha}_{t-1}}\,\epsilon_\theta(\mathbf{x}_t,t),\]
where $\hat{\mathbf{x}}_{0|t}$ satisfies:
\[
\hat{\mathbf{x}}_{0|t} = \frac{1}{\sqrt{\bar{\alpha}_t}}\bigl(\mathbf{x}_t-\sqrt{1-\bar{\alpha}_t}\,\epsilon_\theta(\mathbf{x}_t,t)\bigr).
\]

For image-to-image (I2I) reconstruction, we need the inversion process: given an image \(\mathbf{x}_0\), we compute a sequence of latents \(\mathbf{x}_1,\mathbf{x}_2,\dots,\mathbf{x}_T\) that can later be used to reconstruct \(\mathbf{x}_0\). Unlike the forward process, inversion uses the model’s predicted noise \(\epsilon_\theta(\mathbf{x}_t,t)\) instead of a random sample. The inversion step from \(t\) to \(t+1\) is:
\[
\mathbf{x}_{t+1} = \sqrt{\bar{\alpha}_{t+1}}\,\hat{\mathbf{x}}_{0|t} + \sqrt{1-\bar{\alpha}_{t+1}}\,\epsilon_\theta(\mathbf{x}_t,t),
\]
where \(\hat{\mathbf{x}}_{0|t}\) is defined as above. This expression is obtained by assuming the same noise prediction \(\epsilon_\theta(\mathbf{x}_t,t)\) would have been used in the forward direction. Although deterministic, the forward-reverse pair is not strictly reversible; i.e., applying the reverse step followed by the inversion does not exactly recover the original $\mathbf{x}_0$~\cite{huang2025diffusion}.

\subsection{Strictly Invertible Sampling}
\label{Inv_Re}
The irreversibility stems from a simple asymmetry: the inversion step uses $\epsilon_\theta(\mathbf{x}_t, t)$, while the reverse (sampling) step would need $\epsilon_\theta(\mathbf{x}_{t+1}, t+1)$ to exactly undo it. These two predictions are generally not equal, so the forward-reverse pair is not strictly reversible. Hence, inversion capability is distinct from true reversibility. 

To achieve exact reversible inversion, subsequent works have resorted to fine-tuning~\cite{mokady2023null} or redesigned sampling formulations~\cite{wallace2023edict,zhang2024exact}. In this paper, we adopt O-BELM~\cite{wang2024belm} for diffusion sampling and inversion. It is a bidirectional explicit sampler that guarantees mathematically exact reversibility. Its optimal coefficients, derived from minimizing local truncation error, keep reconstruction error negligible.

Let $\gamma_i$ denote the noise-schedule coefficient used in the O-BELM sampler, defined as $\gamma_i = \sqrt{{\alpha}_i}$, where ${\alpha}_i$ is the standard DDIM parameter above. In the following, we consistently use $\gamma_i$ to avoid confusion with the $\alpha_t$ in DDIM.

The O-BELM sampler and its exact inversion are given by:
\begin{align}
\label{belm_samp}
    \mathbf{x}_{i-1} &= a_i \mathbf{x}_{i+1} + b_i \mathbf{x}_{i} + c_i \,\boldsymbol{\varepsilon}_{\theta}(\mathbf{x}_{i}, i), \\
\label{belm_inv}
    \mathbf{x}_{i+1} &= a_i' \mathbf{x}_{i-1} + b_i' \mathbf{x}_{i} + c_i' \,\boldsymbol{\varepsilon}_{\theta}(\mathbf{x}_{i}, i).
\end{align}
The coefficients $a_i,b_i,c_i$ and $a_i',b_i',c_i'$ are step-dependent functions of $\gamma_i$, derived from minimizing local truncation error:
\begin{align}
    \label{param_val}
    a_i&=\frac{h_{i}^{2}}{h_{i+1}^{2}}\frac{\gamma_{i-1}}{\gamma_{i+1}} \nonumber\\ 
    b_i&=\frac{h_{i+1}^{2}-h_{i}^{2}}{h_{i+1}^{2}}\frac{\gamma_{i-1}}{\gamma_{i}}\\ 
    c_i&=-\frac{h_{i}(h_{i}+h_{i+1})}{h_{i+1}}\gamma_{i-1}\nonumber\\
    \label{param_val2}
    a_i'&=\frac{h_{i+1}^{2}}{h_{i}^{2}}\frac{\gamma_{i+1}}{\gamma_{i-1}}\nonumber\\
    b_i'&=\frac{h_{i}^{2}-h_{i+1}^{2}}{h_{i}^{2}}\frac{\gamma_{i+1}}{\gamma_{i}}\\
    c_i'&=\frac{h_{i+1}(h_{i}+h_{i+1})}{h_{i}}\gamma_{i+1}\nonumber
\end{align}
where $h_i = \bar{\sigma}_i - \bar{\sigma}_{i-1}$, with $\bar{\sigma}_i = \sqrt{(1-\gamma_i)/\gamma_i}$ being the scaled noise level.

These coefficients satisfy the following mutual inverse relations naturally:
\begin{equation}
\label{coef_con}
a_i a_i' = 1,\qquad c_i + a_i' b_i = 0,\qquad c_i' + a_i' c_i = 0.
\end{equation}
Consequently, applying Eq.~(\ref{belm_samp}) followed by Eq.~(\ref{belm_inv}) recovers the original $\mathbf{x}_{i+1}$ exactly, establishing strict reversibility.

\subsection{Error Propagation in Diffusion Models}

\textbf{Diffusion models are sequential}: each denoising step takes the output of the previous step. Consequently, a small error at an early step can propagate and potentially amplify along the chain. This phenomenon, known as error propagation, has been formalized by Li and van der Schaar~\cite{li2023error}. We briefly recall their framework, which we will later use to analyze the security of our key‑controlled inversion.

\begin{definition}[\textbf{Modular and cumulative errors}]
\label{def:mod_cum_err}
     For a diffusion model with $T$ steps, let  $\mathcal{E}_t^{\mathrm{mod}}$ denote the modular error of the $t$-th denoising module $p_\theta(\mathbf{x}_{t-1}\mid\mathbf{x}_t)$ compared with the true denoising process $q(\mathbf{x}_{t-1}\mid\mathbf{x}_t)$:
\begin{equation}
\label{eq:mod_err}
    \mathcal{E}_t^{\mathrm{mod}} = \mathbb{E}_{\mathbf{x}_t\sim p_\theta(\mathbf{x}_t)}\Bigl[D_{\mathrm{KL}}\bigl(p_\theta(\mathbf{x}_{t-1}\mid\mathbf{x}_t)\;\big\|\;q(\mathbf{x}_{t-1}\mid\mathbf{x}_t)\bigr)\Bigr].
\end{equation}

    Similarly, the cumulative error $\mathcal{E}_t^{\mathrm{cum}}$ captures the total discrepancy up to step $t$:
\begin{equation}
    \mathcal{E}_t^{\mathrm{cum}} = D_{\mathrm{KL}}\bigl(p_\theta(\mathbf{x}_{t-1})\;\big\|\;q(\mathbf{x}_{t-1})\bigr),
\end{equation}
where $p_\theta(\mathbf{x}_{t-1})$ is the marginal distribution induced by the sampling chain and $q(\mathbf{x}_{t-1})$ is the true marginal from the forward process. 
\end{definition}

Intuitively, $\mathcal{E}_t^{\mathrm{mod}}$ measures how accurately the module (i.e. $\varepsilon_\theta(\mathbf{x}_t, t)$) predicts the true posterior.  And both errors are non‑negative and $\mathcal{E}_T^{\mathrm{cum}}=0$ because $p_\theta(\mathbf{x}_T)=q(\mathbf{x}_T)$ (standard Gaussian).

Based on these definitions, Li and van der Schaar~\cite{li2023error} established how modular errors accumulate along the chain, leading to the following propagation inequality.

\begin{theorem}[Error propagation inequality \cite{li2023error}]
\label{thm:error_prop}
Under mild conditions (the network output $\epsilon_\theta$ is approximately standard Gaussian and the entropy of $p_\theta(\mathbf{x}_t)$ decreases with $t$), the cumulative error $\mathcal{E}_t^{\mathrm{cum}}$ and modular error $\mathcal{E}_t^{\mathrm{mod}}$ satisfy
\begin{equation}
\label{prog_wo_mu}
\mathcal{E}_t^{\mathrm{cum}} \;\ge\; \mathcal{E}_{t+1}^{\mathrm{cum}} \;+\; \mathcal{E}_t^{\mathrm{mod}}, \qquad \forall t\in[1,T].
\end{equation}
Equivalently, define the amplification factor $\mu_t$ by
\begin{equation}
\label{prog_w_mu}
    \mathcal{E}_t^{\mathrm{cum}} - \mathcal{E}_{t}^{\mathrm{mod}} = \mu_t\,\mathcal{E}_{t+1}^{\mathrm{cum}}.
\end{equation}
Then the inequality implies $\mu_t \ge 1$ for every $t$.
\end{theorem}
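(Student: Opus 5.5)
This is a cited result (Li and van der Schaar), so the plan is to reconstruct their argument in the notation of Definition~\ref{def:mod_cum_err}. The central tool is the chain rule for KL divergence applied to the joint laws of the adjacent pair $(\mathbf{x}_t,\mathbf{x}_{t-1})$ under the sampling chain $p_\theta$ and the true process $q$.

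First, expand $D_{\mathrm{KL}}(p_\theta(\mathbf{x}_t,\mathbf{x}_{t-1})\,\|\,q(\mathbf{x}_t,\mathbf{x}_{t-1}))$ in two ways. Factoring through $\mathbf{x}_t$ gives
\[
D_{\mathrm{KL}}\bigl(p_\theta(\mathbf{x}_t)\,\|\,q(\mathbf{x}_t)\bigr)+\mathbb{E}_{p_\theta(\mathbf{x}_t)}\bigl[D_{\mathrm{KL}}\bigl(p_\theta(\mathbf{x}_{t-1}\mid\mathbf{x}_t)\,\|\,q(\mathbf{x}_{t-1}\mid\mathbf{x}_t)\bigr)\bigr]=\mathcal{E}_{t+1}^{\mathrm{cum}}+\mathcal{E}_t^{\mathrm{mod}}.
\]
Factoring through $\mathbf{x}_{t-1}$ instead gives $\mathcal{E}_t^{\mathrm{cum}}+R_t$, where
\[
R_t=\mathbb{E}_{p_\theta(\mathbf{x}_{t-1})}\bigl[D_{\mathrm{KL}}\bigl(p_\theta(\mathbf{x}_t\mid\mathbf{x}_{t-1})\,\|\,q(\mathbf{x}_t\mid\mathbf{x}_{t-1})\bigr)\bigr]
\]
is the expected mismatch of the reversed (posterior) conditionals. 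Equating the two expansions yields the identity
\[
\mathcal{E}_t^{\mathrm{cum}}=\mathcal{E}_{t+1}^{\mathrm{cum}}+\mathcal{E}_t^{\mathrm{mod}}-R_t .
\]
The claimed inequality (\ref{prog_wo_mu}) is therefore equivalent to $R_t\le 0$. Since $R_t$ is itself a KL divergence and hence nonnegative, this means proving $R_t=0$, or at least controlling it. A naive data-processing argument fails here: it would give the inequality in the wrong direction.

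The main obstacle is exactly this step. The mild conditions have to enter here, and I would use them in the following order.
\begin{enumerate}
\item The true forward kernel $q(\mathbf{x}_t\mid\mathbf{x}_{t-1})$ is a fixed Gaussian with mean $\sqrt{\alpha_t}\,\mathbf{x}_{t-1}$ and variance $\beta_t$.
\item The assumption that $\epsilon_\theta$ is approximately standard Gaussian is meant to ensure that the model-induced posterior $p_\theta(\mathbf{x}_t\mid\mathbf{x}_{t-1})$ is also approximately this same Gaussian, so that $R_t$ is negligible.
\item The monotone-entropy assumption is used to sign the residual cross-entropy terms that survive the approximation. Writing each KL as cross-entropy minus entropy, $H(p_\theta(\mathbf{x}_{t-1}))\le H(p_\theta(\mathbf{x}_t))$ should make the leftover difference nonnegative on the correct side.
\end{enumerate}
The result is $\mathcal{E}_t^{\mathrm{cum}}\ge \mathcal{E}_{t+1}^{\mathrm{cum}}+\mathcal{E}_t^{\mathrm{mod}}$. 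I expect making this sign argument rigorous, rather than heuristic, to be the hard part. In the paper's context I would state it explicitly as inherited from \cite{li2023error} under their assumptions.

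Finally, the reformulation follows directly. Define $\mu_t$ by (\ref{prog_w_mu}); this is possible whenever $\mathcal{E}_{t+1}^{\mathrm{cum}}>0$. Then (\ref{prog_wo_mu}) reads $\mu_t\,\mathcal{E}_{t+1}^{\mathrm{cum}}\ge \mathcal{E}_{t+1}^{\mathrm{cum}}$, hence $\mu_t\ge 1$. The boundary case $t=T$ uses $\mathcal{E}_T^{\mathrm{cum}}=0$, where $\mu_t$ may be set to $1$ by convention.
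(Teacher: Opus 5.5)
\textbf{Verdict: the proposal has a genuine gap, and it cannot be closed.} The paper gives no proof of \eqref{prog_wo_mu}. It imports the inequality from \cite{li2023error} and adds only your closing step: \eqref{prog_w_mu} turns it into $\mu_t\ge 1$. Your chain-rule identity is correct, but it is precisely what makes steps 2--3 of your plan fail outright, not merely hard to make rigorous. The identity
\[
\mathcal{E}_t^{\mathrm{cum}}=\mathcal{E}_{t+1}^{\mathrm{cum}}+\mathcal{E}_t^{\mathrm{mod}}-R_t,\qquad R_t\ge 0,
\]
is exact, so no cross-entropy terms are left over for the entropy-monotonicity assumption to sign. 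Splitting each KL into cross-entropy minus entropy only re-expresses $R_t$, which remains a KL divergence and is nonnegative whatever $H(p_\theta(\mathbf{x}_t))$ does. Step 2 does not follow from the hypothesis either: Gaussianity of the network output says nothing about the Bayesian reversal $p_\theta(\mathbf{x}_t\mid\mathbf{x}_{t-1})$ of the model chain. Even if granted, it gives only $R_t\approx 0$, i.e.\ approximate \emph{equality}, not the inequality. Any favourable residual that appears after replacing exact conditionals by Gaussian surrogates is an artefact of the surrogate. It does not transfer back to the exact quantities in Definition~\ref{def:mod_cum_err}.

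The data-processing inequality you set aside as naive is in fact decisive. Take Definition~\ref{def:mod_cum_err} as written: both KLs in the direction $p_\theta\,\|\,q$, the modular error weighted by $p_\theta(\mathbf{x}_t)$, and $q(\mathbf{x}_{t-1}\mid\mathbf{x}_t)$ the reverse conditional of the forward joint. Then $\mathcal{E}_t^{\mathrm{cum}}\le\mathcal{E}_{t+1}^{\mathrm{cum}}+\mathcal{E}_t^{\mathrm{mod}}$ always. Moreover, \eqref{prog_wo_mu} holds iff $R_t=0$, i.e.\ $p_\theta(\mathbf{x}_t\mid\mathbf{x}_{t-1})=q(\mathbf{x}_t\mid\mathbf{x}_{t-1})$ $p_\theta$-almost surely. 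Whenever $\mathcal{E}_{t+1}^{\mathrm{cum}}>0$, \eqref{prog_w_mu} then gives $\mu_t=1-R_t/\mathcal{E}_{t+1}^{\mathrm{cum}}\le 1$. So the statement can hold only in the degenerate form $\mu_t=1$. It cannot deliver the strict amplification $\mu_t>1$ that Assumption~\ref{asm:nonlinear_amplification} and Proposition~\ref{prop:cumulative_bound} rely on, and no choice of mild conditions repairs this.

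You have two honest options. One is to check whether \cite{li2023error} uses different definitions, for instance weighting the modular error by $q(\mathbf{x}_t)$, or conditioning $q$ on $\mathbf{x}_0$; with matched KL direction and weighting, the chain rule forces the $\le$ direction. The other is to present the inequality as an external modelling assumption rather than a theorem.

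A smaller index point: under the definition, the quantity that vanishes is $\mathcal{E}_{T+1}^{\mathrm{cum}}=D_{\mathrm{KL}}(p_\theta(\mathbf{x}_T)\,\|\,q(\mathbf{x}_T))$, not $\mathcal{E}_T^{\mathrm{cum}}$; the paper has the same slip. At $t=T$ the claim reads $\mathcal{E}_T^{\mathrm{cum}}\ge\mathcal{E}_T^{\mathrm{mod}}$, which again holds only in the equality case $R_T=0$.
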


\textbf{Empirical Results.} Li and van der Schaar~\cite{li2023error} empirically estimated the cumulative error using the maximum mean discrepancy (MMD)~\cite{gretton2012kernel} with \(T=1000\) sampling steps. Their experiments show that the MMD value remains stable during early steps but grows rapidly as the reverse chain approaches \(t=0\), exhibiting a clear exponential amplification trend. This observation directly supports the existence of a threshold step \(x_0\) after which errors are exponentially magnified, providing an empirical foundation for our Assumption~\ref{asm:linear_growth}.

\section{Threat Model}
\label{thr_mod}
We consider a public diffusion model (white‑box setting) where all parameters are known. A secret key $k$ is used to encrypt an image $\mathbf{x}_0$ into a latent $\mathbf{x}_T^{(\delta)}$ via our key‑controlled inversion. Only a user with the correct key can recover $\mathbf{x}_0$; any other key yields invalid.

\textbf{Adversary Capabilities}: We follow Kerckhoffs's principle: the adversary $\mathcal{A}$ knows the entire diffusion model, the inversion/sampling algorithms, and the encryption scheme, except the secret key $k$. $\mathcal{A}$ is probabilistic polynomial time (PPT) and can adaptively query an encryption oracle $\mathcal{O}$: on input a plaintext image $\mathbf{x}_0$, $\mathcal{O}$ returns $\mathbf{x}_T^{(\delta)}$ generated with a fresh random key noise. $\mathcal{A}$ can make up to $q = \text{poly}(\lambda)$ queries.

\textbf{Adversary Goals}: The adversary $\mathcal{A}$ aims to:
\begin{itemize}
    \item Distinguish which of the two chosen plaintexts corresponds to a given ciphertext (IND‑CPA).
    \item Recover the secret key $k$ or the original image $\mathbf{x}_0$ from a ciphertext.
\end{itemize}

We focus on the first goal, which implies the others.

\textbf{Security Goal: IND‑CPA}: Let $\Pi = (\mathsf{Gen},\mathsf{Enc},\mathsf{Dec})$ be our scheme. For a PPT adversary $\mathcal{A}$, define the IND‑CPA advantage as:
\[
\operatorname{Adv}_{\Pi,\mathcal{A}}^{\mathsf{IND-CPA}}(\lambda) = \left|\Pr[\mathcal{A} \text{ wins}] - \frac12\right|,
\]
where the game proceeds as follows:
\begin{enumerate}
    \item $\mathcal{A}$ is allowed to make up to $q = \mathsf{poly}(\lambda)$ adaptive queries to an encryption oracle $\mathcal{O}$. On each query with a plaintext $\mathbf{x}_0$, $\mathcal{O}$ returns $\mathbf{x}_T^{(\delta)} \gets \mathsf{Enc}(k,\mathbf{x}_0)$ using freshly generated key noise.
    \item $\mathcal{A}$ then chooses two equal‑length plaintexts $\mathbf{x}_0^{(0)}, \mathbf{x}_0^{(1)}$ that were not queried to $\mathcal{O}$.
    \item The challenger picks $b \in \{0,1\}$ uniformly, computes $\mathbf{x}_T^{(\delta)} \gets \mathsf{Enc}(k,\mathbf{x}_0^{(b)})$, and sends it to $\mathcal{A}$.
    \item $\mathcal{A}$ outputs a guess $b'$ and wins if $b' = b$.
\end{enumerate}

$\Pi$ is IND‑CPA secure if for every PPT $\mathcal{A}$, $\operatorname{Adv}_{\Pi,\mathcal{A}}^{\mathsf{IND-CPA}}(\lambda)$ is negligible in the security parameter $\lambda$, where $\lambda$ depends on both the injected noise variance $\sigma_N$ and the effective amplification depth $T$ (the number of steps after which errors grow exponentially). We have proved the existence of such a $\lambda$, and its practical values are empirically given through our experiments.

\section{Proposed Framework}
\subsection{Key-controlled Image Reconstruction}
\begin{figure}[tb]
    \centering
    \includegraphics[width=\linewidth]{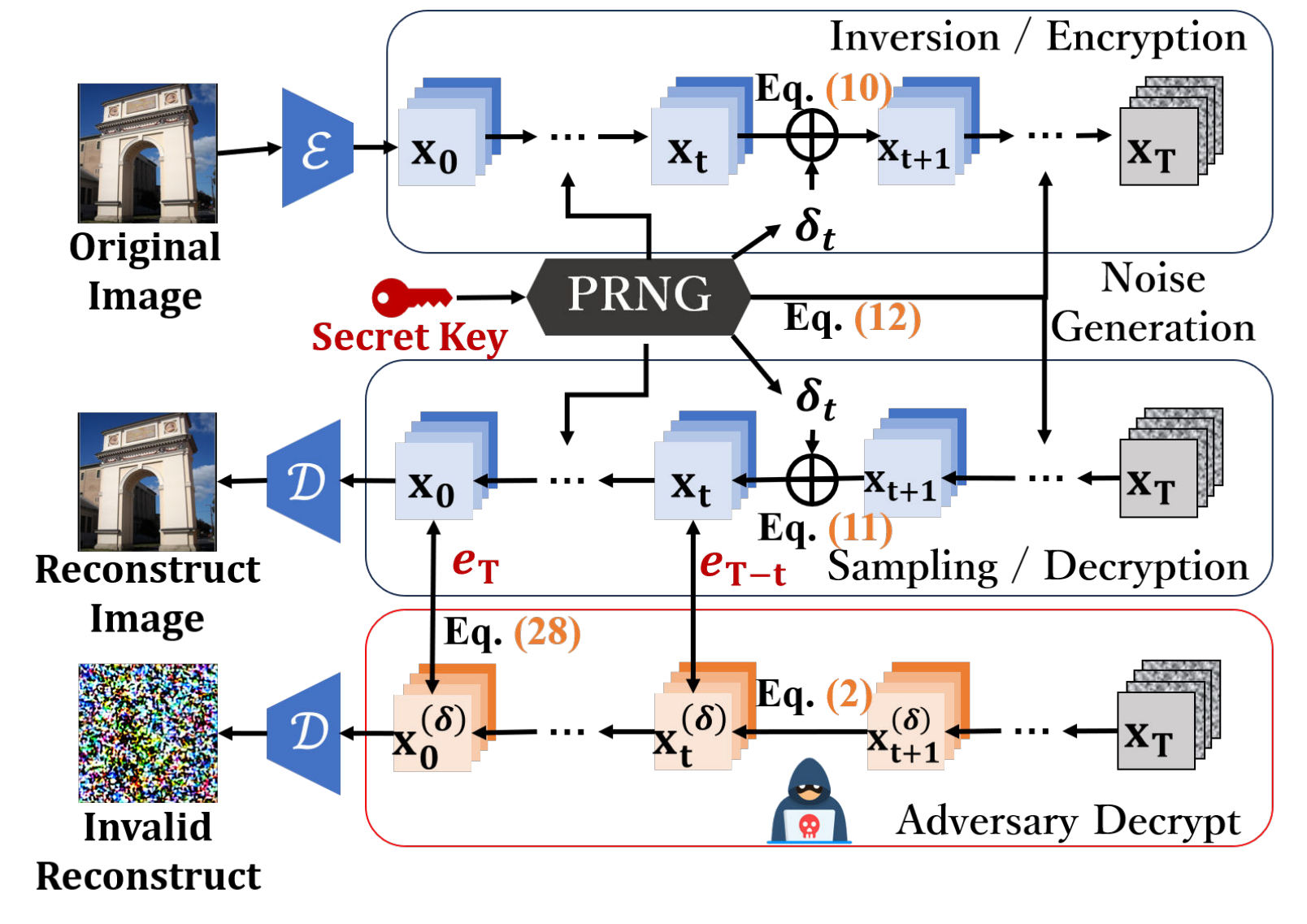}
    \caption{Overall framework. A secret key $k$ drives a PRNG to produce noise $\delta_t$ that is injected into the inversion process. Only the correct key enables exact reconstruction.}
    \label{fig:cryptdiff_framework}
\end{figure}

Our construction starts from the deterministic O‑BELM sampler (Eqs.~\eqref{belm_samp} and~\eqref{belm_inv}), which provides mathematically exact reversibility. To turn this invertible sampler into a key‑controlled encryption scheme, we exploit the fact that the predicted noise term $\boldsymbol{\epsilon}_\theta(\mathbf{x}_i,i)$ propagates through the inversion chain. By augmenting this term with a key‑dependent perturbation $\delta_i$, we obtain a conditional inversion and its corresponding reverse sampling:

\begin{align}
\label{eq:key_samp}
    \mathbf{x}_{i-1} &= a_i \mathbf{x}_{i+1} + b_i \mathbf{x}_i + c_i \bigl(\boldsymbol{\epsilon}_\theta(\mathbf{x}_i,i) + \delta_i\bigr), \\
\label{eq:key_inv}
    \mathbf{x}_{i+1} &= a_i' \mathbf{x}_{i-1} + b_i' \mathbf{x}_i + c_i' \bigl(\boldsymbol{\epsilon}_\theta(\mathbf{x}_i,i) + \delta_i\bigr).
\end{align}

The coefficients $a_i,b_i,c_i$ and $a_i',b_i',c_i'$ satisfy the mutual inverse relations of Eq.~\eqref{coef_con}. Consequently, applying Eq.~\eqref{eq:key_inv} (encryption) followed by Eq.~\eqref{eq:key_samp} (decryption) with the same $\delta_i$ recovers the original $\mathbf{x}_0$ exactly.

\textbf{Key noise generation.}: To make $\delta_i$ reproducible yet unpredictable, we generate it using a cryptographically secure pseudorandom generator (CSPRNG). Let $k$ be a secret key, $n$ a pseudorandom but content-aware seed, and $\sigma_i^2$ a step‑dependent variance. Then
\begin{equation}
    \delta_i = \mathsf{PRNG}(k, i, n) \;\sim\; \mathcal{N}(0, \sigma_i^2 \mathbf{I}_d),
\end{equation}
where $d$ is the dimension of $\mathbf{x}_i$. The variance $\sigma_i^2$ can be a constant or a scheduled function; it trades off security against reconstruction fidelity.

\textbf{Encryption and decryption algorithms.}: Algorithm~\ref{alg:enc} performs encryption (key‑controlled inversion) by iterating Eq.~\eqref{eq:key_inv} from $i=0$ to $T-1$, while Algorithm~\ref{alg:dec} performs decryption (key‑controlled sampling) by iterating Eq.~\eqref{eq:key_samp} backward from $i=T-1$ to $0$.

\begin{algorithm}[tb]
    \caption{Encryption (Key‑controlled Inversion)}
    \label{alg:enc}
    \begin{algorithmic}[1]
        \REQUIRE Original latent $\mathbf{x}_0$, secret key $k$, total steps $T$, seed $n$, variance schedule $\{\sigma_i^2\}_{i=0}^{T-1}$
        \ENSURE Protected latent $\mathbf{x}_{T,\delta}$
        \FOR{$i = 0$ \TO $T-1$}
            \STATE $\delta_i \gets \mathsf{PRNG}(k,i,n) \sim \mathcal{N}(0,\sigma_i^2\mathbf{I}_d)$
            \STATE $\mathbf{x}_{i+1,\delta} \gets a_i' \mathbf{x}_{i-1,\delta} + b_i' \mathbf{x}_{i,\delta} + c_i' (\boldsymbol{\epsilon}_\theta(\mathbf{x}_{i,\delta},i) + \delta_i)$ \COMMENT{Eq.~\eqref{eq:key_inv}}
        \ENDFOR
        \RETURN $\mathbf{x}_{T,\delta}$
    \end{algorithmic}
\end{algorithm}

\begin{algorithm}[tb]
    \caption{Decryption (Key‑controlled Sampling)}
    \label{alg:dec}
    \begin{algorithmic}[1]
        \REQUIRE Protected latent $\mathbf{x}_{T,\delta}$, secret key $k$, total steps $T$, seed $n$, variance schedule $\{\sigma_i^2\}_{i=0}^{T-1}$
        \ENSURE Reconstructed latent $\mathbf{x}_{0,\delta}$
        \FOR{$i = T-1$ \TO $0$}
            \STATE $\delta_i \gets \mathsf{PRNG}(k,i,n) \sim \mathcal{N}(0,\sigma_i^2\mathbf{I}_d)$
            \STATE $\mathbf{x}_{i-1,\delta} \gets a_i \mathbf{x}_{i+1,\delta} + b_i \mathbf{x}_{i,\delta} + c_i (\boldsymbol{\epsilon}_\theta(\mathbf{x}_{i,\delta},i) + \delta_i)$ \COMMENT{Eq.~\eqref{eq:key_samp}}
        \ENDFOR
        \RETURN $\mathbf{x}_{0,\delta}$
    \end{algorithmic}
\end{algorithm}

The next sections analyze the correctness (Proposition~\ref{idel_corr}) and the IND‑CPA security (Theorem~\ref{ind_sec}) of this construction.

\subsection{Correctness Analysis}
The classical correctness criterion requires:
\begin{equation}
    \forall k\in K,m\in M\quad\Pr(\mathsf{Dec}(k,\mathsf{Enc}(k,m))=m)=1
\end{equation}
where $k$ is a key from the key space $K$ and $m$ is a plaintext from the plaintext space $M$. In our scheme, however, perfect reconstruction is relaxed due to the diffusion model’s generative noise $\varepsilon_\theta$. We therefore define correctness as:
\begin{definition}[$\varepsilon$-Correctness]
\label{correct_def}
Our key-controlled inversion scheme is $\varepsilon$-correct if for any secret key $k$ and original latent $\mathbf{x}_0$, the latent $\mathbf{x}_0'$ reconstructed via the conditioned sampling (Alg.~\ref{alg:dec}) from the protected latent $\mathbf{x}_{\delta,T}$ (Alg.~\ref{alg:enc}) satisfies:
\[
\Pr\left( \| \mathbf{x}_0' - \mathbf{x}_0 \| \geq \varepsilon \right) \leq \mathsf{negl}(\lambda),
\]
where the probability encompasses all randomness in the noise injection and model sampling, and $\mathsf{negl}(\cdot)$ is negligible in the security parameter $\lambda$.
\end{definition}

Def.~\ref{correct_def} collapses to perfect (deterministic) correctness as $\varepsilon=0$. With the explicit reversibility of O-BELM, we can establish the following ideal-case guarantee:
\begin{proposition}[Ideal Correctness]
\label{idel_corr}
    Under ideal conditions (no numerical error, deterministic model outputs), our scheme achieves 0-Correctness for any noise variance $\sigma_t$ and coefficient scheduling algorithm.
\end{proposition}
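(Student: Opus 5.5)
We argue by induction on the step index: exact single-step invertibility of the keyed update implies that decryption retraces the encryption trajectory backward, one step at a time. The key observation is that in both Eq.~\eqref{eq:key_samp} and Eq.~\eqref{eq:key_inv} the perturbed noise term $\boldsymbol{\epsilon}_\theta(\mathbf{x}_i,i)+\delta_i$ depends only on the \emph{middle} state $\mathbf{x}_i$ and on $\delta_i=\mathsf{PRNG}(k,i,n)$. Under the ideal assumptions (deterministic $\boldsymbol{\epsilon}_\theta$, exact arithmetic) and with the same $(k,n)$, this term is a fixed vector $e_i:=\boldsymbol{\epsilon}_\theta(\mathbf{x}_i,i)+\delta_i$ that both directions compute identically, whatever $\sigma_i$ is. Each keyed step is therefore an affine map in the outer variables $(\mathbf{x}_{i-1},\mathbf{x}_{i+1})$ with the same constant $e_i$.

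\textbf{Single-step inversion.} Substitute Eq.~\eqref{eq:key_inv} into Eq.~\eqref{eq:key_samp}. This gives
\[
a_i(a_i'\mathbf{x}_{i-1}+b_i'\mathbf{x}_i+c_i'e_i)+b_i\mathbf{x}_i+c_ie_i
=a_ia_i'\mathbf{x}_{i-1}+(a_ib_i'+b_i)\mathbf{x}_i+(a_ic_i'+c_i)e_i .
\]
We then show $a_ia_i'=1$, $a_ib_i'+b_i=0$ and $a_ic_i'+c_i=0$. The first is in Eq.~\eqref{coef_con}. The other two should follow from Eq.~\eqref{coef_con} after multiplying by $a_i=1/a_i'$. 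As a safeguard, we verify them directly from Eqs.~\eqref{param_val}--\eqref{param_val2}. For instance,
\[
a_ib_i'=\frac{h_i^2}{h_{i+1}^2}\frac{\gamma_{i-1}}{\gamma_{i+1}}\cdot\frac{h_i^2-h_{i+1}^2}{h_i^2}\frac{\gamma_{i+1}}{\gamma_i}=-b_i,
\]
and the analogous product for $a_ic_i'$ gives $-c_i$. Hence Eq.~\eqref{eq:key_samp} returns exactly $\mathbf{x}_{i-1}$ from $(\mathbf{x}_i,\mathbf{x}_{i+1})$. The constant $e_i$ cancels identically, so $\sigma_i$ plays no role.

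\textbf{Chaining.} Let $\mathbf{x}_{0},\dots,\mathbf{x}_T$ be the encryption trajectory of Alg.~\ref{alg:enc}, including the initialization of the two-step recurrence: the boundary term $\mathbf{x}_{-1}$ or a first one-step update, as fixed by O-BELM. Decryption starts from the stored pair $(\mathbf{x}_{T-1},\mathbf{x}_T)$, i.e.\ the ciphertext carries the last two states, as the two-step recurrence requires. The induction hypothesis at stage $i$ is that decryption holds $\mathbf{x}_{i}$ and $\mathbf{x}_{i+1}$ equal to the encryption values. Decryption then evaluates $\boldsymbol{\epsilon}_\theta$ at the same $\mathbf{x}_i$ and regenerates the same $\delta_i$, so by the single-step inversion it outputs the true $\mathbf{x}_{i-1}$. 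Descending to $i=0$ (and handling the boundary step by the same cancellation) yields $\mathbf{x}_0'=\mathbf{x}_0$ deterministically. Thus $\Pr(\|\mathbf{x}_0'-\mathbf{x}_0\|\ge\varepsilon)=0$ for every $\varepsilon>0$, which is $0$-correctness in the sense of Definition~\ref{correct_def}. The argument uses the coefficients only through the inverse relations, so it holds for any variance $\sigma_t$ and any coefficient schedule satisfying Eq.~\eqref{coef_con}.

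\textbf{Main obstacle.} The delicate part is the bookkeeping rather than the algebra. We must pin down the boundary and initial step of the two-step recurrence (the meaning of $\mathbf{x}_{-1}$, and which two states constitute the ciphertext) so that decryption has exactly the right starting pair. We must also confirm that the coefficient identities needed, $a_ib_i'+b_i=0$ and $a_ic_i'+c_i=0$, follow from the stated formulas; if Eq.~\eqref{coef_con} is stated in a different normalization, we rely on the direct computation above.
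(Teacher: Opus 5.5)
Your proposal is correct and follows the same route as the paper, which substitutes Eq.~\eqref{eq:key_inv} into Eq.~\eqref{eq:key_samp} and observes that the key-noise terms cancel. Your step-by-step induction, the observation that decryption must start from the pair $(\mathbf{x}_{T-1},\mathbf{x}_T)$, and your direct check of $a_ib_i'+b_i=0$ from the explicit coefficients make explicit what the paper's one-line proof leaves implicit; that direct check is in fact needed, since the printed relation $c_i+a_i'b_i=0$ in Eq.~\eqref{coef_con} should read $b_i'+a_i'b_i=0$.
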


\proof We substitute Eq.~\eqref{eq:key_inv} into Eq.~\eqref{eq:key_samp} under Eq.~\eqref{coef_con}. Then, we can notice that all key-noise terms $\delta$ are cancelled. Hence, $\mathbf{x}_0' = \mathbf{x}_0$ holds, reaching the ideal correctness.

As Prop.~\ref{idel_corr} establishes perfect reconstruction ideal conditions, we can infer that the reconstruction error under the correct key originates only from practical non-idealities such as rounding errors and floating-point calculation errors. This makes $\varepsilon$-correctness an applicable measure in real-world scenarios.

\section{Security Analysis}
\label{sec_proof}

\subsection{Dynamics of Error Propagation}

\textbf{Single-Step KL Divergence Increase}: We analyze how the key noise $\delta_i$ injected at each inversion step propagates through the chain. Before deriving the incremental modular error, we state a standard assumption used in the error propagation literature~\cite{li2023error}.

\begin{assumption}[Distributional Consistency of Noiseless Inversion]
\label{assum:gaussian}
    Applying the inversion process to a plaintext latent $\mathbf{x}_0$ without added key noise yields a final latent $\mathbf{x}_T$ that approximately follows a standard normal distribution: $\mathbf{x}_T \sim \mathcal{N}(\mathbf{0},\mathbf{I})$. 
    
    Moreover, the model's noise prediction $\boldsymbol{\epsilon}_\theta(\mathbf{x}_t,t)$ is trained to fit independent Gaussian noise. Therefore, its output distribution is also approximately $\mathcal{N}(\mathbf{0},\mathbf{I})$ for any input $\mathbf{x}_t$ that is close to the training distribution.
\end{assumption}

This assumption is consistent with the training objective of diffusion models and is widely adopted in theoretical analyses (e.g.,~\cite{li2023error, mokady2023null}). Consequently, the conditional distributions $q(\mathbf{x}_{i+1}\mid\mathbf{x}_i)$ (true posterior) and $p_\theta(\mathbf{x}_{i+1}\mid\mathbf{x}_i)$ (model prediction) can be well approximated by isotropic Gaussians with a known variance $\tilde{\beta}_i$, which is determined by the noise schedule and can be explicitly demonstrated by $\alpha_i$.

Now consider the O‑BELM inversion (encryption) step (Eq.~\eqref{eq:key_inv}), where $\delta_i$ is the key‑dependent noise, independent of $\mathbf{x}_i$. 
Under Assumption~\ref{assum:gaussian}, we have:
\begin{align}
    q(\mathbf{x}_{i+1}\mid\mathbf{x}_i) &= \mathcal{N}\bigl(\boldsymbol{\mu}_i^{\text{true}},\; \tilde{\beta}_i \mathbf{I}_d\bigr),\\
    p_\theta(\mathbf{x}_{i+1}\mid\mathbf{x}_i) &= \mathcal{N}\bigl(\boldsymbol{\mu}_i,\; \tilde{\beta}_i \mathbf{I}_d\bigr),
\end{align}
where $d$ is the dimension of $\mathbf{x}_i$, and
\begin{equation}
    \boldsymbol{\mu}_i = \mathbb{E}[a_i' \mathbf{x}_{i-1} + b_i' \mathbf{x}_i + c_i' \boldsymbol{\epsilon}_\theta(\mathbf{x}_i,i)].
\end{equation}
Injecting $\delta_i$ shifts the mean to $\boldsymbol{\mu}_i' = \boldsymbol{\mu}_i + c_i'\delta_i$.

For two Gaussians with identical isotropic covariance, the KL divergence is
\begin{equation}
    D_{\mathrm{KL}}\bigl(\mathcal{N}(\boldsymbol{\mu}_1,\sigma^2\mathbf{I}) \,\|\, \mathcal{N}(\boldsymbol{\mu}_2,\sigma^2\mathbf{I})\bigr) = \frac{\|\boldsymbol{\mu}_1-\boldsymbol{\mu}_2\|^2}{2\sigma^2}.
\end{equation}
Thus, the increase in KL divergence caused by $\delta_i$ is
\begin{equation}
    \Delta D_{\mathrm{KL}}(\mathbf{x}_i,\delta_i) = \frac{c_i'^2}{2\tilde{\beta}_i}\|\delta_i\|^2 + \frac{c_i'}{\tilde{\beta}_i} (\boldsymbol{\mu}_i - \boldsymbol{\mu}_i^{\text{true}})^\top \delta_i.
\end{equation}

Taking expectation over independent $\mathbf{x}_i\sim p_\theta(\mathbf{x}_i)$ and $\delta_i\sim\mathcal{N}(0,\sigma_{\delta,i}^2\mathbf{I}_d)$ eliminates the cross term ($\mathbb{E}[\delta_i]=0$), yielding
\begin{equation}
\label{eq:dkl}
    {\Delta_i := \mathbb{E}[\Delta D_{\mathrm{KL}}] = \frac{c_i'^2}{2\tilde{\beta}_i} \cdot d \cdot \sigma_{\delta,i}^2}.
\end{equation}
This quantity is the expected increase in modular error contributed by the key noise $\delta_i$ in a single inversion step. It is exact under the Gaussian assumption and depends only on the O‑BELM coefficient $c_i'$, the true posterior variance $\tilde{\beta}_i$, the latent dimension $d$, and the noise variance $\sigma_{\delta, i}^2$.

\textbf{Error Propagation Across Steps}: Recall from Def.~\ref{def:mod_cum_err} that the modular error of step \(i\) is the expected KL divergence between the model's conditional distribution and the true posterior. 

The previous part derived this expectation under the key noise \(\delta_i\); that value is exactly \(\Delta_i\) as defined in Eq.~\eqref{eq:dkl}. Hence, after injecting key noise, the modular error becomes \(\Delta_i\).

Now we invoke the error propagation inequality of Li and van der Schaar (Thm.~\ref{thm:error_prop}), which holds for the true nonlinear diffusion model:
\[
\mathcal{E}_t^{\mathrm{cum}} \;\ge\; \mathcal{E}_{t+1}^{\mathrm{cum}} \;+\; \mathcal{E}_t^{\mathrm{mod}}, \qquad \forall t\in[1,T],
\]
where \(\mathcal{E}_t^{\mathrm{cum}}\) is the cumulative error at step \(t\) and \(\mathcal{E}_t^{\mathrm{mod}}\) is the modular error. 
When key noise is injected, the modular error \(\mathcal{E}_i^{\mathrm{mod}}\) is replaced by \(\Delta_i\).  
Iterating the inequality from \(t=T-1\) down to \(0\) and assuming \(\mathcal{E}_T^{\mathrm{cum}}=0\) (because the backward chain starts from pure Gaussian noise), we obtain the following lower bound on the final cumulative error.

\begin{proposition}[Cumulative error lower bound under key noise]
\label{prop:cumulative_bound}
Under the conditions of Thm.~\ref{thm:error_prop}, injecting key noise that increases the modular error at step \(i\) to \(\Delta_i\) yields
\[
\mathcal{E}_0^{\mathrm{cum}} \;=\; \sum_{i=0}^{T-1} \Bigl( \prod_{s=0}^{i} \mu_s \Bigr) \Delta_i,
\]
where \(\mu_s \ge 1\) are the amplification factors.
\end{proposition}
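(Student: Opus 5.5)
The plan is to read Eq.~\eqref{prog_w_mu} of Thm.~\ref{thm:error_prop} as an exact one-step recursion and unroll it down the chain. With the key noise injected, the modular error at step $t$ is replaced by $\Delta_t$ from Eq.~\eqref{eq:dkl}. The recursion becomes
\[
\mathcal{E}_t^{\mathrm{cum}} = \Delta_t + \mu_t\,\mathcal{E}_{t+1}^{\mathrm{cum}}, \qquad \mu_t \ge 1,
\]
for every $t$. The terminal condition is $\mathcal{E}_T^{\mathrm{cum}}=0$, which holds because $p_\theta(\mathbf{x}_T)=q(\mathbf{x}_T)$ is the standard Gaussian, and Assumption~\ref{assum:gaussian} guarantees this for the noiseless inversion endpoint.

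First I would prove by backward induction on $t$, from $t=T$ down to $t=0$, the closed form
\[
\mathcal{E}_t^{\mathrm{cum}} = \sum_{i=t}^{T-1}\Bigl(\prod_{s=t}^{i-1}\mu_s\Bigr)\Delta_i .
\]
The base case $t=T$ is the empty sum, which equals $0$. For the inductive step, substitute the formula for $\mathcal{E}_{t+1}^{\mathrm{cum}}$ into the recursion. Multiplying each summand by $\mu_t$ extends its product to start at $s=t$, and the new term $\Delta_t$ carries the empty product $1$. Setting $t=0$ then gives $\mathcal{E}_0^{\mathrm{cum}}$ as a sum of the $\Delta_i$ weighted by products of amplification factors. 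Because $\mu_s\ge1$, the same computation carried out with the inequality \eqref{prog_wo_mu} in place of the equality yields the lower bound $\mathcal{E}_0^{\mathrm{cum}}\ge\sum_i\Delta_i$. This is the quantitative content later needed for security.

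The main obstacle is an index bookkeeping issue: the straightforward unrolling produces $\prod_{s=0}^{i-1}\mu_s$, while the statement writes $\prod_{s=0}^{i}\mu_s$. I would resolve this by fixing the convention for where the noise injected at step $i$ enters the cumulative error. In the inversion step Eq.~\eqref{eq:key_inv}, $\delta_i$ produces $\mathbf{x}_{i+1}$. Its contribution therefore first appears in the marginal at level $i+1$, and it must pass through module $i$ before it reaches $\mathbf{x}_{i}$, where it picks up the extra factor $\mu_i$. Under this convention the injected modular error is attributed to the level-$(i+1)$ cumulative term, the unrolling shifts by one, and every product gains its final factor $\mu_i$, which gives exactly the stated formula. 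I would make this attribution explicit and check it against the terminal condition.

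If the shifted attribution cannot be justified cleanly, I would fall back on stating the result as an inequality. Since every $\mu_s\ge1$, the two products differ only by the factor $\mu_i\ge1$, so both versions give the same exponential-type lower bound. That lower bound is all that the subsequent IND-CPA argument uses.
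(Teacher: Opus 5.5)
Your unrolling is the paper's own argument. The paper also rewrites the propagation relation as $\mathcal{E}_t^{\mathrm{cum}}=\mu_t\mathcal{E}_{t+1}^{\mathrm{cum}}+\Delta_t$ and iterates from $t=T-1$ down to $0$ with $\mathcal{E}_T^{\mathrm{cum}}=0$. Your bookkeeping is right and the paper's is not. That recursion gives $\mathcal{E}_0^{\mathrm{cum}}=\Delta_0+\mu_0\Delta_1+\mu_0\mu_1\Delta_2+\cdots=\sum_{i=0}^{T-1}\bigl(\prod_{s=0}^{i-1}\mu_s\bigr)\Delta_i$, whereas the paper's proof simply asserts $\prod_{s=0}^{i}\mu_s$ at that step. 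Under the hypotheses the statement and its proof actually use, the displayed equality holds only if $\mu_i=1$ for every $i$ with $\Delta_i>0$. So no argument from those hypotheses can produce it in general.

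The gap in your proposal is the attempted repair. If $\Delta_i$ is booked at level $i+1$, then $\delta_{T-1}$, which produces the ciphertext $\mathbf{x}_T$ itself, is booked at level $T$. Keeping $\mathcal{E}_T^{\mathrm{cum}}=0$ then drops the $i=T-1$ term, and the unrolling gives $\sum_{i=0}^{T-2}\bigl(\prod_{s=0}^{i}\mu_s\bigr)\Delta_i$. Retaining that term forces $\mathcal{E}_T^{\mathrm{cum}}=\Delta_{T-1}$, and only then do you get the full stated sum. So the terminal-condition check you deferred fails: the shifted booking changes the hypotheses rather than proving the stated result. (Assumption~\ref{assum:gaussian} concerns the noiseless endpoint, so it would not give $\mathcal{E}_T^{\mathrm{cum}}=0$ for $\mathbf{x}_{T,\delta}$ anyway.)

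The booking is also not forced by the scheme. The factors $\mu_t$ of Thm.~\ref{thm:error_prop} belong to the sampling modules $p_\theta(\mathbf{x}_{t-1}\mid\mathbf{x}_t)$. In the sampling step Eq.~\eqref{eq:key_samp}, $\delta_i$ enters $\mathbf{x}_{i-1}$, not $\mathbf{x}_{i+1}$. Choosing level $i+1$ because it reproduces the target formula is reverse-engineering, not a derivation.

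For the fallback, watch the direction of the inequality. Since $\Delta_i\ge 0$ and $\prod_{s=0}^{i}\mu_s\ge\prod_{s=0}^{i-1}\mu_s$, the stated sum is an \emph{upper} bound on the correctly unrolled $\mathcal{E}_0^{\mathrm{cum}}$. You therefore cannot weaken the claim to $\mathcal{E}_0^{\mathrm{cum}}\ge\sum_i\bigl(\prod_{s=0}^{i}\mu_s\bigr)\Delta_i$. What you can honestly prove is:
\begin{itemize}
\item the identity with $\prod_{s=0}^{i-1}\mu_s$;
\item the bound $\mathcal{E}_0^{\mathrm{cum}}\ge\sum_i\Delta_i$, which is also the paper's concluding line.
\end{itemize}
That last bound is only linear in the $\Delta_i$. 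An exponential-type bound needs $\mu_s>1$ as in Assumption~\ref{asm:nonlinear_amplification}, not merely $\mu_s\ge 1$.
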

\begin{proof}
We start from the inequality \(\mathcal{E}_t^{\mathrm{cum}} \ge \mathcal{E}_{t+1}^{\mathrm{cum}} + \mathcal{E}_t^{\mathrm{mod}}\) and rewrite it using the amplification factor \(\mu_t\) defined by
\[
\mathcal{E}_t^{\mathrm{cum}} - \mathcal{E}_t^{\mathrm{mod}} = \mu_t \mathcal{E}_{t+1}^{\mathrm{cum}}.
\]
Thm.~\ref{thm:error_prop} guarantees \(\mu_t \ge 1\) for all \(t\). Substituting the modular error \(\mathcal{E}_t^{\mathrm{mod}}\) with the key‑noise induced value \(\Delta_t\) (and noting that the original modular error is zero in the ideal model, or that \(\Delta_t\) already accounts for the increment), we obtain
\[
\mathcal{E}_t^{\mathrm{cum}} = \mu_t \mathcal{E}_{t+1}^{\mathrm{cum}} + \Delta_t.
\]
Iterating this relation from \(t=T-1\) down to \(0\) with \(\mathcal{E}_T^{\mathrm{cum}}=0\) gives
\[
\mathcal{E}_0^{\mathrm{cum}} = \sum_{i=0}^{T-1} \Bigl( \prod_{s=0}^{i} \mu_s \Bigr) \Delta_i.
\]
Since each \(\mu_s \ge 1\), the right‑hand side is at least the sum of the \(\Delta_i\). This establishes the bound.
\end{proof}

This proposition shows that even small per‑step perturbations \(\Delta_i\) lead to an exponentially huge final cumulative error when the key is incorrect, as long as the amplification factors satisfy \(\mu_t > 1\) for most steps. Consequently, under such conditions, the ciphertext distributions corresponding to different plaintexts become statistically indistinguishable.

\subsection{Linearized Differential Analysis}
Prop.~\ref{prop:cumulative_bound} shows that the cumulative error grows at least as \(\prod_{s=0}^{t}\mu_s\) when key noise is injected. However, to obtain a quantitative bound on the adversary's IND‑CPA advantage, we need a more explicit expression for the covariance of the ciphertext distribution. For this purpose, we consider a differential analysis for the linearized approximation of the error dynamics.

Given that each key noise $\delta_t$ is i.i.d. from $\mathcal{N}(0,\sigma_t^2\mathbf{I}_d)$, we can analyze the general attack scenario by considering a null noise estimate ($\delta = 0$). Leveraging the reversibility of O-BELM, the error propagation is therefore traced through the encryption process.

Considering the unconditional inversion equation Eq.~\eqref{belm_inv} and the key-conditioned version Eq.~\eqref{eq:key_inv}, let $\mathbf{e}_i$ denote the latent error at step $i$, i.e., $\mathbf{e}_i = \mathbf{x}_{i,\delta}-\mathbf{x}_i$. Subtracting these two equations gives the dynamical equation for error propagation:
\begin{equation}
    \label{error_prop}
    \mathbf{e}_{i+1}=a_i'\mathbf{e}_{i-1}+b_i'\mathbf{e}_i+c_i'[\boldsymbol{\varepsilon}_{\theta}(\mathbf{x}_{i,\delta},i) - \boldsymbol{\varepsilon}_{\theta}(\mathbf{x}_{i},i)] +c_i'\delta
\end{equation}
We assume $\varepsilon_\theta(\mathbf{x},t)$ is continuously differentiable in $\mathbf{x}$. The property is achievable via differentiable activations. This allows us to simplify Eq.~\eqref{error_prop} through a first-order approximation:
\begin{equation}
    \label{lin_error_prop}
    \mathbf{e}_{i+1}=a_i'\mathbf{e}_{i-1}+(b_i'\mathbf{I} + c_i'\mathbf{E}_i)\mathbf{e}_i +c_i'\delta
\end{equation}
where $\mathbf{E}_i$ is the Jacobian matrix of $\boldsymbol{\varepsilon}_{\theta}$, given by:
\begin{equation}
    \label{1order_approx}
    \boldsymbol{\varepsilon}_{\theta}(\mathbf{x}_{i,\delta},i) - \boldsymbol{\varepsilon}_{\theta}(\mathbf{x}_{i},i) \approx \mathbf{E}_i(\mathbf{x}_{i,\delta} - \mathbf{x}_{i}) = \mathbf{E}_i\mathbf{e}_i.
\end{equation}

This forms a second-order linear system governed by Eq.~\eqref{lin_error_prop}. To analyze its signal amplification, we rewrite it in vector form as shown below.
\begin{equation}
    \label{vec_error_prop}
    \begin{bmatrix}
        \mathbf{e}_{i+1}\\
        \mathbf{e}_{i}
    \end{bmatrix}=\begin{bmatrix}
        b_i'\mathbf{I} + c_i'\mathbf{E}_i & a_i'\mathbf{I}\\
       \mathbf{I} & \mathbf{0}
    \end{bmatrix}\begin{bmatrix}
        \mathbf{e}_{i}\\
        \mathbf{e}_{i-1}
    \end{bmatrix} +c_i'\begin{bmatrix}\delta\\\mathbf{0}\end{bmatrix}
\end{equation}
Eq.~\eqref{vec_error_prop} can be decomposed into its homogeneous part and driven part. The homogeneous part is governed by the state-transition matrix $\mathbf{M}_i$.

\textbf{Connection to nonlinear amplification.} Noting the relationship between the linearized approximation and the true nonlinear model, we have the following proposition.

\begin{proposition}[]
\label{prop:spectral_lower}
Let $\mathbf{M}_i$ be the state transition matrix of the linearized error dynamics (Eq.~\eqref{vec_error_prop}) and let $\mu_i$ be the amplification factor of the true nonlinear model (Theorem~\ref{thm:error_prop}). Under the same conditions as Theorem~\ref{thm:error_prop} (Li \& van der Schaar), the spectral radius satisfies $\rho(\mathbf{M}_i) \ge 1$ for every step $i$.
\end{proposition}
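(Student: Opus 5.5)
We plan to argue by contradiction, linking the homogeneous part of Eq.~\eqref{vec_error_prop} to the amplification factor $\mu_i$ of Theorem~\ref{thm:error_prop} in the small-perturbation regime where the linearization Eq.~\eqref{1order_approx} is valid. We write $\mathbf{M}_i=\begin{bmatrix} b_i'\mathbf{I}+c_i'\mathbf{E}_i & a_i'\mathbf{I}\\ \mathbf{I} & \mathbf{0}\end{bmatrix}$ and $\mathbf{z}_i=[\mathbf{e}_i^\top,\mathbf{e}_{i-1}^\top]^\top$, so the homogeneous dynamics are $\mathbf{z}_{i+1}=\mathbf{M}_i\mathbf{z}_i$.

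\textbf{Step 1 (translate KL to the linear state).} Under Assumption~\ref{assum:gaussian}, the conditionals are isotropic Gaussians with variance $\tilde{\beta}_i$. For a deterministic mean shift $\mathbf{e}$, the same computation that produced Eq.~\eqref{eq:dkl} gives a KL contribution of $\|\mathbf{e}\|^2/(2\tilde{\beta}_i)$. The cumulative error carried from one step to the next is therefore a quadratic form $\mathbf{z}^\top \mathbf{Q}_i\mathbf{z}$ in the state, with $\mathbf{Q}_i\succ 0$ built from the $\tilde{\beta}$'s. Restricting the recursion $\mathcal{E}_t^{\mathrm{cum}}-\mathcal{E}_t^{\mathrm{mod}}=\mu_t\mathcal{E}_{t+1}^{\mathrm{cum}}$ to its propagated (homogeneous) part, $\mu_i$ becomes the ratio $\mathbf{z}_{i+1}^\top\mathbf{Q}_{i+1}\mathbf{z}_{i+1}\,/\,\mathbf{z}_i^\top\mathbf{Q}_i\mathbf{z}_i$. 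Here the time direction is matched via the exact reversibility Eq.~\eqref{coef_con}, so that a forward-chain amplification corresponds to the inversion-chain transition $\mathbf{M}_i$.

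\textbf{Step 2 (contradiction).} Suppose $\rho(\mathbf{M}_i)<1$. Then all generic directions are contracted. Using the spectral decomposition, or a Jordan basis with an $\epsilon$-adjusted norm, we obtain a norm equivalent to $\mathbf{Q}_i$ in which $\mathbf{M}_i$ is a strict contraction on generic error inputs. This makes the ratio in Step 1 strictly less than $1$ for a set of perturbation directions of positive measure. Since the key noise $\delta$ is isotropic Gaussian and thus excites every direction, taking expectations yields $\mu_i<1$, contradicting $\mu_i\ge 1$ from Theorem~\ref{thm:error_prop}. Hence $\rho(\mathbf{M}_i)\ge 1$.

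\textbf{Step 3 (sanity check from structure).} As an independent partial check, the block-companion form gives $|\det\mathbf{M}_i|=|a_i'|^{d}$. Since $\prod_j|\lambda_j|=|a_i'|^d$, whenever $|a_i'|\ge 1$ some eigenvalue satisfies $|\lambda_j|\ge 1$, so $\rho(\mathbf{M}_i)\ge1$ directly. By Eq.~\eqref{param_val2}, $a_i'=\tfrac{h_{i+1}^2}{h_i^2}\tfrac{\gamma_{i+1}}{\gamma_{i-1}}$. We would verify this inequality for standard schedules to corroborate the claim, though it is not needed for the main argument.

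\textbf{Main obstacle.} The hard step is Step 2's passage from ``some directions contract'' to the averaged quantity $\mu_i<1$. The norm in which $\mathbf{M}_i$ contracts need not coincide with $\mathbf{Q}_i$, and $\rho<1$ only guarantees eventual, not one-step, contraction. We would first try to handle this by applying the argument over a block of $k$ consecutive steps. In that setting $\|\mathbf{M}^k\|\to 0$ forces the product $\prod\mu_s<1$ over the block, which already contradicts $\mu_s\ge 1$ for each step. If the matrices vary with $i$, we would instead assume slowly varying coefficients, a frozen-coefficient approximation consistent with the linearization.
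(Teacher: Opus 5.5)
Your Steps 1--2 are the paper's own argument: assume $\rho(\mathbf{M}_i)<1$, read off contraction of small errors from the linearization, transfer it to the nonlinear chain, and contradict $\mu_i\ge 1$. Making it explicit exposes a step that genuinely fails, namely the direction matching in Step 1. The factor $\mu_t$ of Theorem~\ref{thm:error_prop} measures amplification along the \emph{sampling} chain (it carries $\mathcal{E}^{\mathrm{cum}}_{t+1}$ into $\mathcal{E}^{\mathrm{cum}}_{t}$), whereas $\mathbf{M}_i$ is the linearized \emph{inversion} step. Exact reversibility does not identify the two; it makes them inverse. Using $a_ia_i'=1$, $a_ib_i'+b_i=0$ and $a_ic_i'+c_i=0$ (all of which follow from the explicit coefficients), the linearized sampling step $[\mathbf{e}_{i+1};\mathbf{e}_i]\mapsto[\mathbf{e}_i;\mathbf{e}_{i-1}]$ is exactly $\mathbf{M}_i^{-1}$. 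So non-contraction along the sampling chain, transported by reversibility, bounds $\min_j|\lambda_j(\mathbf{M}_i)|$ from above, not $\rho(\mathbf{M}_i)$ from below. Indeed, a sampling step expanding every eigendirection would give $\rho(\mathbf{M}_i)<1$.

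The quadratic-form reading of $\mu_i$ is also unsupported. $\mathcal{E}^{\mathrm{cum}}_t$ is a KL between the model and true marginals, not a function of a per-sample deviation. The quantity $\|\mathbf{e}\|^2/(2\tilde\beta_i)$ is a KL between conditionals with covariance $\tilde\beta_i\mathbf{I}$. And a single-time marginal cannot see $\mathbf{e}_{i-1}$, so $\mathbf{Q}_i$ would be singular and the norm equivalence used in Step 2 is unavailable. Finally, as you note yourself, $\rho(\mathbf{M}_i)<1$ gives no one-step contraction. A ratio below $1$ on a positive-measure set does not force its mean below $1$. The block argument needs $\rho(\mathbf{M}_j)<1$ on $k$ consecutive steps, which a hypothesis at a single $i$ does not provide. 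The paper's proof passes over these same points by assertion, so neither version actually derives the proposition from Theorem~\ref{thm:error_prop}.

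What you call a sanity check is the only rigorous argument here, and it needs neither Theorem~\ref{thm:error_prop} nor any Gaussian assumption. Since $\det\mathbf{M}_i=(-a_i')^d$ for every Jacobian $\mathbf{E}_i$, the condition $|a_i'|\ge 1$ forces $\rho(\mathbf{M}_i)\ge 1$. It also covers the paper's experimental setting. Take the O-BELM parametrization $\gamma_i=\sqrt{\bar\alpha_i}$, $\bar\sigma_i=\sqrt{1-\gamma_i^2}/\gamma_i$, and the cosine schedule, ignoring the offset $s$ and the clipping. Then $\gamma_i=\cos\theta_i$ and $\bar\sigma_i=\tan\theta_i$ with equally spaced $\theta_i$. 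Hence $h_i=\sin(\Delta\theta)/(\cos\theta_i\cos\theta_{i-1})$, so $h_{i+1}/h_i=\cos\theta_{i-1}/\cos\theta_{i+1}$ and $a_i'=\cos\theta_{i-1}/\cos\theta_{i+1}>1$ at every step. You should promote Step 3 to the proof and state the proposition under the explicit hypothesis $|a_i'|\ge 1$.

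That hypothesis cannot simply be dropped. Suppose $\mathbf{E}_i=\mathbf{0}$ and $h_{i+1}<h_i$. Then $p(\lambda)=\lambda^2-b_i'\lambda-a_i'$ satisfies $p(0)<0<p(\pm 1)$ (using $\gamma_{i+1}<\gamma_i<\gamma_{i-1}$), so both eigenvalues lie in $(-1,1)$ and $\rho(\mathbf{M}_i)<1$. At such steps the conclusion depends on $\mathbf{E}_i$, so any proof must use quantitative information about it. The distributional hypotheses of Theorem~\ref{thm:error_prop} do not supply that information.
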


\begin{proof}
Theorem~\ref{thm:error_prop} guarantees $\mu_i \ge 1$ for every step. This means that any error injected at step $i$ cannot be reduced by the subsequent dynamics; it is at least preserved when propagated to the next step.

Now suppose, for contradiction, that $\rho(\mathbf{M}_i) < 1$. Then, for sufficiently small initial errors, the linearized system would predict exponential contraction over successive steps. Since the true nonlinear dynamics is a first‑order perturbation of the linearized system (the difference consists of higher‑order terms in the error), the actual evolution would also contract. This directly contradicts the fact that $\mu_i \ge 1$ (which holds for arbitrarily small errors). Hence our assumption $\rho(\mathbf{M}_i) < 1$ is impossible, and we must have $\rho(\mathbf{M}_i) \ge 1$.
\end{proof}

Prop.~\ref{prop:spectral_lower} shows that the linearized approximation is consistent with the true nonlinear dynamics, ensuring that the tractable error bounds obtained from the linear model are meaningful and provide a valid basis for security analysis.

\textbf{Explicit form of linearized errors.} Now we solve the linear recurrence. The total error $\mathbf{e}_T$ after $T$ iterations can be expressed as a closed‑form sum of the noises injected at each step, each amplified by the propagation gain $\mathbf{\Psi}_t$:
\begin{align}
\label{total_error}
    \mathbf{e}_{T} & =\sum_{t=0}^{T-1}\mathbf{\Psi}_t\delta_t, \\ 
\label{explicit_psi}
    \mathbf{\Psi}_t &= \mathbf{P}\,\mathbf{\Phi}(T,t)\,\mathbf{N}_t
\end{align}
where $\mathbf{\Phi}(T,t) = \prod_{i=t}^{T-1}\mathbf{M}_i$ is the cumulative state transition from step $t$ to $T$, $\mathbf{P}=[\mathbf{0}\quad \mathbf{I}]$ extracts the error component from the full state vector, and $\mathbf{N}_t=[\mathbf{0}\quad c_t'\mathbf{I}]^T$ maps the injected noise $\delta_t$ into the state space. Thus, $\mathbf{\Psi}_t$ quantifies how the noise injected at step $t$ contributes to the final error after forward propagation.

Substituting Eq.~\eqref{explicit_psi} into Eq.~\eqref{total_error}, we directly obtain the following bound on the overall error magnitude:

\begin{align}
    \|\mathbf{e}_T\| & = \left\| \sum_{t=0}^{T-1} \mathbf{P}\,\mathbf{\Phi}(T,t)\,\mathbf{N}_t \delta_t \right\| \nonumber \\
                          & \leq \kappa \cdot \max_t \|\delta_t\| \cdot \sum_{t=0}^{T-1} \|\mathbf{\Phi}(T,t)\|,
    \label{error_bound_phi}
\end{align}
where $\kappa = \|\mathbf{P}\| \max_t \|\mathbf{N}_t\|$ is a constant determined by coefficient schedule.
This leads to a useful characterization of error growth:
\begin{corollary}[Error Propagation Bound] \label{cor:error_prop_bound}
    In the linearized dynamics, the norm of the total latent error $\|\mathbf{e}_T\|$ satisfies:
    \begin{equation}
        \|\mathbf{e}_T\| \leq C \cdot \left( \sum_{t=0}^{T-1} \|\mathbf{\Phi}(T,t)\| \right)
    \end{equation}
    where $C = \kappa \cdot \max_t \|\delta_t\|$ is a constant factor. The product of the maximum injected noise magnitude and the sum of the propagation-gain norms $\|\mathbf{\Psi}_t\|$. Consequently, the error amplification over the entire path is governed by the sum of $\|\mathbf{\Phi}(T,t)\|$.
\end{corollary}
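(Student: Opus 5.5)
The argument starts from the closed-form expression of the linearized error in Eqs.~\eqref{total_error}--\eqref{explicit_psi} and then applies the triangle inequality and the submultiplicativity of the operator norm. We assume the initial error vanishes, i.e.\ $\mathbf{e}_0=\mathbf{e}_{-1}=\mathbf{0}$, since encryption starts from the same plaintext latent $\mathbf{x}_0$ in both the keyed and the unkeyed chains.

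\textbf{Step 1: closed form.} Unroll the affine recurrence \eqref{vec_error_prop} by induction on $i$. Under this initial condition, the stacked state at step $T$ equals $\sum_{t=0}^{T-1}\mathbf{\Phi}(T,t)\mathbf{N}_t\delta_t$, where $\mathbf{N}_t$ is the injection map for step $t$. Applying the extraction matrix $\mathbf{P}$ then gives \eqref{total_error}. One point needs care here: the product defining $\mathbf{\Phi}(T,t)$ must be ordered as $\mathbf{M}_{T-1}\cdots\mathbf{M}_{t+1}$, that is, starting after the injection step. Also, $\mathbf{P}$ and $\mathbf{N}_t$ must pick out consistent block positions. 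Neither issue affects the norm bound, because the extra factor $\mathbf{M}_t$ or the block permutation can be absorbed into the constants. I would simply fix one convention and check that both \eqref{total_error} and \eqref{explicit_psi} hold with it.

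\textbf{Step 2: norm bound.} Apply the triangle inequality to \eqref{total_error}:
\[
\|\mathbf{e}_T\|\le\sum_{t=0}^{T-1}\|\mathbf{P}\|\,\|\mathbf{\Phi}(T,t)\|\,\|\mathbf{N}_t\|\,\|\delta_t\|.
\]
Then bound $\|\mathbf{N}_t\|\le\max_s\|\mathbf{N}_s\|=\max_s|c_s'|$ and $\|\delta_t\|\le\max_s\|\delta_s\|$. This yields \eqref{error_bound_phi} with $\kappa=\|\mathbf{P}\|\max_t\|\mathbf{N}_t\|$, and setting $C=\kappa\max_t\|\delta_t\|$ gives the corollary. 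The accompanying remark, that amplification along the path is governed by $\sum_t\|\mathbf{\Phi}(T,t)\|$, follows immediately: $C$ depends only on the coefficient schedule and on the injected noise, not on the propagation operators.

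\textbf{Main obstacle.} The substantive issue is interpreting $C$ as a ``constant''. The quantity $\max_t\|\delta_t\|$ is random. It is finite for any realized key stream, so the bound holds pathwise. If a deterministic constant is wanted, I would instead use Gaussian concentration: with probability at least $1-Te^{-d/2}$, every $\|\delta_t\|\le 2\sigma_t\sqrt{d}$. On that event one can replace $C$ by $2\kappa\sqrt{d}\max_t\sigma_t$. The bound is also only valid within the linearization \eqref{lin_error_prop}, and I would state it explicitly as a bound on the linearized dynamics rather than on the true nonlinear error.
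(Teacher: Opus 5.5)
Your proposal is correct and matches the paper's argument: the paper also takes the closed form \eqref{total_error}--\eqref{explicit_psi} as given, then obtains \eqref{error_bound_phi} from the triangle inequality and submultiplicativity of the operator norm, with $\kappa=\|\mathbf{P}\|\max_t\|\mathbf{N}_t\|$. Your Step~1 check of the product ordering and block conventions, and your concentration remark on $\max_t\|\delta_t\|$, add care the paper omits but do not change the route.
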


\subsection{Indistinguishability of Ciphertext Distributions}
In the standard IND-CPA game, an adversary $\mathcal{A}$ with oracle access to $\mathsf{Enc}(k,\cdot)$ submits two plaintexts $(m_0, m_1)$ and receives a challenge ciphertext $c^* \leftarrow \mathsf{Enc}(k, m_b)$ for a randomly chosen $b \in \{0,1\}$. The adversary's advantage is defined as $\mathsf{Adv}^{\mathrm{IND-CPA}}_{\mathcal{A},\Pi} = \bigl| \Pr[b' = b] - \frac{1}{2} \bigr|$, where $b'$ is $\mathcal{A}$'s guess. The scheme is IND-CPA secure if $\mathsf{Adv}^{\mathrm{IND-CPA}}_{\mathcal{A},\Pi}$ is negligible for any probabilistic polynomial-time $\mathcal{A}$.

Since the challenge ciphertext $c^*$ is the protected latent $\mathbf{x}^{(b)}_{T,\delta}$, the adversary’s distinguishing task reduces to a statistical hypothesis test (Def.~\ref{Hyp_test}). To validate this claim, we analyze the character of the ciphertext distribution. For any plaintext $m$ corresponding to the original latent $\mathbf{x}_0$, we use Eq.~\eqref{total_error} and \eqref{explicit_psi} to relate the ciphertext $\mathbf{x}_{T,\delta}$ and the noiseless inversion result $\mathbf{x}_{T,0}$ as follows.
\begin{align}
    \mathbf{x}_{T,\delta} & = \mathbf{x}_{T,0} + \mathbf{e}_T \nonumber\\
    & = \mathbf{x}_{T,0} + \sum_{t=0}^{T-1} \mathbf{P}\mathbf{\Phi}(T,t)\mathbf{N}_t\delta_t
    \label{xt_from_x0}
\end{align}
Moreover, as we assume that each $\delta_t$ is i.i.d. sampled from $\mathcal{N}(0,\sigma_t^2\mathbf{I}_d)$, the distribution of the ciphertext can be closely approximated by:

\begin{equation}
    \label{ciph_dis}
    \mathbf{x}_{T,\delta} \sim \mathcal{N}(\mathbf{x}_{T,0},\mathbf{\Sigma}_T)
\end{equation}
where $\mathbf{\Sigma}_T$ satisfies:
\begin{equation}
    \label{sigma_sat}
    \mathbf{\Sigma}_T = \sum_{t=0}^{T-1}\sigma_t^2\mathbf{\Psi}_t^T\mathbf{\Psi}_t \leq (\max_t\sigma_t)^2\sum_{t=0}^{T-1}\mathbf{\Psi}_t^T\mathbf{\Psi}_t.
\end{equation}

Therefore, given the ciphertext distribution, we define the hypothesis testing problem as follows:

\begin{definition}[Hypothesis Test for IND-CPA Distinguishing]
\label{Hyp_test}
Let $\mathcal{P}_0 = \mathcal{N}(\mathbf{x}_{T,0}^{(0)}, \boldsymbol{\Sigma}_T)$ and $\mathcal{P}_1 = \mathcal{N}(\mathbf{x}_{T,0}^{(1)}, \boldsymbol{\Sigma}_T)$ denote the two possible ciphertext distributions corresponding to the encryption of messages $\mathbf{x}^{(0)}_0$ and $\mathbf{x}^{(1)}_0$, respectively. Given a single observed ciphertext $\mathbf{x}_{T,\delta}^{(b)}$, $b \in \{0,1\}$, the adversary's task is to decide between:
\begin{align}
H_0 &: \mathbf{x}_{T,\delta}^{(b)} \sim \mathcal{P}_0, \\
H_1 &: \mathbf{x}_{T,\delta}^{(b)} \sim \mathcal{P}_1.
\end{align}
This constitutes a hypothesis test for two multivariate Gaussian distributions with a common covariance matrix $\boldsymbol{\Sigma}_T$.
\end{definition}

According to Def.~\ref{Hyp_test}, we can find that the minimum error probability is:
\begin{align}
    \label{min_err}
    P_{err}^*&=\Phi(-\frac{d_{\text{CT}}}{2})\qquad\Phi(x) = \frac{1}{\sqrt{2\pi}}\int_{-\infty}^x e^{t^2/2}\text{d}t\\
    \label{maha_dist}
    d_{\text{CT}} &= \sqrt{(\mathbf{x}_{T,0}^{(0)}-\mathbf{x}_{T,0}^{(1)})^T\mathbf{\Sigma}_T^{-1}(\mathbf{x}_{T,0}^{(0)}-\mathbf{x}_{T,0}^{(1)})}
\end{align}
where $\Phi(\cdot)$ denotes the cumulative distribution function (CDF) of the standard normal distribution, and $d_{\text{CT}}$ is the Mahalanobis distance between the two distributions. Consequently, under this setting, the adversary’s advantage is upper-bounded by:
\begin{equation}
    \label{adv_cpa}
    \mathsf{Adv}^{\mathrm{IND-CPA}}_{\mathcal{A},\Pi} = \left|\frac{1}{2} - P_{\text{err}}^*\right|.
\end{equation}
The diffusion model's noiseless inversion approximates a high-dimensional normal sample, so the adversary's advantage is essentially bounded by the covariance $\boldsymbol{\Sigma}_T$. 

To quantify this bound, we introduce two related assumptions. The first captures the essential nonlinear error propagation property, while the second translates it into a form suitable for linearized analysis.

\begin{assumption}[Exponential Amplification in the Nonlinear Model]
\label{asm:nonlinear_amplification}
There exists a threshold step $x_0 < N$ such that for all steps $t > x_0$, the amplification factor satisfies $\mu_t > 1$. Moreover, the product $\prod_{s=x_0+1}^{t} \mu_s$ grows exponentially. 
\end{assumption}

This property is consistent with the empirical results of~\cite{li2023error} and ensures that errors injected at earlier steps are exponentially magnified. The following assumption is a linearized counterpart, which is justified by Proposition~\ref{prop:spectral_lower}.

\begin{assumption}[Exponential Growth in the Linearized Model]
\label{asm:linear_growth}
For a fixed maximum inversion step $N$, there exists a threshold $x_0 < N$ such that for any $x > x_0$ and $t > 0$,
\[
\|\boldsymbol{\Phi}(N, x+t)\| \sim \rho^{\,t} \|\boldsymbol{\Phi}(N, x)\|,
\]
with $\rho > 1$, while for $1 \le x \le x_0$, $\|\boldsymbol{\Phi}(N, x)\| \approx k \|\boldsymbol{\Phi}(N, 1)\|$ for some constant $k$. 
\end{assumption}

This exactly matches the error propagation pattern observed in~\cite{li2023error} when the key-induced noise is treated as part of the modular error. Based on these assumptions, we now state the following propositions. Their proof is given in the Appendix.

\begin{theorem}
\label{secure_samp}
Under Asm.~\ref{asm:linear_growth}, there exists a noise schedule such that distinguishing the ciphertext distributions corresponding to two distinct plaintexts is computationally hard. More precisely, the distinguishing advantage is exponentially small in the number of effective steps \(N - x_0\).
\end{theorem}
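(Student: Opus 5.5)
The plan is to bound the advantage in Eq.~\eqref{adv_cpa} through the Mahalanobis distance $d_{\text{CT}}$ of Eq.~\eqref{maha_dist}. Once $d_{\text{CT}}$ is small, the rest is easy. Since $\Phi$ has density at most $1/\sqrt{2\pi}$, we have
\[
\mathsf{Adv} = \tfrac12 - \Phi(-d_{\text{CT}}/2) \le \frac{d_{\text{CT}}}{2\sqrt{2\pi}}.
\]
So it suffices to show that $d_{\text{CT}}$ decays like $\rho^{-(N-x_0)}$ for a suitable schedule $\{\sigma_t\}$.

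\textbf{Step 1: reduce to the smallest eigenvalue of $\mathbf{\Sigma}_T$.} Write $\Delta\mathbf{x} = \mathbf{x}_{T,0}^{(0)} - \mathbf{x}_{T,0}^{(1)}$. Then
\[
d_{\text{CT}}^2 \le \frac{\|\Delta\mathbf{x}\|^2}{\lambda_{\min}(\mathbf{\Sigma}_T)}.
\]
By Assumption~\ref{assum:gaussian}, noiseless inversions are approximately standard Gaussian samples, so $\|\Delta\mathbf{x}\|^2 \le B$ for a bound $B = O(d)$ that holds with overwhelming probability. It therefore remains to lower bound $\lambda_{\min}(\mathbf{\Sigma}_T)$. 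Every term in Eq.~\eqref{sigma_sat} is positive semidefinite, so $\mathbf{\Sigma}_T \succeq \sigma_t^2\mathbf{\Psi}_t^T\mathbf{\Psi}_t$ for any single $t$. In particular we can keep only the terms with $t > x_0$.

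\textbf{Step 2: show the gains are large past the threshold.} We need $\lambda_{\min}(\mathbf{\Psi}_t^T\mathbf{\Psi}_t) \gtrsim c_t'^2 \|\mathbf{\Phi}(N,t)\|^2$ up to a constant. We then apply Assumption~\ref{asm:linear_growth}, reading the exponential growth so that injections far from the endpoint $N$ are amplified by $\rho^{N-x_0}$ relative to the reference gain $\|\mathbf{\Phi}(N,x_0)\|$. This step is the main obstacle. Assumption~\ref{asm:linear_growth} controls only operator norms, not the smallest singular values. To pass from norms to $\lambda_{\min}$, I would assume approximate isotropy of the propagation, i.e.\ that $\mathbf{\Phi}$ has bounded condition number. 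This is in the spirit of the isotropic Gaussian modelling already used for $q$ and $p_\theta$. Proposition~\ref{prop:spectral_lower} ($\rho(\mathbf{M}_i)\ge 1$) rules out contraction in any direction, which supports this. I would state isotropy explicitly as the interpretation of the assumption.

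\textbf{Step 3: choose the schedule and conclude.} Take $\sigma_t = \sigma$ constant for $t > x_0$; any fixed value compatible with fidelity works, since Proposition~\ref{idel_corr} guarantees correctness for every schedule. Summing the geometric series over $t \in (x_0, N]$ gives
\[
\lambda_{\min}(\mathbf{\Sigma}_T) \gtrsim \sigma^2 c^2 k^2 \|\mathbf{\Phi}(N,1)\|^2 \rho^{2(N-x_0)}.
\]
Hence
\[
d_{\text{CT}} \lesssim \frac{\sqrt{B}}{\sigma c k \|\mathbf{\Phi}(N,1)\|}\, \rho^{-(N-x_0)}.
\]
Substituting into the advantage bound from the first paragraph shows that the advantage is exponentially small in $N - x_0$. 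This makes it negligible once $\lambda \propto N - x_0$.

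Finally, the CPA oracle queries do not help. Each query uses fresh key noise, so oracle ciphertexts are independent of the challenge noise. They give the adversary no information about the challenge beyond the public model, which lets the single-sample hypothesis test of Def.~\ref{Hyp_test} capture the optimal adversary.
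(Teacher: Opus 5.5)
Your skeleton is the paper's own. You bound the advantage by $d_{\text{CT}}/(2\sqrt{2\pi})$, use $d_{\text{CT}}^2\le\|\Delta\mathbf{x}\|^2/\lambda_{\min}(\mathbf{\Sigma}_T)$ with an $O(d)$ bound on $\|\Delta\mathbf{x}\|^2$ (the paper's $36\dim(\mathbf{x}_0)$ is your $B$), and lower-bound $\lambda_{\min}(\mathbf{\Sigma}_T)$ through the geometric growth in Asm.~\ref{asm:linear_growth}.

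The gap is that your exponential factor is never anchored. Your final bound $d_{\text{CT}}\lesssim\sqrt{B}\,\rho^{-(N-x_0)}/(\sigma c k\|\mathbf{\Phi}(N,1)\|)$ is exponentially small only if $\|\mathbf{\Phi}(N,1)\|$ is bounded below independently of $N-x_0$. The reading you actually compute with implies the opposite. You keep the terms $t>x_0$ and sum gains that increase with $t$. So the largest term of your own series is the last-step gain $\|\mathbf{\Phi}(N,N-1)\|=\|\mathbf{M}_{N-1}\|$, a single-step $O(1)$ quantity; since your range includes $t=N$, it is even $\|\mathbf{\Phi}(N,N)\|=\|\mathbf{I}\|=1$. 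Matching that term with $\rho^{N-x_0}k\|\mathbf{\Phi}(N,1)\|$ forces $k\|\mathbf{\Phi}(N,1)\|=O(\rho^{-(N-x_0)})$, which cancels your $\rho^{2(N-x_0)}$ exactly. In that reading $\lambda_{\max}(\mathbf{\Sigma}_T)\le\sum_t\sigma_t^2\|\mathbf{\Psi}_t\|^2=O(\sigma^2)$, so with constant $\sigma$ nothing decays in $N-x_0$. The only escape is $\sigma\gtrsim\rho^{N-x_0}$, which makes the amplification assumption irrelevant and destroys fidelity.

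Your prose (``injections far from the endpoint $N$ are amplified'') describes the physically meaningful situation, where an early injection traverses all $N-x_0$ amplifying steps. But that is $\|\mathbf{\Phi}(N,x+t)\|\sim\rho^{-t}\|\mathbf{\Phi}(N,x)\|$, not Asm.~\ref{asm:linear_growth} as written. Under that reading your displayed formula double-counts, because $k\|\mathbf{\Phi}(N,1)\|\approx\|\mathbf{\Phi}(N,x_0)\|$ is itself of order $\rho^{N-x_0}$. You need to commit to one reading and state it as your hypothesis. Then measure the amplification against an $N$-independent reference such as $\|\mathbf{\Phi}(N,N)\|=1$. This gives $\lambda_{\min}(\mathbf{\Sigma}_T)\gtrsim\sigma^2c^2\rho^{2(N-x_0)}$, with no $\|\mathbf{\Phi}(N,1)\|$ left over.

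Separately, you are right that Asm.~\ref{asm:linear_growth} controls only norms, but your patch needs repair. Prop.~\ref{prop:spectral_lower} does not rule out contraction in any direction. $\rho(\mathbf{M}_i)\ge 1$ says only that one eigenvalue has modulus at least one, and it says nothing about products. A bounded condition number of $\mathbf{\Phi}$ is also the wrong hypothesis, for two reasons.

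It is implausible. $|\det\mathbf{M}_i|=|a_i'|^d$ regardless of $\mathbf{E}_i$, and $\prod_{i=t}^{N-1}a_i'=h_N^2\gamma_N\gamma_{N-1}/(h_t^2\gamma_t\gamma_{t-1})$ depends only on the schedule. Hence $\sigma_{\min}(\mathbf{\Phi}(N,t))\le(\prod_i|a_i'|)^{1/2}$. Exponential growth of $\|\mathbf{\Phi}\|$ therefore forces its condition number to blow up, unless that schedule ratio grows just as fast.

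It is also insufficient. $\mathbf{\Psi}_t=\mathbf{P}\mathbf{\Phi}(N,t)\mathbf{N}_t$ is a single $d\times d$ block, and a perfectly conditioned $\mathbf{\Phi}$ can have a zero block.

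State the extra hypothesis directly, as $\sigma_{\min}(\mathbf{\Psi}_t)\ge\eta|c_t'|\,\|\mathbf{\Phi}(N,t)\|$ for the steps you retain. Then say plainly that the theorem is proved under Asm.~\ref{asm:linear_growth} together with this condition.
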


\begin{theorem}
\label{ind_sec}
If Thm.~\ref{secure_samp} holds and the PRNG is cryptographically secure (pseudorandom), then there exists a sampling configuration for which the key-controlled inversion scheme is IND-CPA secure. And the adversary's advantage is bounded by \(\exp(-\Omega(\lambda))\), which is negligible for any PPT adversary.
\end{theorem}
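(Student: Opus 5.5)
The plan is a standard hybrid argument. It reduces the computational IND‑CPA game to the statistical distinguishing problem already handled by Thm.~\ref{secure_samp}, and pays only the PRNG's distinguishing advantage for the substitution. The hybrids are defined as follows.
\begin{itemize}
\item $\mathsf{H}_0$ is the real IND‑CPA game of Section~\ref{thr_mod}. The oracle $\mathcal{O}$ and the challenger generate every key noise $\delta_i=\mathsf{PRNG}(k,i,n)$ from the secret key $k$.
\item $\mathsf{H}_1$ is identical, except that every $\delta_i$, both in oracle answers and in the challenge, is replaced by truly independent samples from $\mathcal{N}(0,\sigma_i^2\mathbf{I}_d)$.
\end{itemize}

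First I bound the gap between the two hybrids. I build a PPT distinguisher $\mathcal{B}$ against the PRNG. $\mathcal{B}$ receives a string that is either PRNG output or true randomness, maps it to Gaussian samples with the fixed schedule $\{\sigma_i\}$, and uses them to run Algorithm~\ref{alg:enc} for each of the adversary's queries and for the challenge. The model is public, so $\mathcal{B}$ can evaluate $\boldsymbol{\epsilon}_\theta$ itself. $\mathcal{B}$ then outputs $1$ iff $\mathcal{A}$ wins. This gives
\[
\bigl|\Pr[\mathcal{A}\text{ wins in }\mathsf{H}_0]-\Pr[\mathcal{A}\text{ wins in }\mathsf{H}_1]\bigr|\le \mathsf{Adv}^{\mathsf{PRG}}_{\mathcal{B}}(\lambda).
\]
Because the seed $n$ is fresh for each encryption, the PRNG calls for the $q=\mathsf{poly}(\lambda)$ queries use distinct inputs. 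A multi‑query (PRF‑style) bound therefore costs at most a factor $q+1$, which is still negligible.

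Next I analyse $\mathsf{H}_1$. There the challenge ciphertext is a fresh sample $\mathbf{x}_{T,\delta}\sim\mathcal{N}(\mathbf{x}_{T,0}^{(b)},\boldsymbol{\Sigma}_T)$ by Eq.~\eqref{ciph_dis}. Its noise is independent of every oracle answer. The oracle answers concern plaintexts other than $\mathbf{x}_0^{(0)},\mathbf{x}_0^{(1)}$, and $\mathcal{A}$ could compute them itself by simulation anyway. So the queries give no information beyond what $\mathcal{A}$ already has, and $\mathcal{A}$ faces exactly the single‑sample test of Def.~\ref{Hyp_test}. By Eqs.~\eqref{min_err}--\eqref{adv_cpa}, its advantage is $|\tfrac12-\Phi(-d_{\text{CT}}/2)|\le d_{\text{CT}}/(2\sqrt{2\pi})$. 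Thm.~\ref{secure_samp} supplies a noise schedule under which this quantity is $\exp(-\Omega(N-x_0))$.

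Finally I set the security parameter $\lambda$ to be $\Theta(N-x_0)$, with $\sigma_N$ fixed by the schedule from Thm.~\ref{secure_samp}. The total advantage is then
\[
\exp(-\Omega(\lambda))+(q+1)\,\mathsf{Adv}^{\mathsf{PRG}}(\lambda),
\]
which is negligible for every PPT $\mathcal{A}$. This is the claimed sampling configuration.

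The main obstacle is the hop from the real game to $\mathsf{H}_1$ together with the Gaussian modelling. The PRNG outputs bits, not exact Gaussians, so I would argue that the fixed transform to Gaussians is efficient and treat the step as a computational replacement rather than a statistical one. A subtler issue is that $\boldsymbol{\Sigma}_T$ and $\mathbf{x}_{T,0}^{(b)}$ depend on the adversarially chosen plaintexts. I would therefore invoke Thm.~\ref{secure_samp} uniformly over all pairs of plaintexts, requiring the bound on $d_{\text{CT}}$ to hold for any two distinct $\mathbf{x}_0^{(0)},\mathbf{x}_0^{(1)}$, for example via a bound on $\|\mathbf{x}_{T,0}^{(0)}-\mathbf{x}_{T,0}^{(1)}\|$ from Assumption~\ref{assum:gaussian} together with $\lambda_{\min}(\boldsymbol{\Sigma}_T)$ growing like $\rho^{2(N-x_0)}$.
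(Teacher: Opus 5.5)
Your proposal follows the paper's route. You use the pseudorandomness of the CSPRNG to swap the key noise for truly random Gaussian noise at negligible cost. In that ideal game the challenge ciphertext is a single sample from $\mathcal{N}(\mathbf{x}_{T,0}^{(b)},\boldsymbol{\Sigma}_T)$, so the advantage reduces to the Gaussian hypothesis test of Def.~\ref{Hyp_test}, and Thm.~\ref{secure_samp} bounds it by $\exp(-\Omega(N-x_0))$ with $\lambda=N-x_0$.

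The one difference is how you handle the encryption-oracle queries. The paper treats them with a separate hybrid over the $q$ answers, losing a factor $q$ (see its multi-query paragraph). You instead note that once the noise is truly random the adversary can simulate the oracle itself, so the queries cost nothing; this is slightly cleaner.
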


\subsection{Security Boundaries and Parameters}

We now examine the practical security guarantees of our scheme, returning to the threat model defined in Section~\ref{thr_mod}. 

\textbf{Multi-query indistinguishability.} Our IND-CPA analysis previously focused on a single challenge. However, the adversary in the IND-CPA game is allowed to make up to \(q = \mathsf{poly}(\lambda)\) adaptive queries before receiving the challenge ciphertext.  

Because each encryption uses fresh, independently generated pseudo‑random noise \(\{\delta_t\}\) (derived from a secret key \(k\) and a seed \(n\) to prevent replay), the ciphertexts from different queries are independent.

This independence enables a standard hybrid argument. Define hybrid experiments \(\mathsf{Hyb}_0, \mathsf{Hyb}_1, \dots, \mathsf{Hyb}_q\) as follows: in \(\mathsf{Hyb}_i\), the first \(i\) queries are answered with the true encryption of the adversary’s chosen plaintexts, while the remaining \(q-i\) queries are answered with encryptions of a fixed reference plaintext (e.g., the all‑zero image).

The difference between \(\mathsf{Hyb}_{i-1}\) and \(\mathsf{Hyb}_i\) lies exactly in the \(i\)-th query, which reduces to the single‑challenge distinguishing scenario. Hence, the advantage of distinguishing \(\mathsf{Hyb}_{i-1}\) from \(\mathsf{Hyb}_i\) is at most the single‑challenge advantage \(\varepsilon = \exp(-\Omega(\lambda))\).

By the triangle inequality, the total advantage in distinguishing the real game (\(\mathsf{Hyb}_q\)) from the all‑reference game (\(\mathsf{Hyb}_0\)) is bounded by \(q \cdot \varepsilon\). Since \(q = \mathsf{poly}(\lambda)\) and \(\varepsilon\) is exponentially small, the overall advantage remains negligible. Thus our single‑challenge security analysis directly extends to the full multi‑query IND‑CPA setting.

\textbf{Key \& Plaintext Recovery.} The key noise is generated by a CSPRNG; extracting the key from a ciphertext is as hard as breaking the CSPRNG. Plaintext recovery is implied by IND‑CPA security: a successful plaintext recovery would directly yield a distinguishing advantage. Thus, both attacks are computationally infeasible under the same assumptions.

\textbf{Practical determination of the security parameter.} The theoretical analysis establishes that the adversary's advantage is bounded by an exponentially decaying function of $\lambda = N - x_0$, where $x_0$ is the step after which the amplification factors $\mu_t$ become strictly greater than $1$.

In practice, however, the exact value of $x_0$ depends on the noise schedule, the model architecture, and the statistics of the training data. Moreover, the total number of steps $N$ cannot be arbitrarily large because the diffusion model's own prediction error accumulates over long chains, potentially deteriorating the reconstruction quality for legitimate users.

Therefore, while the existence of a security parameter $\lambda$ is guaranteed, its concrete numerical value must be determined empirically. We provide a set of experiments in Sec.~\ref{sec_eva} that measure the reconstruction error under correct and incorrect keys for various $N$ and noise variances, thereby offering practical guidance for choosing $\lambda$ to achieve a desired security level.

\section{Experiments}
\subsection{Experimental Setup}
\textbf{Datasets and Models}. We conduct experiments on three standard image datasets: ImageNet-1K~\cite{deng2009imagenet}, MS-COCO~\cite{lin2014microsoft}, and CelebA-HQ~\cite{karras2017progressive}. We randomly select 200 images with the size of $512 \times 512$ from each dataset. Moreover, we employ multiple public latent diffusion models: Stable Diffusion~\cite{rombach2021high} v1.4, v1.5, v2.0, and v2.1 to evaluate cross-model robustness. 

\textbf{Environment and Parameters}. All experiments are performed on an RTX 4090 GPU with PyTorch 2.7. We use the inversion and sampling process of the O-BELM sampler for encryption and decryption, respectively. The default number of steps is set to 20 for both processes, and the noise schedule follows the squared cosine (\texttt{squaredcos\_cap\_v2}) scheduler~\cite{nichol2021improved}. 

\textbf{Key Noise Generation}. The key noise \(\delta_t\) is sampled from a Gaussian distribution \(\mathcal{N}(0, \sigma_{\delta}^2 \mathbf{I}_d)\). 
To meet the fixed input length requirement of the PRNG and to cryptographically bind the noise to the secret key, the step index, and the query, we first compute a hash:
\[
\text{seed}_t = \mathsf{Hash}(k \,\|\, t \,\|\, n),
\]
where \(k\) is the secret key, \(t\) the step index, \(n\) a query‑specific seed (to prevent replay attacks), and \(\mathsf{Hash}\) is a cryptographic hash function (e.g., SHA‑256). 
This seed is then used to instantiate a cryptographically secure PRNG (AES‑128 in CTR mode) to generate \(\delta_t \sim \mathcal{N}(0, \sigma_{\delta}^2 \mathbf{I}_d)\). 
Unless stated otherwise, we set \(\sigma_{\delta} = 1\) and vary it in Sec.~\ref{Qu_ConRe} to study the effect on legitimate reconstruction.

\textbf{Evaluation Metrics}. To assess reconstruction quality, we adopt three widely used metrics:

\begin{itemize}
    \item \textbf{Peak Signal-to-Noise Ratio (PSNR)} measures pixel‑wise fidelity. For an original image $\mathbf{x}_0$ and its reconstructed version $\hat{\mathbf{x}}_0$, both with values in $[0,1]$, we define
    \[
    \mathrm{PSNR} = 10 \log_{10}\!\left(\frac{1}{\mathrm{MSE}}\right),\quad 
    \mathrm{MSE} = \frac{1}{d}\|\mathbf{x}_0 - \hat{\mathbf{x}}_0\|^2,
    \]
    where $d$ is the number of pixels. Higher PSNR indicates better reconstruction.

    \item \textbf{Structural Similarity Index (SSIM)} evaluates perceptual similarity. It combines luminance, contrast, and structure terms:
    \[
    \mathrm{SSIM} = \frac{(2\mu_{\mathbf{x}}\mu_{\hat{\mathbf{x}}} + C_1)(2\sigma_{\mathbf{x}\hat{\mathbf{x}}} + C_2)}{(\mu_{\mathbf{x}}^2 + \mu_{\hat{\mathbf{x}}}^2 + C_1)(\sigma_{\mathbf{x}}^2 + \sigma_{\hat{\mathbf{x}}}^2 + C_2)},
    \]
    where $\mu$, $\sigma$, $\sigma_{\mathbf{x}\hat{\mathbf{x}}}$ denote local means, standard deviations, and cross‑correlation, and $C_1, C_2$ are small constants. SSIM ranges in $[0,1]$, with 1 meaning identical.

    \item \textbf{Fréchet Inception Distance (FID)}~\cite{heusel2017gans} quantifies the distributional distance between real and reconstructed images. FID is computed as
    \[
    \mathrm{FID} = \|\boldsymbol{\mu}_r - \boldsymbol{\mu}_g\|^2 + \mathrm{Tr}\!\left(\boldsymbol{\Sigma}_r + \boldsymbol{\Sigma}_g - 2(\boldsymbol{\Sigma}_r\boldsymbol{\Sigma}_g)^{1/2}\right),
    \]
    where $(\boldsymbol{\mu}_r,\boldsymbol{\Sigma}_r)$ and $(\boldsymbol{\mu}_g,\boldsymbol{\Sigma}_g)$ are the mean and covariance of features extracted from real and generated images by a pre‑trained Inception network. Lower FID indicates higher fidelity and diversity.
\end{itemize}

We also report computational efficiency as the total runtime for a complete encryption‑decryption cycle. Additional security‑specific metrics (e.g., distinguishing accuracy under adaptive queries) are presented in Sec.~\ref{sec_eva}.
\subsection{Quality of Conditional Reconstruction}
\label{Qu_ConRe}

\begin{figure}
    \centering
    \begin{subfigure}{0.49\linewidth}
        \includegraphics[width=1.0\linewidth]{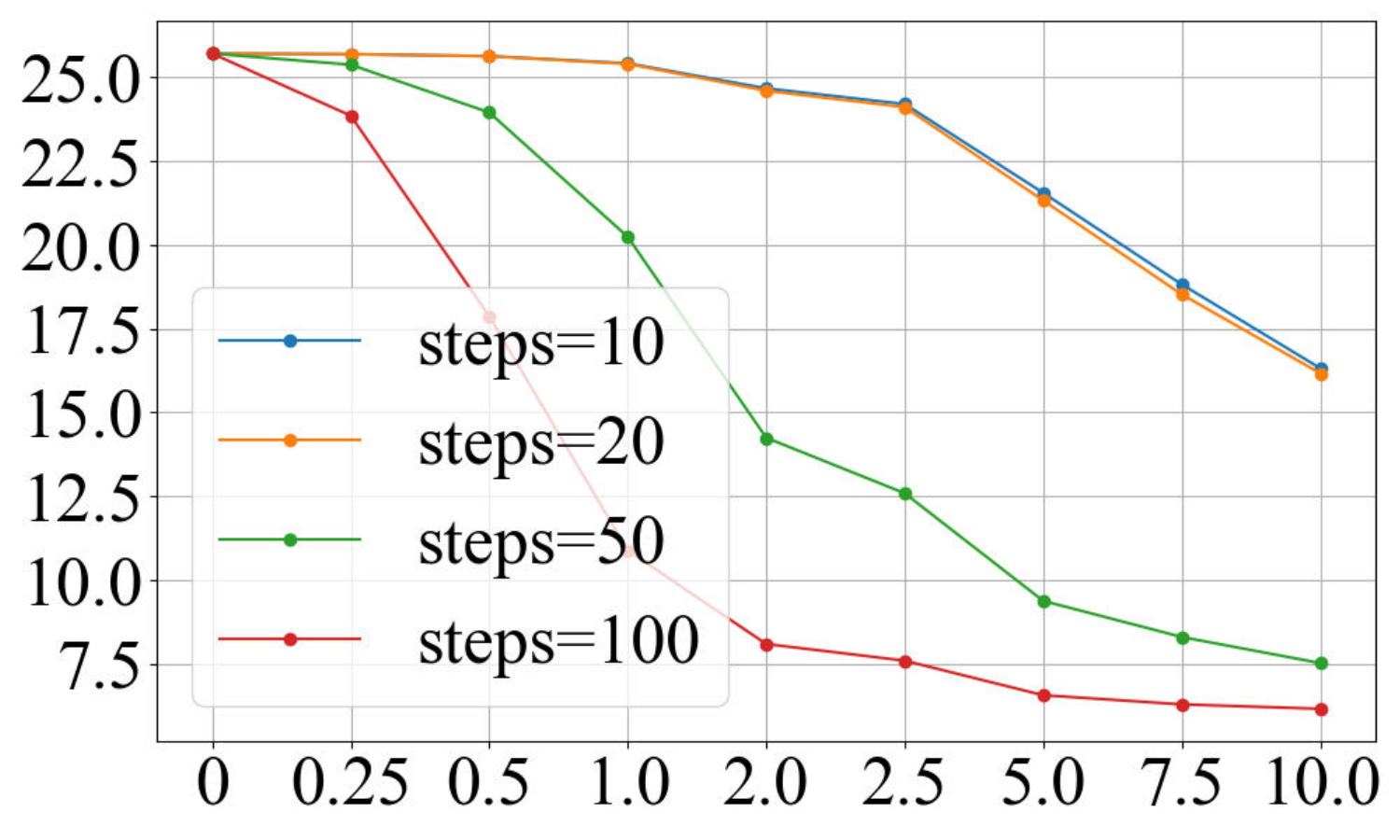}
        \caption{PSNR - $\sigma$}
    \end{subfigure}
    \begin{subfigure}{0.49\linewidth}
        \includegraphics[width=1.0\linewidth]{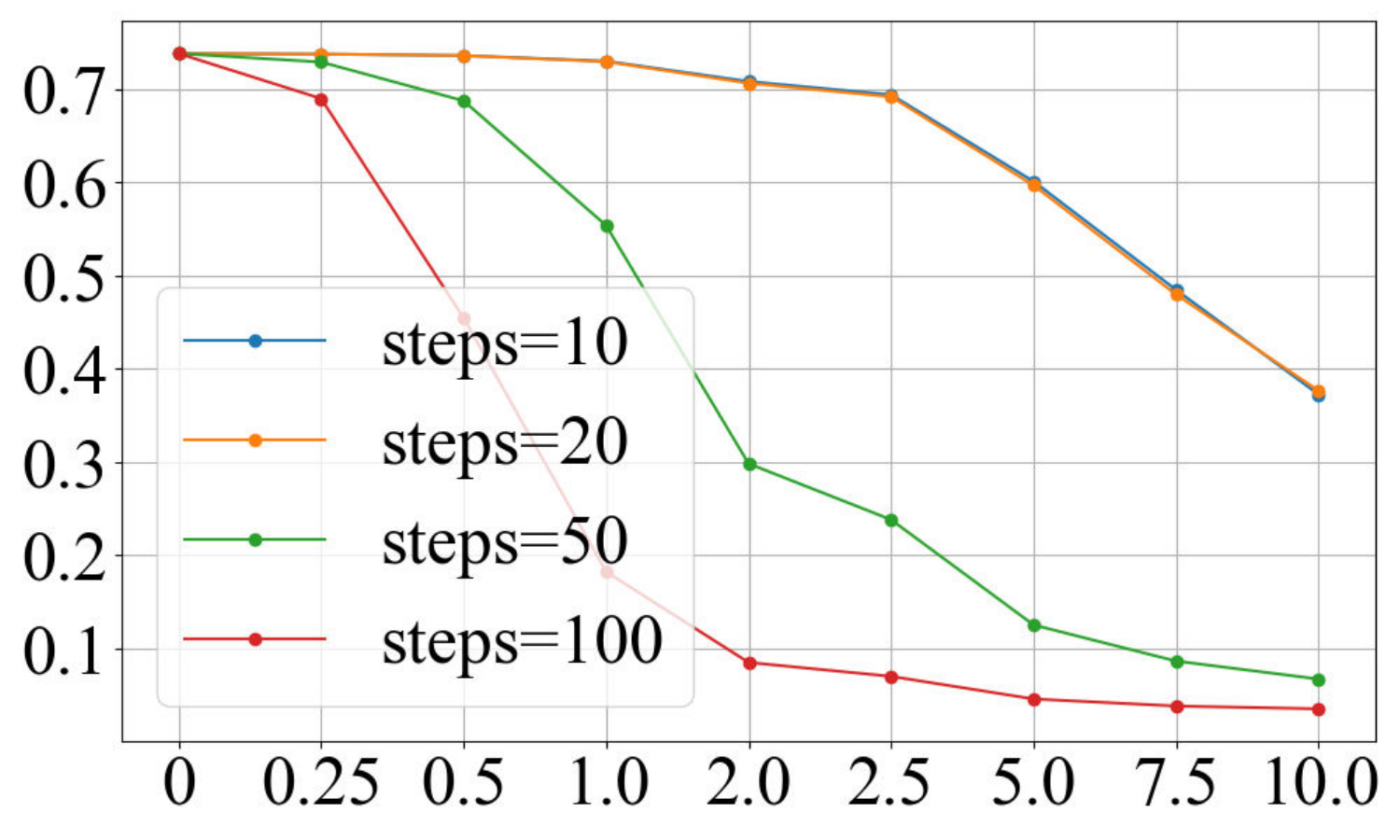}
        \caption{SSIM - $\sigma$}
    \end{subfigure}
    \begin{subfigure}{0.49\linewidth}
        \includegraphics[width=1.0\linewidth]{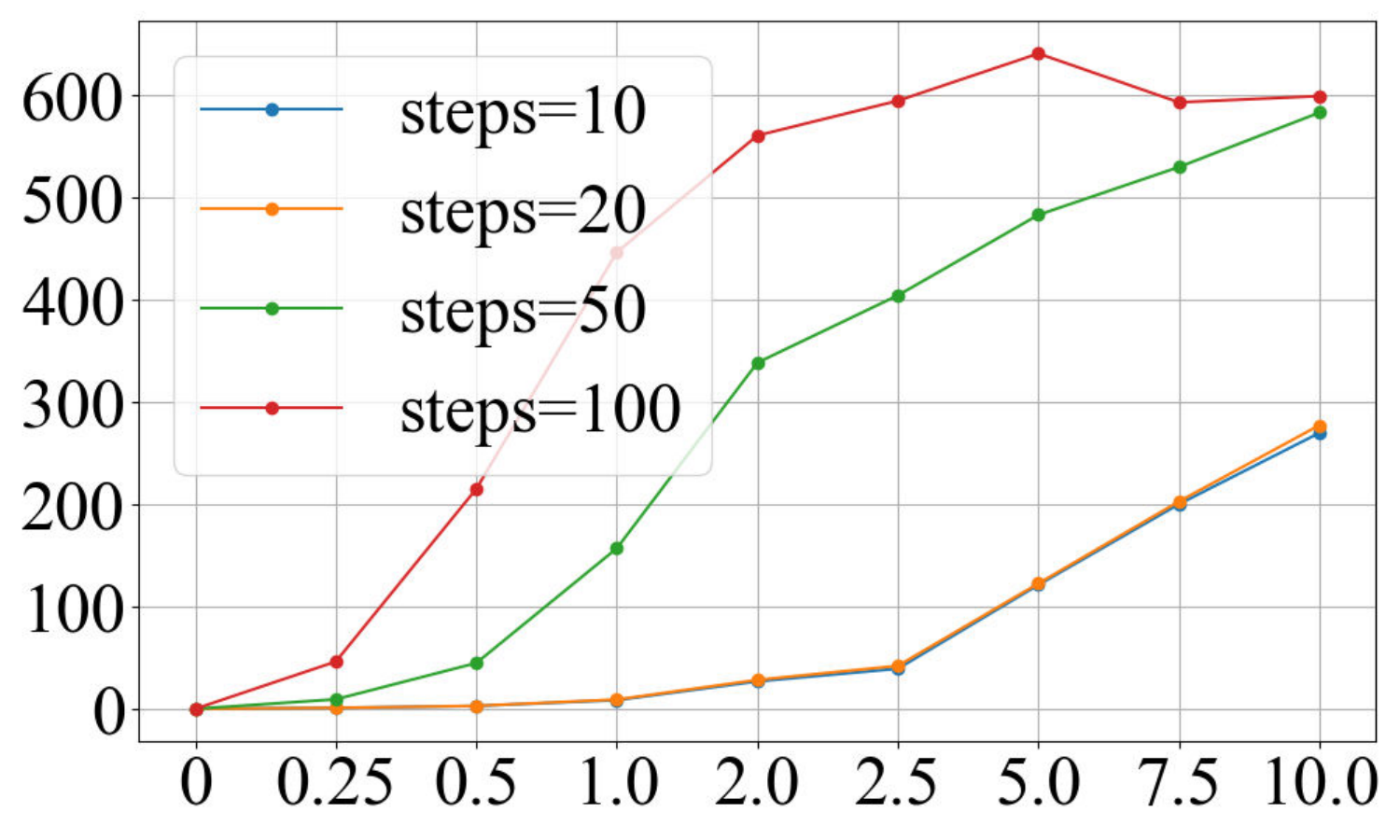}
        \caption{FID - $\sigma$}
    \end{subfigure}
    \begin{subfigure}{0.49\linewidth}
        \includegraphics[width=1.0\linewidth]{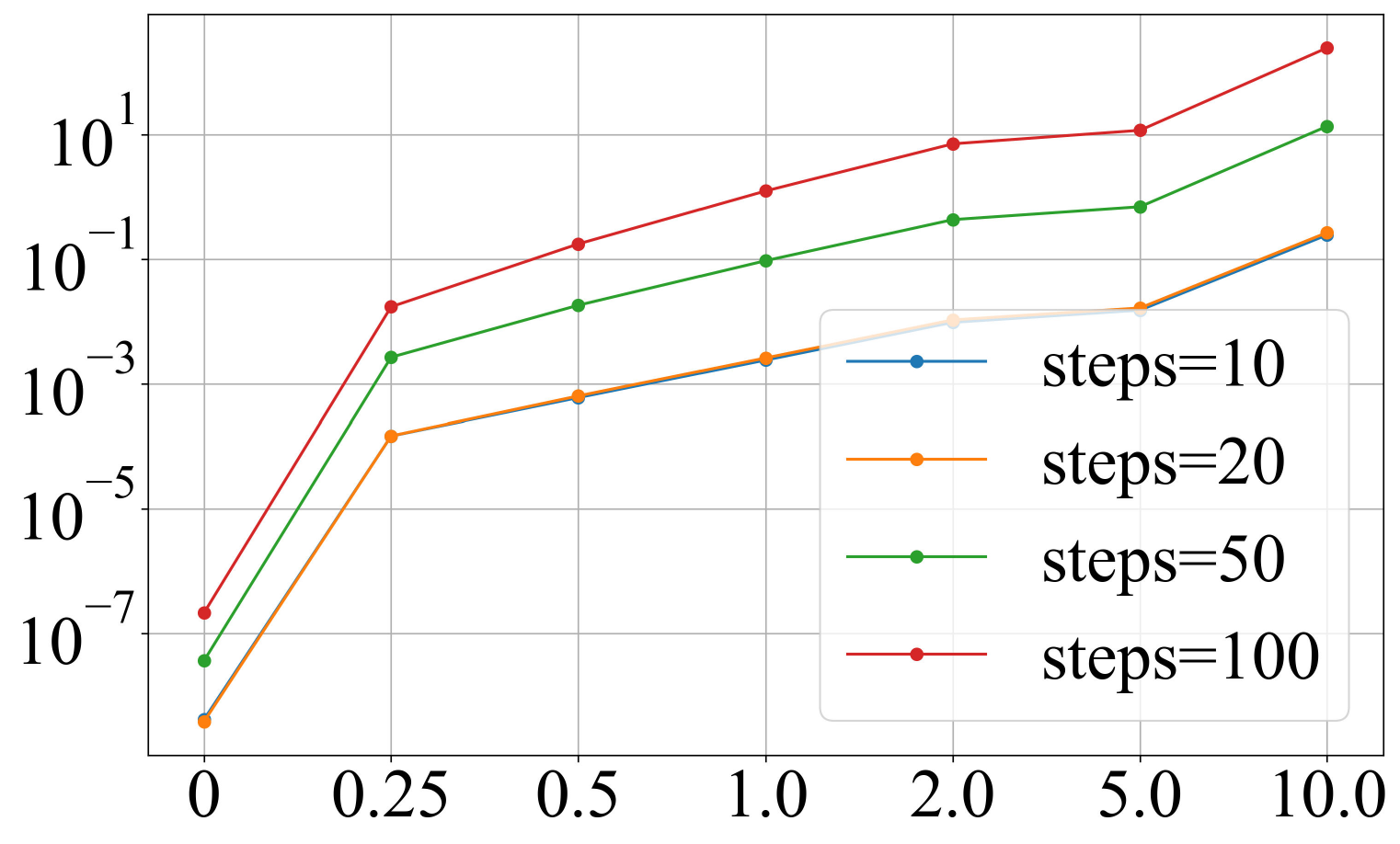}
        \caption{MSE - $\sigma$ ($\varepsilon$-correctness)}
    \end{subfigure}
    \caption{Variation of quality metrics with the noise variance $\sigma$}
    \label{fig:quality_curve}
\end{figure}
\begin{figure}
    \begin{subfigure}{0.325\linewidth}
        \includegraphics[width=1.0\linewidth]{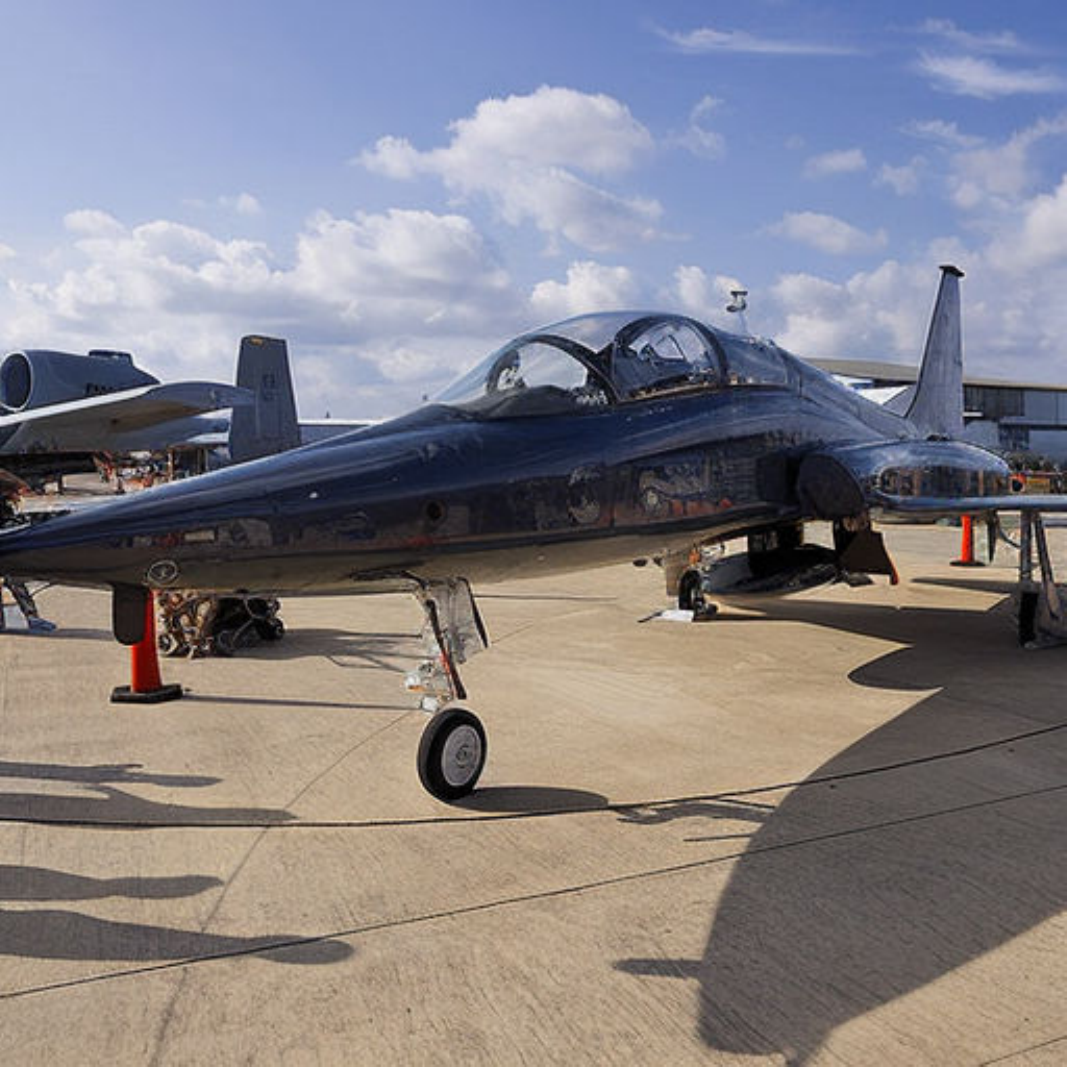}
        \caption{$\sigma=0$}
    \end{subfigure}
        \begin{subfigure}{0.325\linewidth}
        \includegraphics[width=1.0\linewidth]{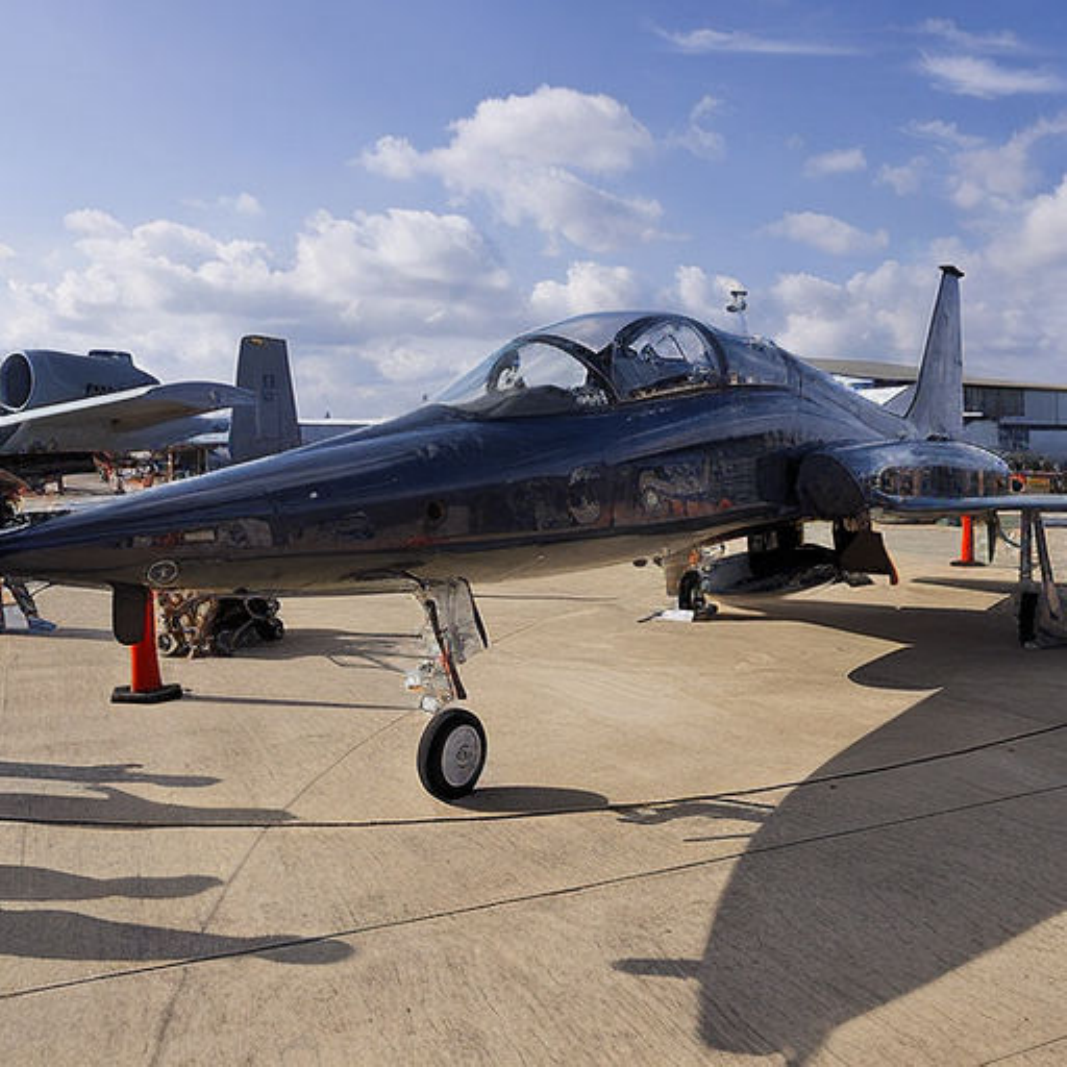}
        \caption{$\sigma=0.5$}
    \end{subfigure}
        \begin{subfigure}{0.325\linewidth}
        \includegraphics[width=1.0\linewidth]{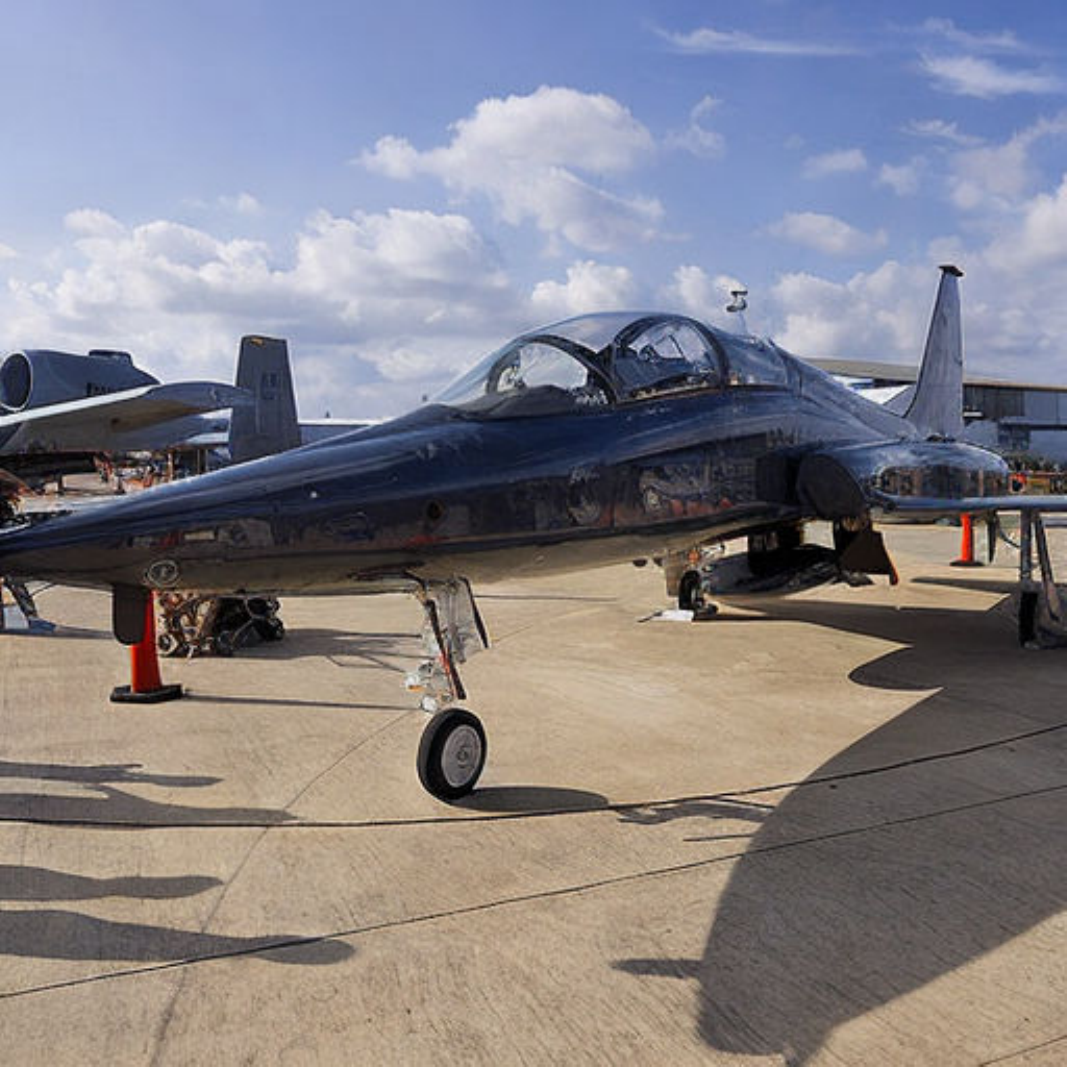}
        \caption{$\sigma=1$}
    \end{subfigure}
        \begin{subfigure}{0.325\linewidth}
        \includegraphics[width=1.0\linewidth]{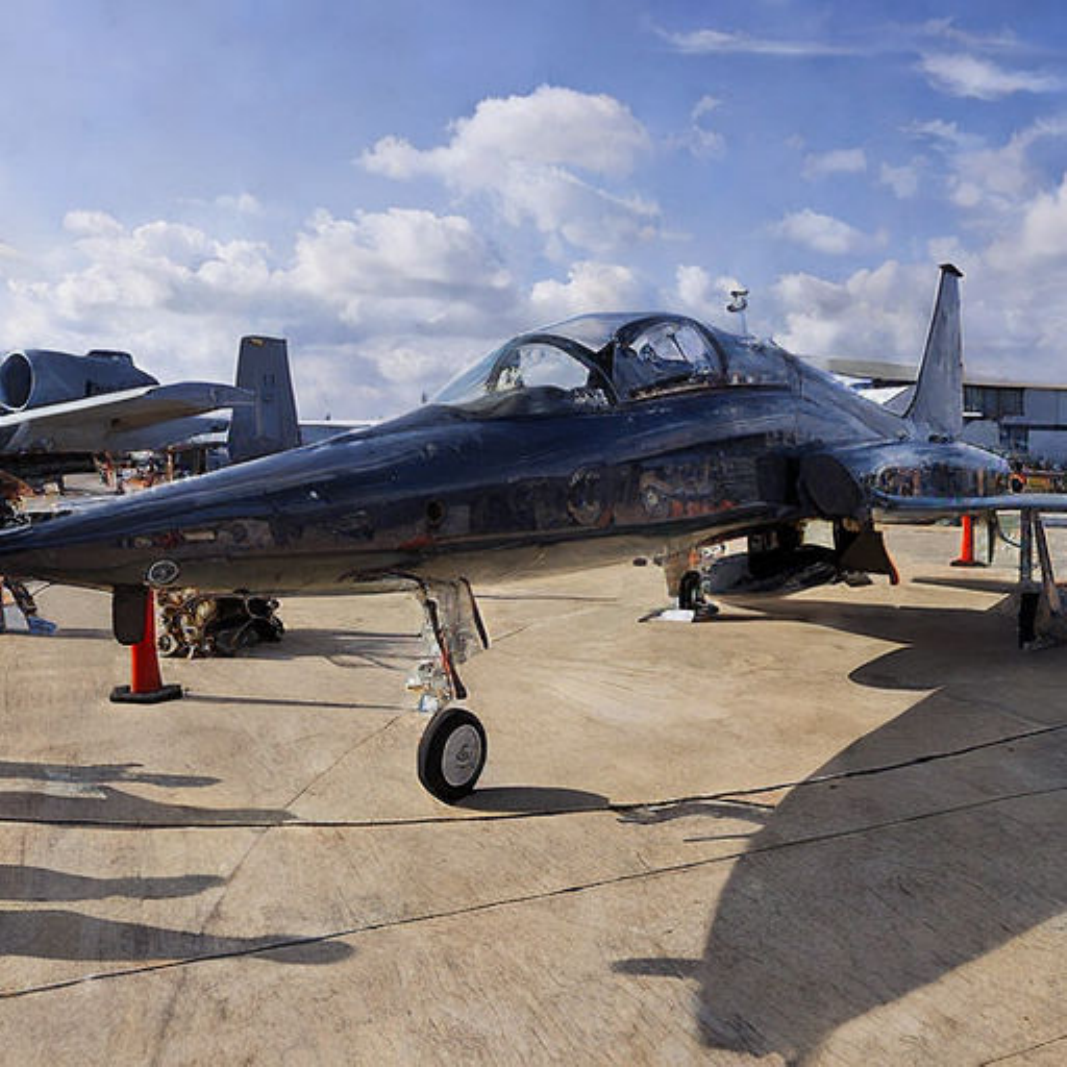}
        \caption{$\sigma=2$}
    \end{subfigure}
        \begin{subfigure}{0.325\linewidth}
        \includegraphics[width=1.0\linewidth]{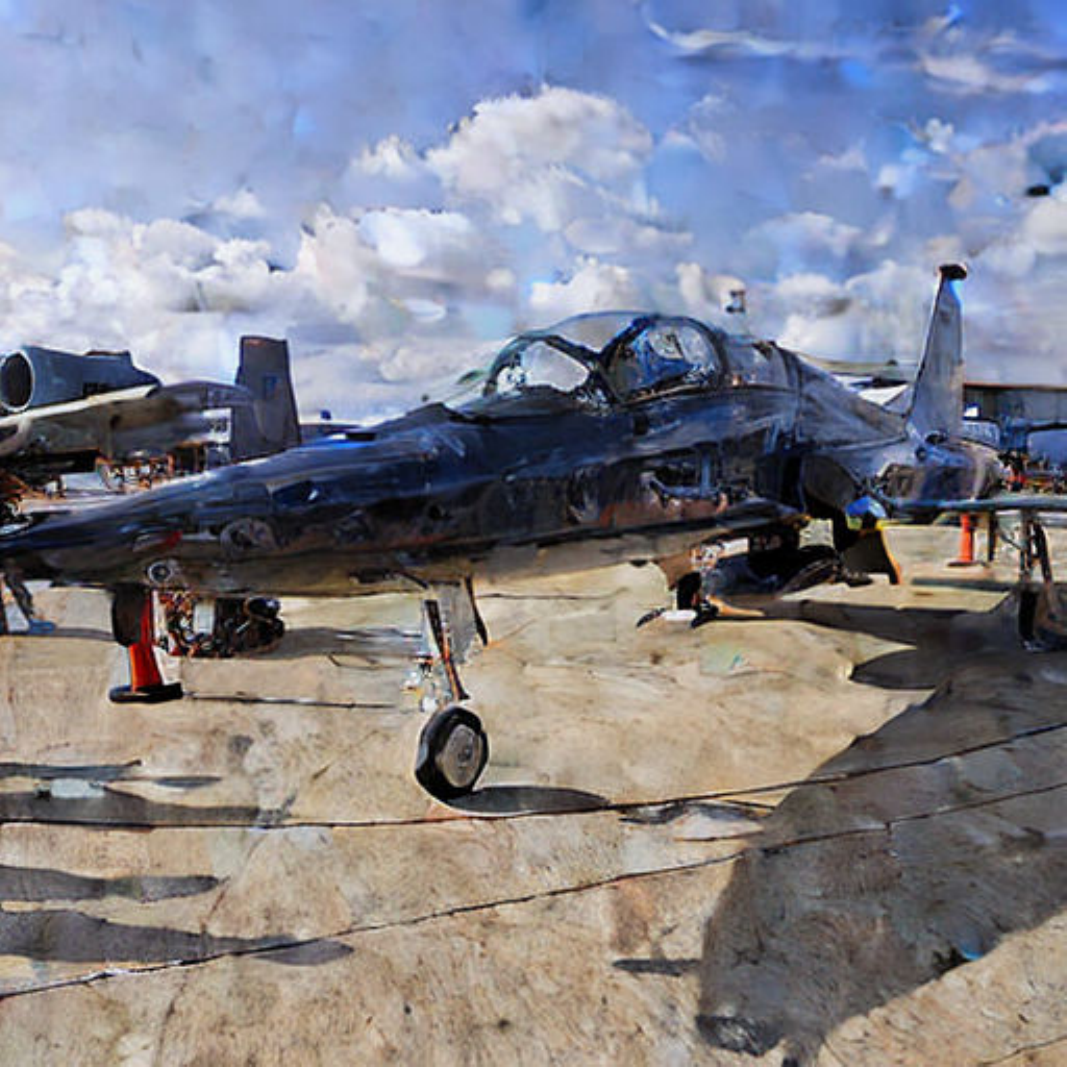}
        \caption{$\sigma=5$}
    \end{subfigure}
    \begin{subfigure}{0.325\linewidth}
        \includegraphics[width=1.0\linewidth]{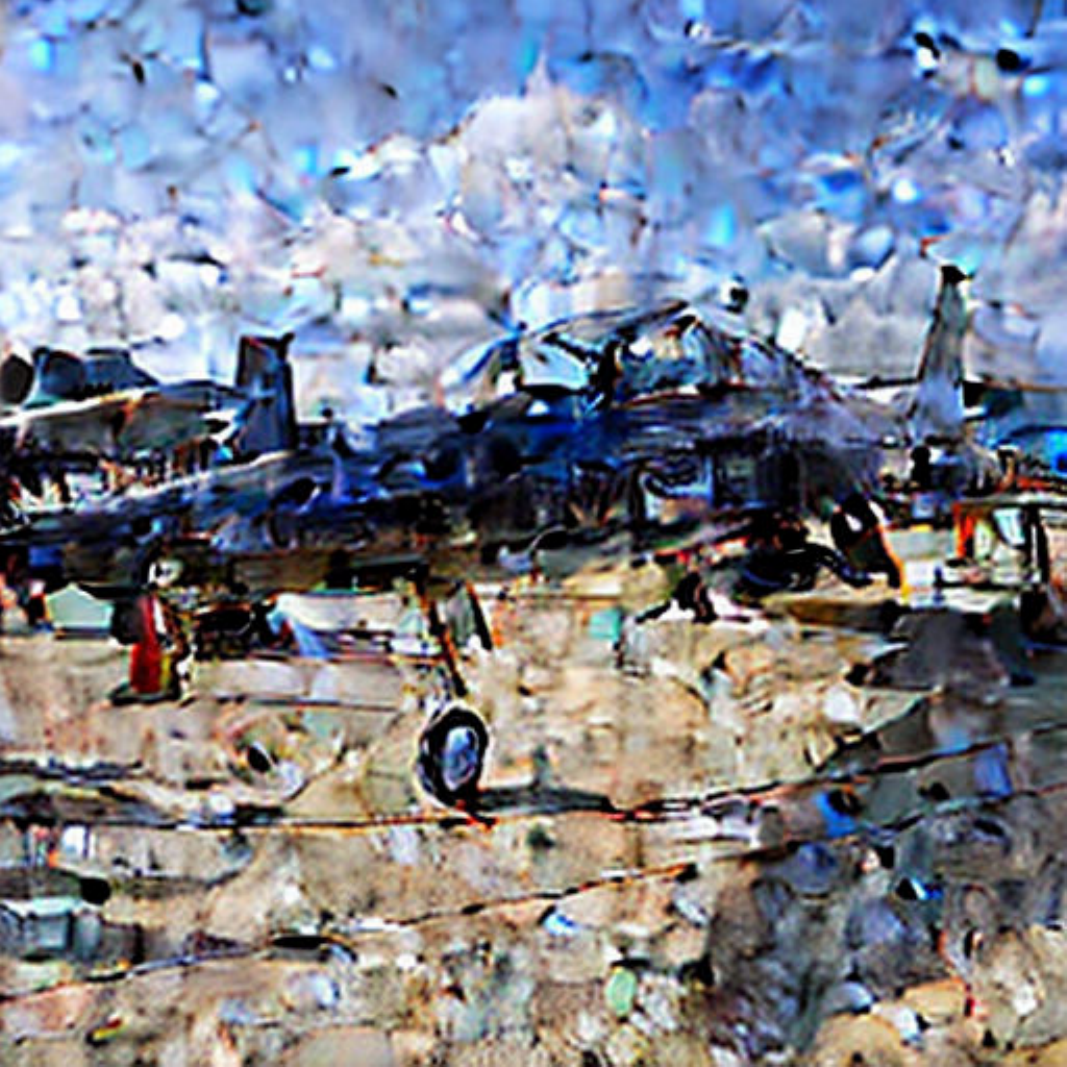}
        \caption{$\sigma=10$}
    \end{subfigure}
    \caption{Visual effects of quality degradation with the increase of $\sigma$. Here we hold sampling steps $T=20$.}
    \label{fig:quality_visual}
\end{figure}
This experiment evaluates the empirical $\varepsilon$-correctness of the framework by measuring the reconstruction quality under correct decryption across different sampling steps $T$ and noise intensities $\sigma^2$. The latent Mean Squared Error (MSE) is computed to estimate the practical reconstruction error bound $\varepsilon$. For experimental clarity, we set the key noise variance to a time-invariant constant $\sigma^2$ across all sampling steps. The baseline ($\sigma^2=0$) corresponds to standard O-BELM without protection. The quantitative trends across parameters are shown in Fig.~\ref{fig:quality_curve}. Example reconstructions under different noise levels are provided in Fig.~\ref{fig:quality_visual}.

Experimental results show that as the number of sampling steps and the noise intensity increase, the quality of the reconstructed images decreases, while the MSE between the latent variables increases, that is, the $\varepsilon$-correctness error coefficient $\varepsilon$ rises. This is caused by floating-point errors in computation: high-intensity noise loses low-order information during floating-point operations, and Sec.~\ref{sec_proof} indicates that multi-step inversion amplifies this low-order error to a level that noticeably affects reconstruction quality. We also note that even with $\sigma = 0$, perfectly reversible inversion is hardly attainable. Besides the floating-point errors mentioned above, the reconstruction error introduced by the VAE is another significant contributing factor.

\subsection{Cross-Model Robustness}
\label{cross_model_robust}
To assess the robustness against potential model updates or fine-tuning as a common scenario in practice, we evaluate its performance when encryption and decryption are performed with different diffusion model variants. We symmetrically test all pairwise combinations among four Stable Diffusion checkpoints (v1.4, v1.5, v2.0, and v2.1). For each pair (enc-model, dec-model), we generate ciphertexts and attempt reconstruction across the same ranges of step counts $T$ and noise intensities $\sigma^2$ as in Sec .~\ref {Qu_ConRe}. 

\begin{figure}[!ht]
    \begin{subfigure}{0.49\linewidth}
        \includegraphics[width=1.0\linewidth]{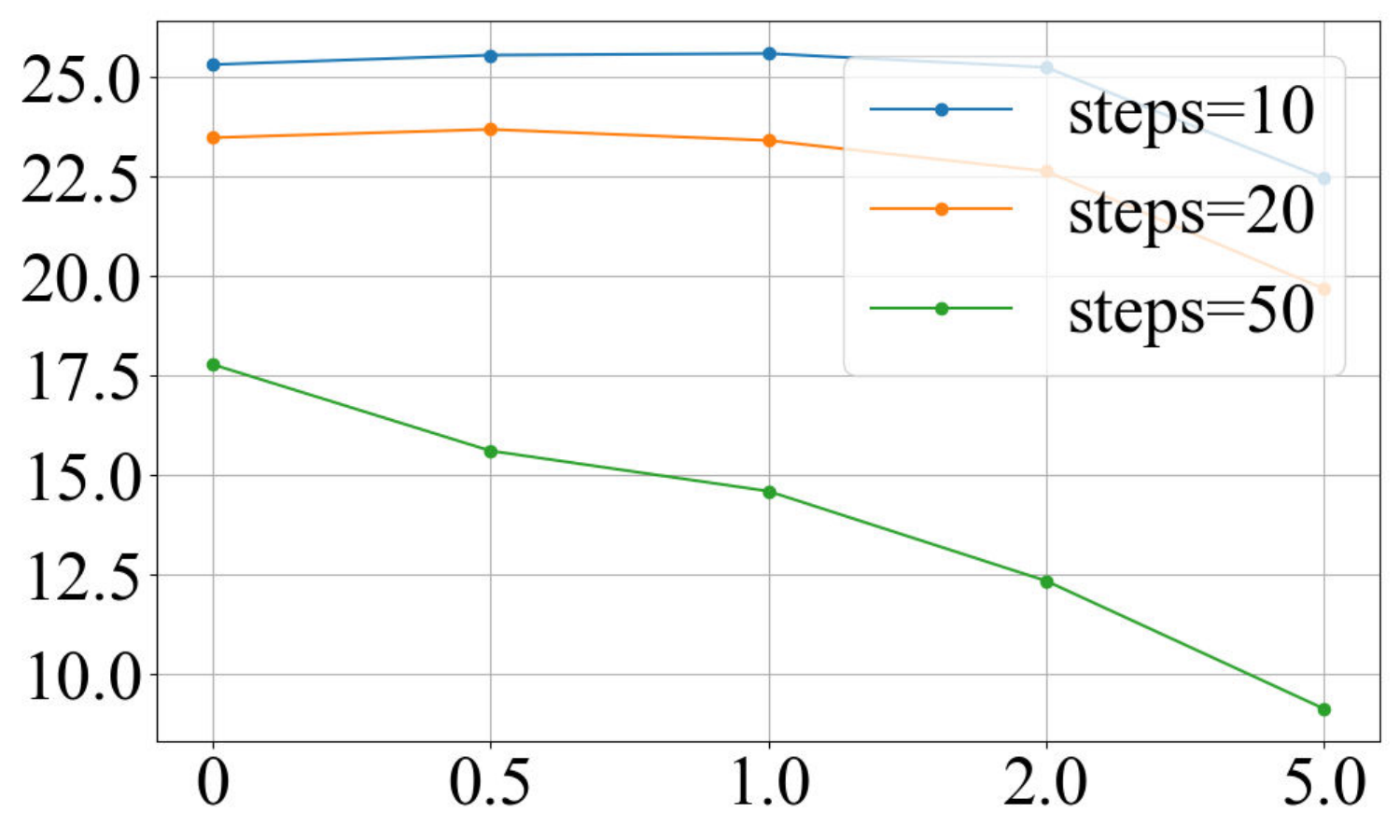}
    \end{subfigure}
    \begin{subfigure}{0.49\linewidth}
        \includegraphics[width=1.0\linewidth]{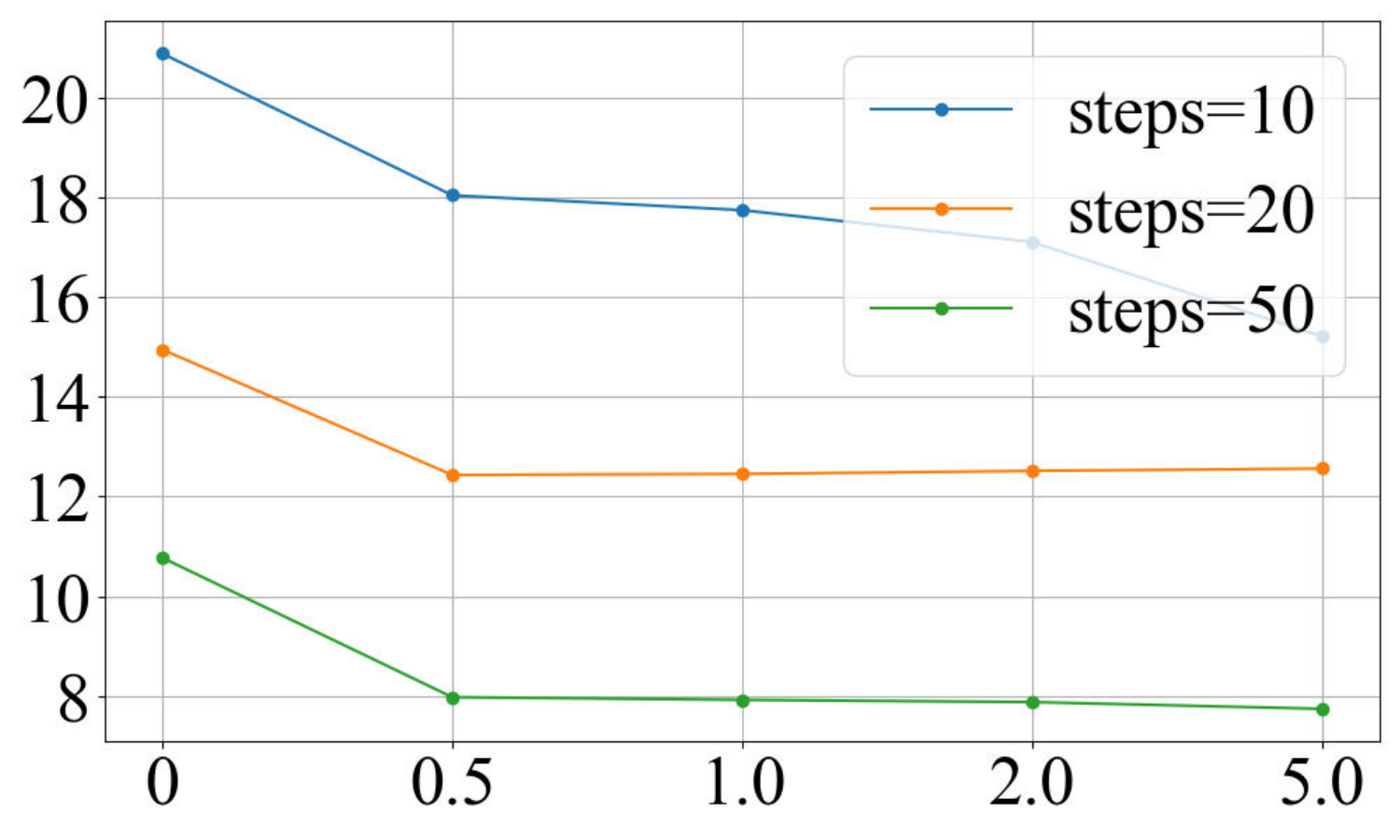}
    \end{subfigure}
    \begin{subfigure}{0.49\linewidth}
        \includegraphics[width=1.0\linewidth]{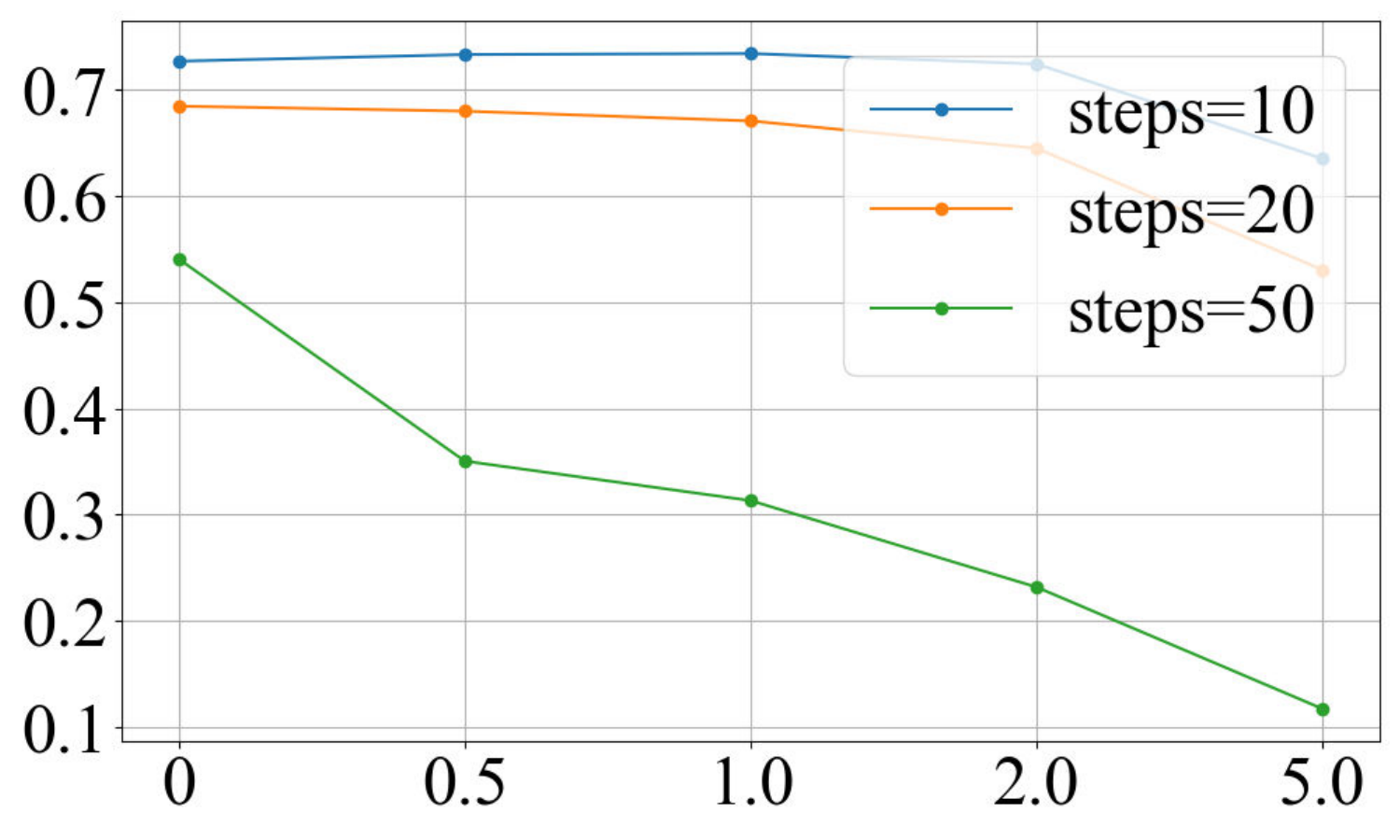}
    \end{subfigure}
        \begin{subfigure}{0.49\linewidth}
        \includegraphics[width=1.0\linewidth]{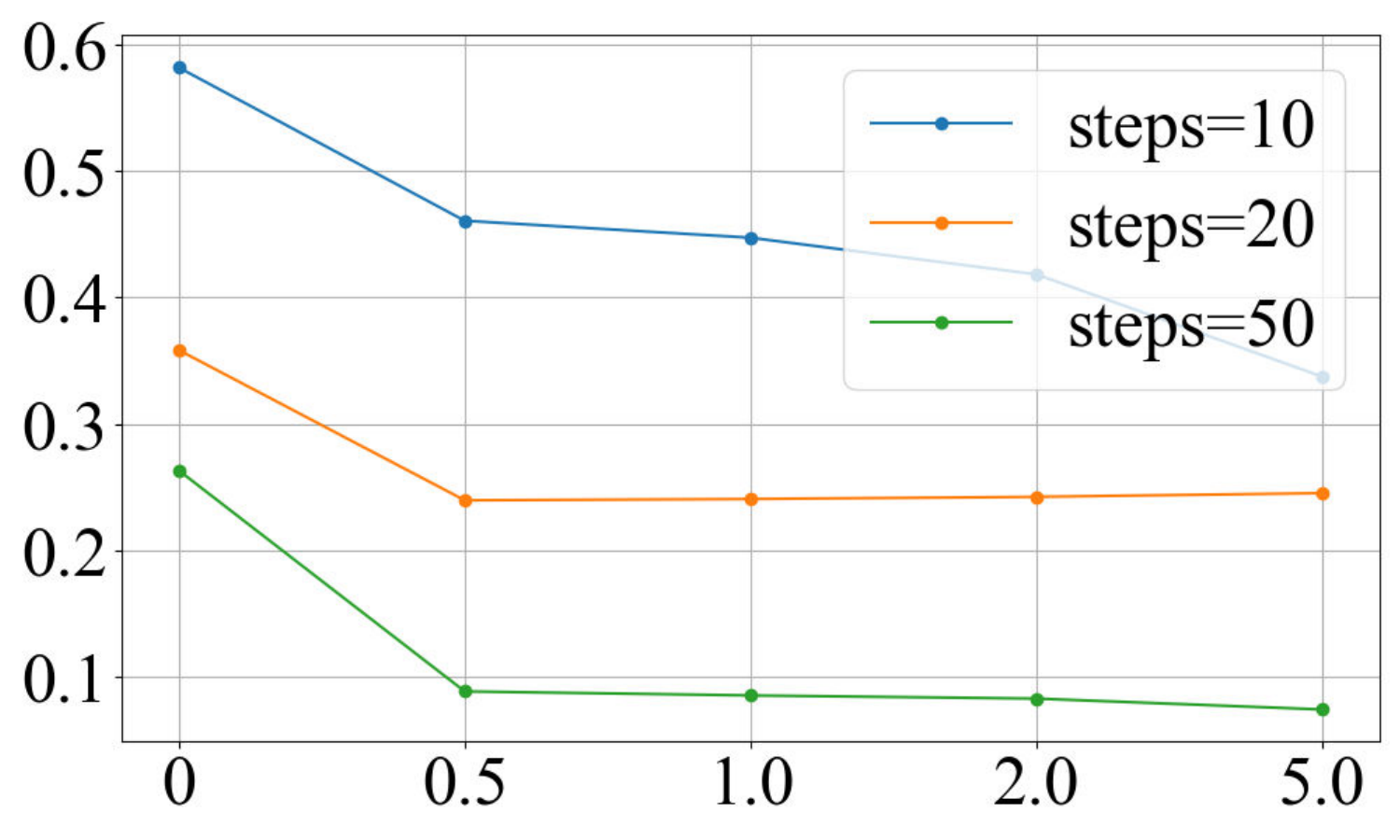}
    \end{subfigure}
        \begin{subfigure}{0.49\linewidth}
        \includegraphics[width=1.0\linewidth]{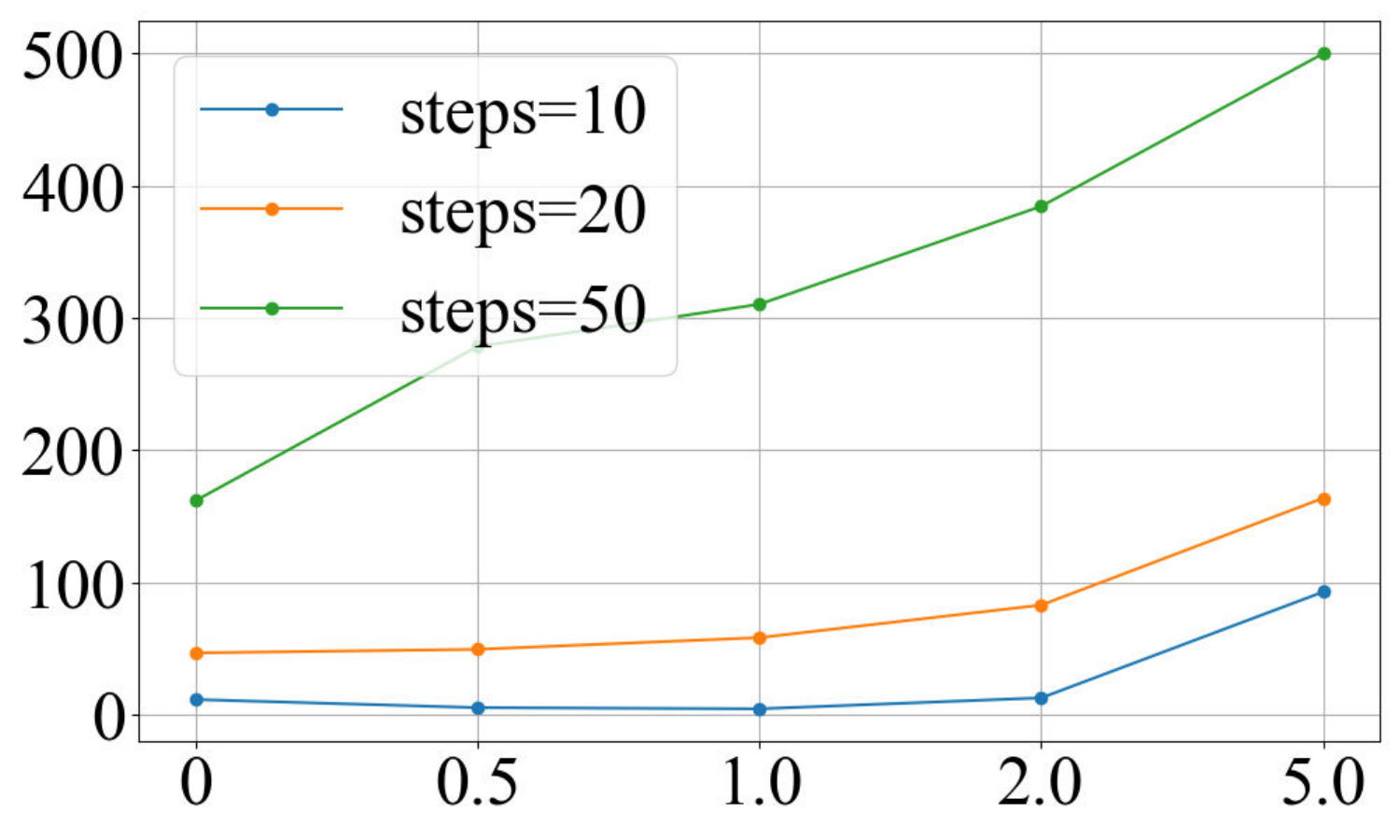}
        \caption{Near-version models}
    \end{subfigure}
        \begin{subfigure}{0.49\linewidth}
        \includegraphics[width=1.0\linewidth]{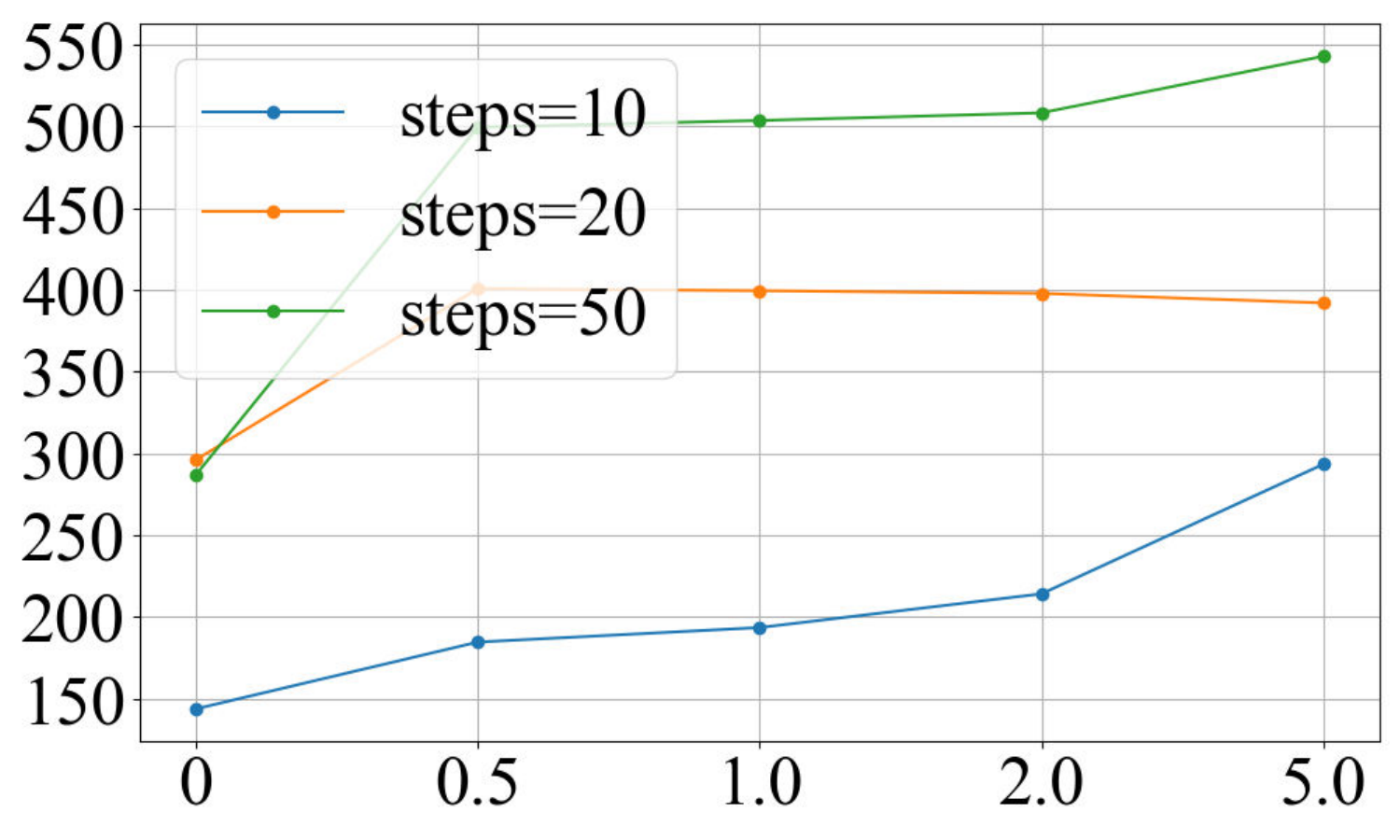}
        \caption{Distant-version models}
    \end{subfigure}
    \caption{Evaluation of cross‑model robustness. As in the Sec.~\ref{Qu_ConRe}, we adopt three metrics: PSNR (top row), SSIM (middle row), and FID (bottom row).}
    \label{fig:cross_model_eva}
\end{figure}
\begin{figure}
    \centering
    \includegraphics[width=0.65\linewidth]{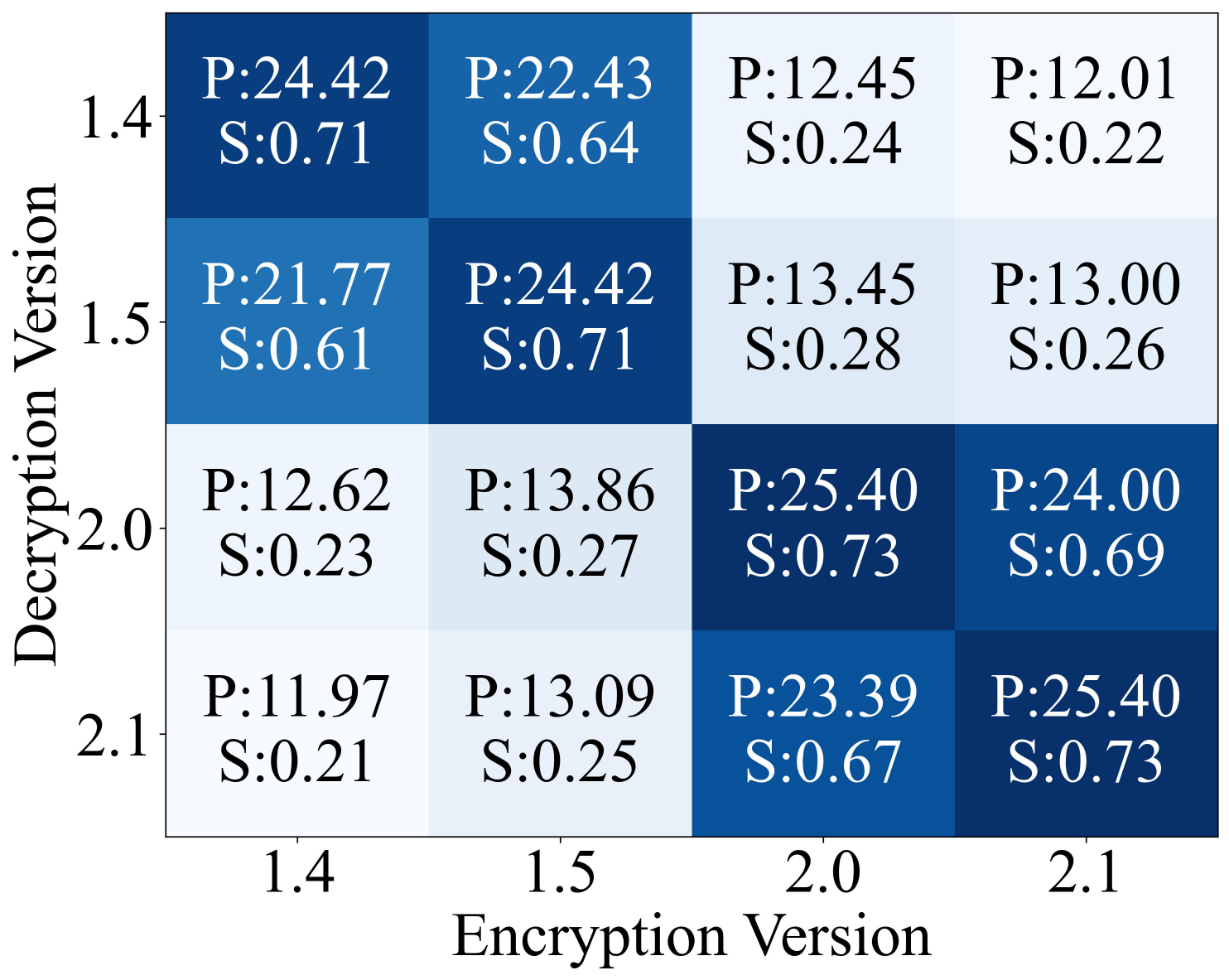}
    \caption{The heatmap of cross-model robustness, where P indicates PSNR and S indicates SSIM.}
    \label{fig:cross_model_heatmap}
\end{figure}
\begin{table}
    \centering
    \begin{tabular}{lcccc}
        \toprule
       Method & Condition & $\text{PSNR/dB}$& $\text{SSIM}$& $\text{FID}$ \\
        \midrule
        \multirow{3}{*}{\textbf{PCDiff}} & corr. key & 30.13 & 0.91 & 8.54\\
        & wr. key  & 25.35 & 0.85 & 34.81\\
        & w/o key & 21.49 & 0.82 & 39.37\\
        \multirow{3}{*}{\textbf{RNDMask}} & corr. key & 25.51 & 0.76 & 27.39\\
        & wr. key & 13.37 & 0.35 & 299.68\\
        & w/o key & 7.51 & 0.08 & 423.61\\
        \multirow{3}{*}{\textbf{Our Scheme}} & corr. key & 25.41 & 0.73 & 9.05\\
        & wr. key & \textbf{5.84} & \textbf{0.03} & \textbf{576.90}\\
        & w/o key & \textbf{5.84} & \textbf{0.03} & \textbf{640.98}\\
        \bottomrule
    \end{tabular}
    \caption{Conditional Access Control with different existing methods, where for wrong-key (wr. key) and no-key (w/o key) settings the desired trends are opposite to those for the correct key (corr. key): \textbf{lower} PSNR/SSIM and \textbf{higher} FID indicate stronger protection.}
    \label{tab:acc_control}
\end{table}
The results are aggregated into Fig.~\ref{fig:cross_model_eva}, and visualized as a heatmap (Fig.~\ref{fig:cross_model_heatmap}) for a representative $T=20$ and $\sigma=1$.  We observe that when using closely related model versions (e.g., encrypting with Stable Diffusion v2.0 and decrypting with v2.1), the robustness performance is better compared to using versions with larger differences.

As a controlled comparison, we also test reconstruction from ciphertexts perturbed by a small amount of additive noise but using the same model for both encryption and decryption, resulting in $\text{PSNR}\approx6.41 \text{dB}$ and $\text{FID}\approx606.51$. The result is irrelevant to the model version. Thus, the impact of cross-model robustness is empirically isolated from general ciphertext sensitivity.  

Finally, we provide a theoretical analysis of cross‑model robustness based on the error‑propagation dynamics in the decryption/sampling path. Details of this analysis are discussed in Sec.~\ref{cross_model_analysis}.

\subsection{Security Evaluation}
\label{sec_eva}
\begin{table}
    \centering
    \begin{tabular}{cccc}
        \toprule
        Steps & Avg. Cost (s) & $\mathsf{Adv}^{\mathrm{IND-CPA}}_{\mathcal{A}}$ & Attack time (h)\\
        \midrule
        10 & 1.01 & $3.44 \times10^{-11}$ & $4.08 \times 10^6$\\
        20 & 1.85 & $6.98 \times 10^{-11}$ & $3.68 \times 10^6$\\
        50 & 4.35 & $1.95 \times 10^{-10}$ & $3.10 \times 10^6$\\
        100 & 8.58 & $3.91\times 10^{-10}$ & $3.04 \times 10^6$\\
        200 & 16.92 & $6.17\times10^{-10}$ & $3.80 \times 10^6$\\
        \bottomrule
    \end{tabular}
    \caption{Encryption-Decryption Time}
    \label{enc_dec_time}
\end{table}
This section provides a comprehensive empirical assessment of the security properties, designed to complement the theoretical analysis. The evaluation consists of three targeted sub-experiments:

\textbf{Perceptual Security:} We compare reconstruction quality under three conditions: with the correct key (corr. key), with a wrong key (wr. key, under single bit flip), and without any key (w/o key). Results are benchmarked against representative existing methods for image encryption or access control in diffusion models(e.g., PCDiff~\cite{gai2025pcdiff}, RNDMask~\cite{Takana2025access}).

As verified in Tab.~\ref{tab:acc_control}, due to the performance limitations of the VAE and the presence of floating-point numerical errors, the quality of correct reconstruction is constrained. However, under both wrong key and no key conditions, our method prevents unauthorized reconstruction by an adversary more effectively compared to the other two approaches.

\textbf{Validation of Theoretical Bounds:} To empirically support Asm.~\ref{asm:linear_growth}, we measure the norm of the error propagation operator $\|\mathbf{\Phi}(T,t)\|$ and the square of the eigenvalue of the ciphertext variance matrix $\lambda^2(\mathbf{\Sigma}_T)$ across different noise schedules and maximum step counts $N$, under a controlled setting where the model's noise prediction $c_i\mathbf{E_i}$ is disabled.
\begin{figure}[!t]
    \begin{subfigure}{0.49\linewidth}
        \includegraphics[width=1.0\linewidth]{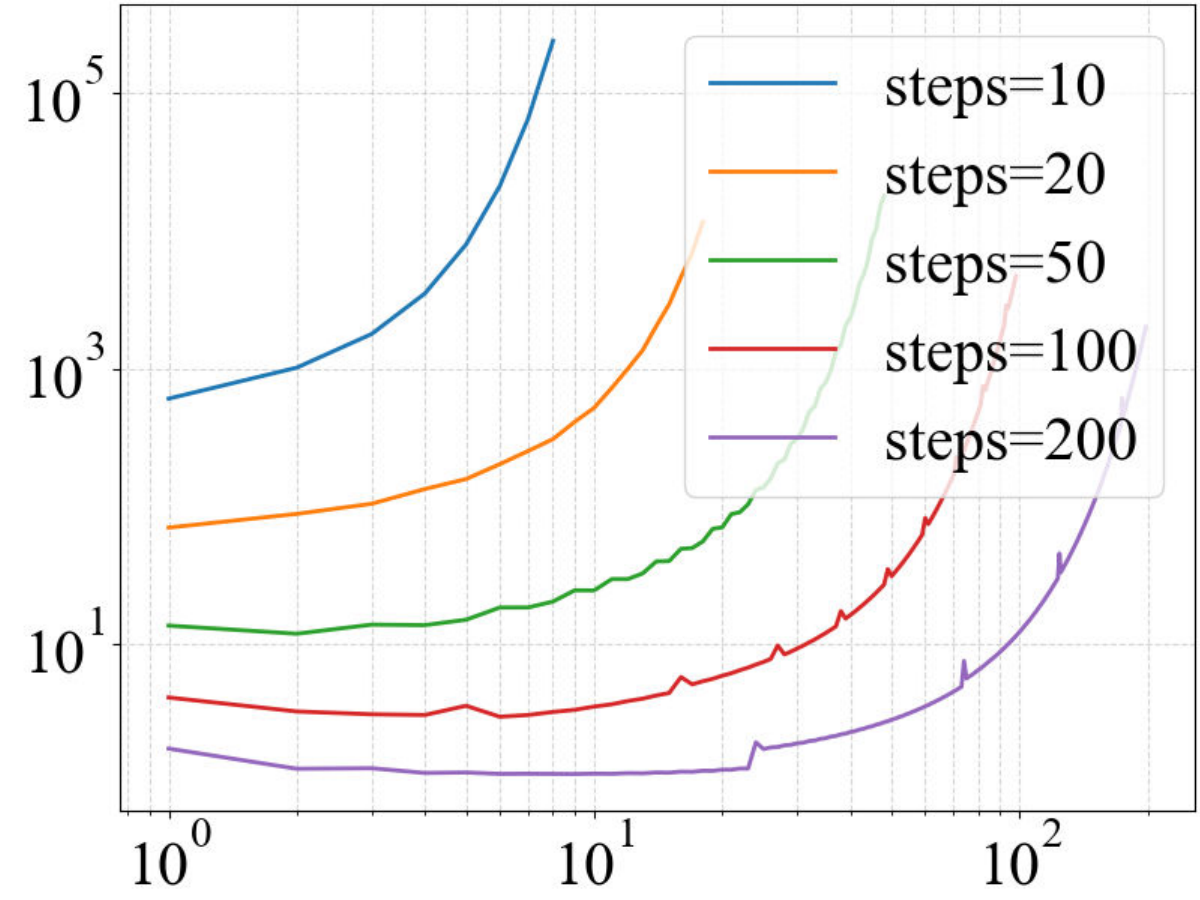}
    \end{subfigure}
    \begin{subfigure}{0.49\linewidth}
        \includegraphics[width=1.0\linewidth]{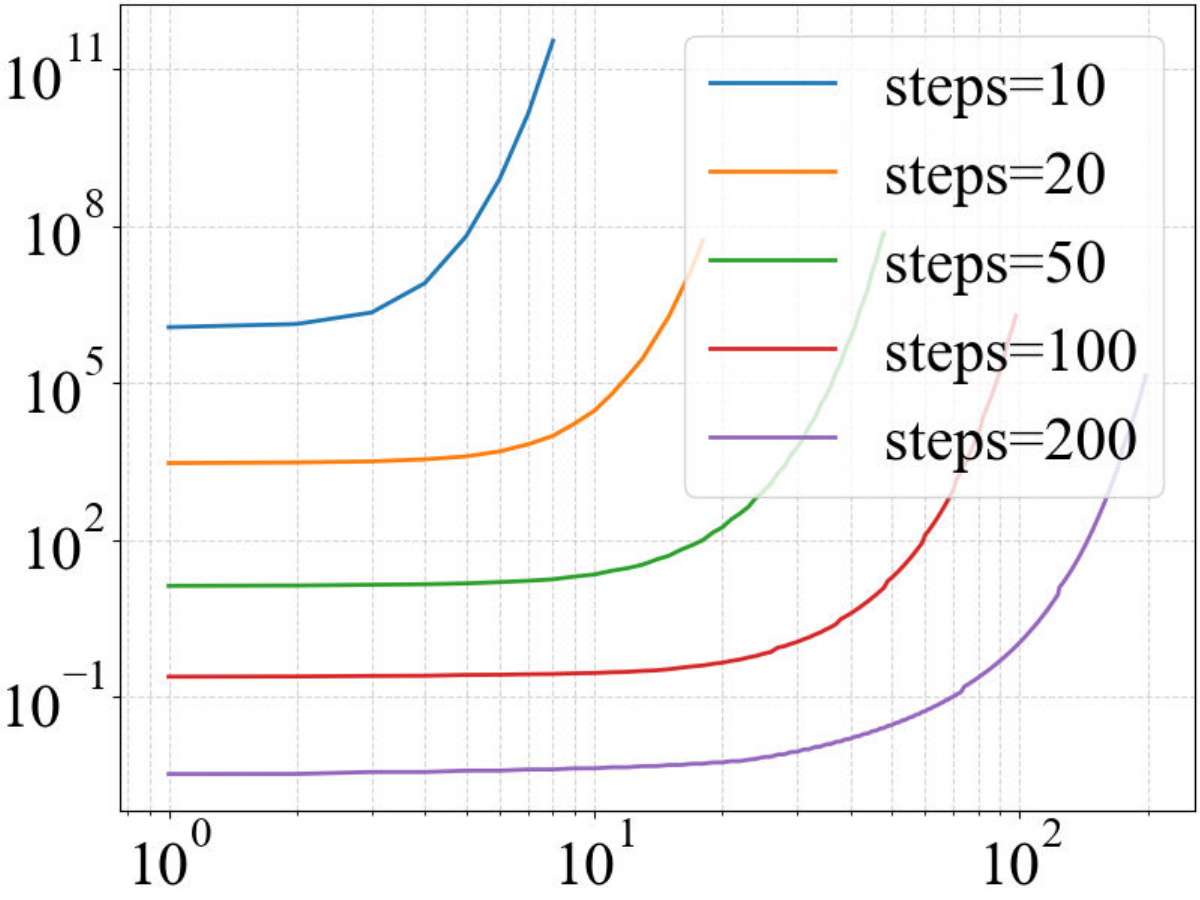}
    \end{subfigure}
    \begin{subfigure}{0.49\linewidth}
        \includegraphics[width=1.0\linewidth]{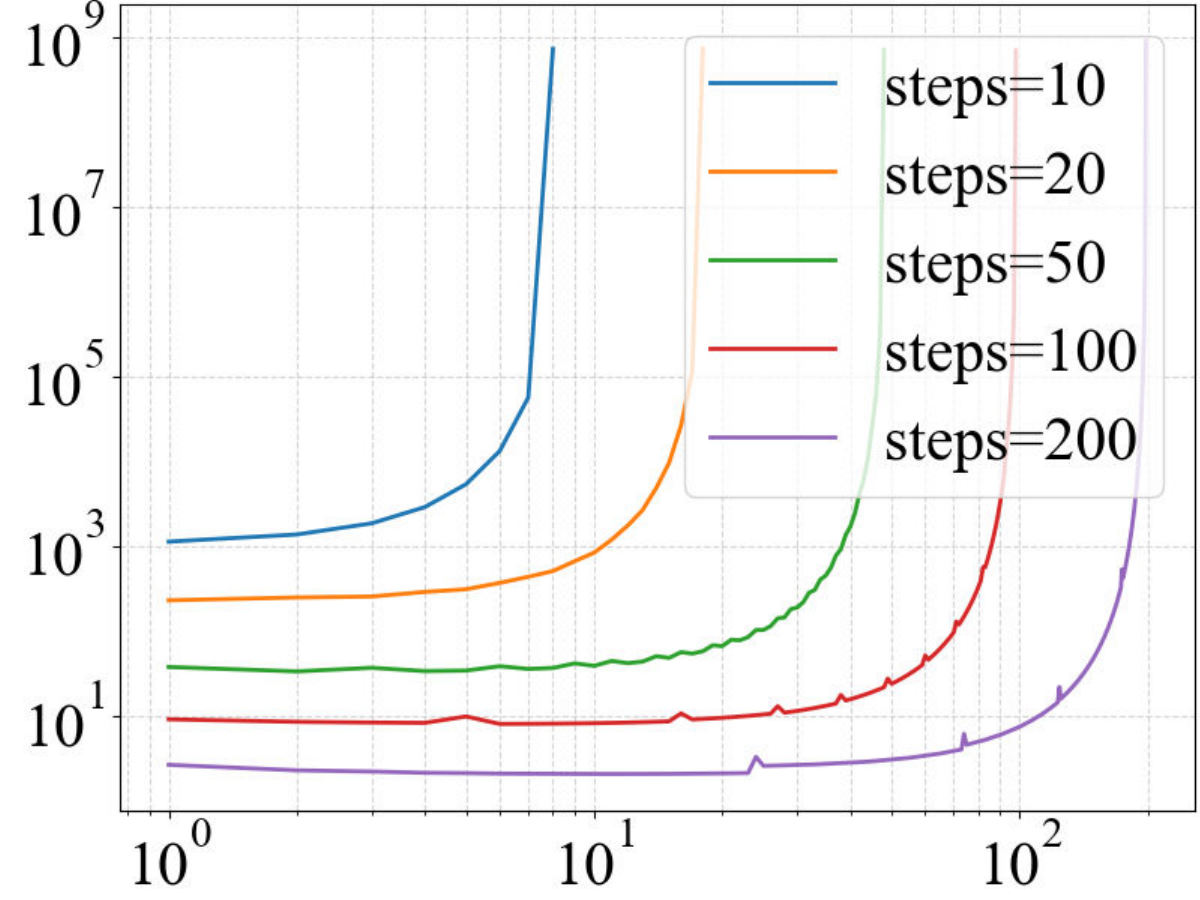}
    \end{subfigure}
    \begin{subfigure}{0.49\linewidth}
        \includegraphics[width=1.0\linewidth]{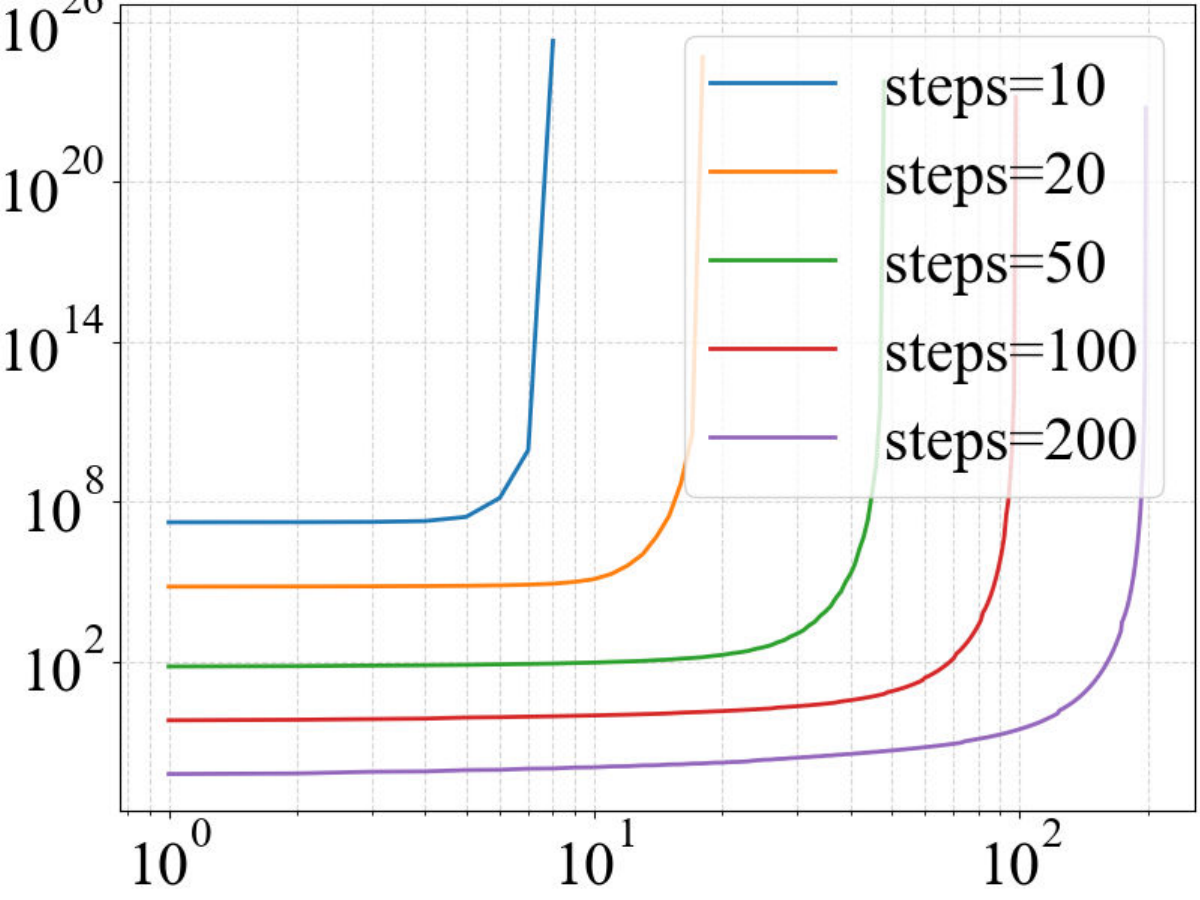}
    \end{subfigure}
    \begin{subfigure}{0.49\linewidth}
        \includegraphics[width=1.0\linewidth]{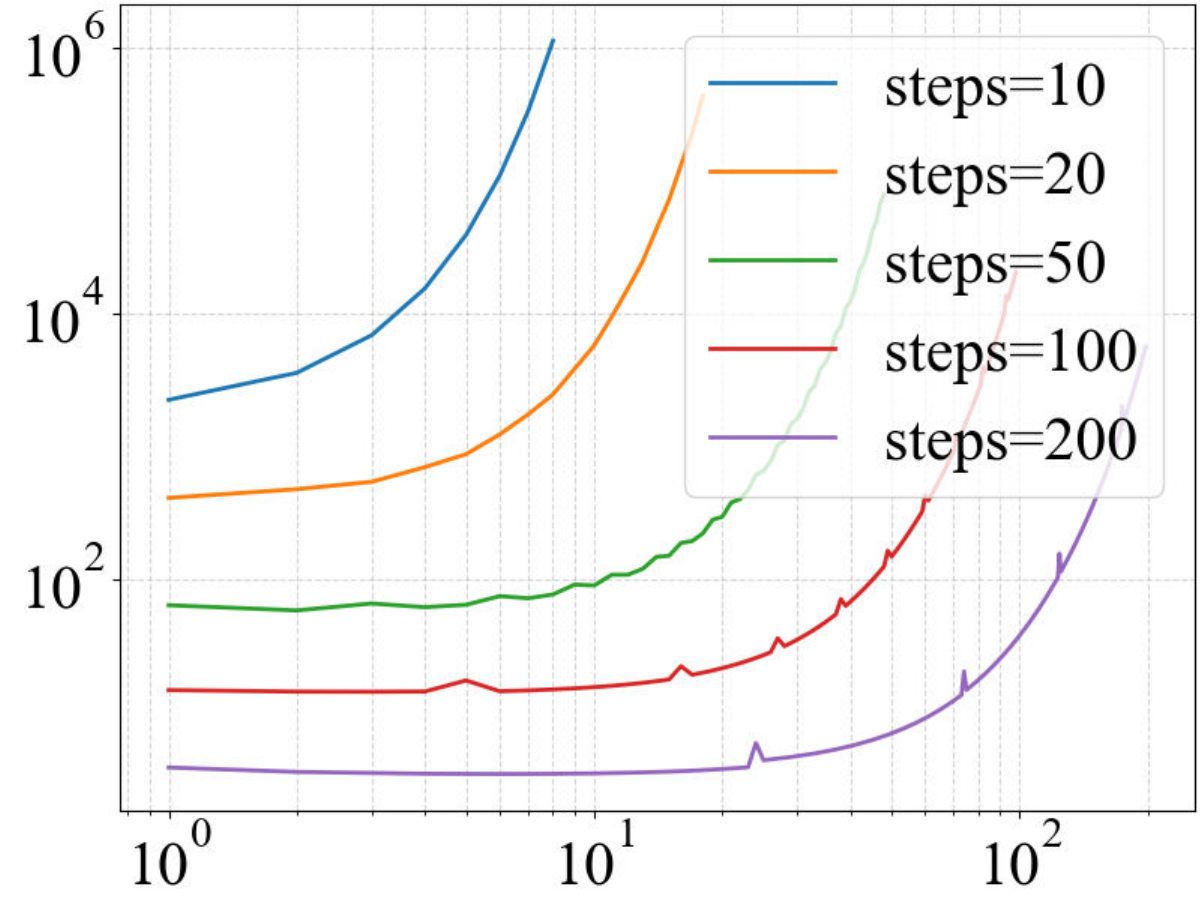}
        \caption{$\|\mathbf{\Phi}(T,t)\|$}
    \end{subfigure}
    \begin{subfigure}{0.49\linewidth}
        \includegraphics[width=1.0\linewidth]{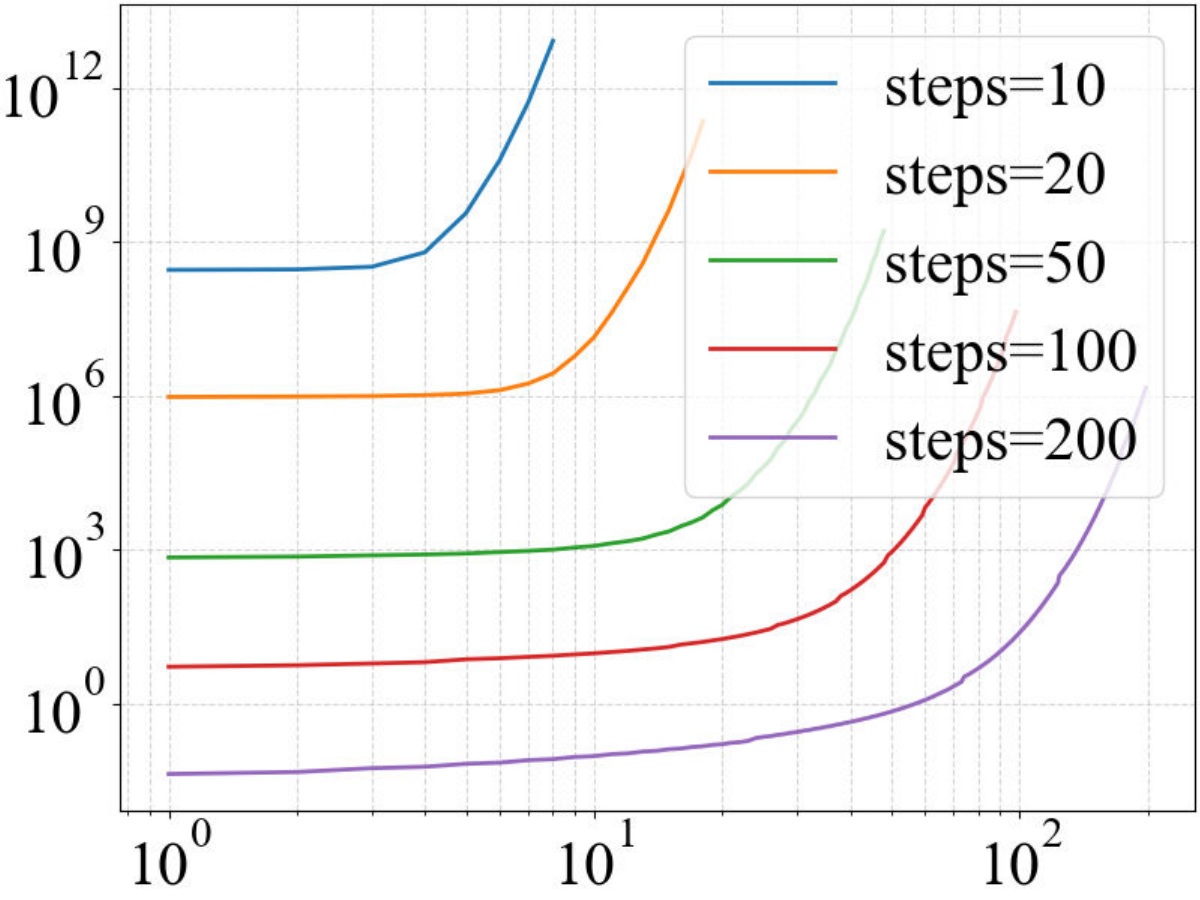}
        \caption{$\lambda^2(\mathbf{\Sigma}_T)$}
    \end{subfigure}
    \caption{Parameter trends across sampling steps, where each row corresponds to a different scheduling method: squared linear (first row), squared cos (second row), linear (third row).}
    \label{fig:phi_sigma}
\end{figure}

As shown in Fig.~\ref{fig:phi_sigma}, the two parameters exhibit a highly consistent growth pattern: they remain relatively stable in the early sampling stages and then undergo a transition to exponential growth later. Based on these observations, we conclude that Asm.~\ref{asm:linear_growth} is well-founded.

If Assumption~\ref{asm:linear_growth} holds, the ciphertext distributions for two different plaintexts are approximately Gaussian with the same covariance matrix \(\boldsymbol{\Sigma}_T\) but different means \(\mathbf{x}_T^{(0,0)}\) and \(\mathbf{x}_T^{(0,1)}\). In this case, the squared Mahalanobis distance between the two distributions is given in Eq.~\ref{maha_dist}. Consequently, as proved in Appendix~\ref{prof_1}, $d_{\text{CT}}^2$ satisfies the following upper bound:
\[
d_{\text{CT}}^2 \leq \frac{36\dim(\mathbf{x}_0)}{\lambda_{\min}},
\]
where $\lambda_{\min}$ denotes the smallest eigenvalue of $\boldsymbol{\Sigma}_T$.

Thus, by estimating $\lambda_{\min}(\boldsymbol{\Sigma}_T)$ from linearized error propagation, we can compute a concrete upper bound on the distinguishability of the two ciphertext distributions. Meanwhile, according to Eq.~\eqref{sigma_sat}, the variance of the injected noise $\sigma_t^2$ directly influences $\boldsymbol{\Sigma}_T$: increasing $\sigma_t$ enlarges $\lambda_{\min}(\boldsymbol{\Sigma}_T)$ quadratically, thereby reducing the adversary's advantage at a quadratic rate. This provides an additional tunable parameter for security.

\textbf{Computational Security \& Practical Implications:}
We report the wall-clock time for a full encryption-decryption cycle in Tab.~\ref{enc_dec_time}. Combining this with the theoretical upper bound for the adversary's IND-CPA advantage above, we estimate the computational effort required for a successful attack through the equation below:
\begin{equation}
    T = \frac{1}{\mathsf{Adv}^{\mathrm{IND-CPA}}_{\mathcal{A},\Pi}} \times \frac{t_{C}}{2}
\end{equation}
where $t_C$ is the duration of one encryption–decryption cycle. An adversary mounting an attack would need to spend at least $t_C/2$ per query to the encryption oracle.

Although experiments show that under our given computational constraints, an attack would require at least $3\times 10^6 \text{h}\approx \mathbf{342.47} \text{years}$, from a long-term perspective, the system presents a latent risk if an adversary's computational resources are sufficiently advanced. This allows us to derive practical guidelines for setting access frequency limits and key refresh policies to maintain security in deployed scenarios.

\begin{table}
    \centering
    \begin{tabular}{cccccc}
        \toprule
        \multirow{2}{*}{\textbf{Steps}} & \multicolumn{5}{c}{Models \& Methods}\\
        & Ours & O-BELM\cite{wang2024belm} & DDIM\cite{ho2020denoising} &\cite{chen2024privacy} &\cite{he2025private1}\\
        \midrule
        10 & \textbf{1.01} & 0.85 & 0.85 & 14.82 & 2030.15\\
        20 & \textbf{1.85} & 1.21 & 1.21 & 31.95 & 3220.91\\
        50 & \textbf{4.35} & 2.48 & 2.47 & 70.92 & 8352.14\\
        100 & \textbf{8.58} & 4.71 & 4.70 & 137.66 & $>$10000\\
        200 & \textbf{16.92} & 9.05 & 9.03 & 262.10 & $>$10000\\
        \bottomrule
    \end{tabular}
    \caption{Inference Time Comparison (s)}
    \label{enc_dec_time2}
\end{table}

Moreover, we compare the total runtime of a complete encryption-decryption cycle against several baselines: non-encrypted direct reconstruction (using both standard DDIM and O-BELM) and representative privacy-preserving diffusion model approaches. The results are summarized in Table~\ref{enc_dec_time2}.

It shows that our scheme incurs higher time overhead than standard DDIM and O-BELM. The main bottleneck is the generation of the key noise $\delta_t$; as shown in~\cite{wang2024belm}, the time cost of O‑BELM itself is similar to that of DDIM. However, compared to privacy‑preserving diffusion models that rely on homomorphic encryption or secure multi‑party computation, our approach offers a significant speedup. Those cryptographic primitives introduce high computational and communication overhead, often making them impractical for real‑time applications.

\section{Discussion \& Implementation Details}
\subsection{Avoiding Side-Channel Analysis}
The precise inversion in O-BELM relies on two consecutive latents, i.e., $\mathbf{x}_{i-1}$ and $\mathbf{x}_i$, which could open an opportunity for side-channel analysis aimed at inferring information from their relationship. Inspired by classical block-cipher designs, we introduce a whitening step, thereby effectively hardening the scheme against such side-channel attacks.

The purpose of whitening is to transform $\mathbf{x}_{i-1}$ using a simple method so that its distribution matches that of $\mathbf{x}_i$. An existing approach is to use the PRNG to generate noise $\delta_W$ satisfying:
\begin{equation}
    \delta_W\sim\mathcal{N}(\mu_{\mathbf{x}_i} - \mu_{\mathbf{x}_{i-1}}, \sigma^2_{\mathbf{x}_i} - \sigma^2_{\mathbf{x}_{i-1}})
\end{equation}
such that the addition of $\delta_W$ to $\mathbf{x}_{i-1}$ results in the same mean and variance as $\mathbf{x}_i$ while ensuring sufficient masking of the noise. We also propose an additional experiment to verify that the implementation of whitening does not significantly degrade the reconstruction or decrypted image, as is displayed in Tab.~\ref{tab:whitening} and Fig.~\ref{fig:whitening}. In the additional experiment, we set $\sigma=1.0$ and sampling steps $T=20$ as a typical setting for application.

\begin{table}
    \centering
    \begin{tabular}{lccc}
    \toprule
    Whitening & $\text{PSNR/dB}$ & $\text{SSIM}$ & $\text{MSE}$\\
    \midrule
    with whitening & 27.52 & 0.8321 & $2.8\times 10^{-3}$\\
    w/o whitening & 27.82 & 0.8620 & $2.5 \times 10^{-3}$\\
    \bottomrule
    \end{tabular}
    \caption{Comparison experiment of whitening}
    \label{tab:whitening}
\end{table}

\begin{figure}
    \centering
    \begin{subfigure}{0.49\linewidth}
        \includegraphics[width=\linewidth]{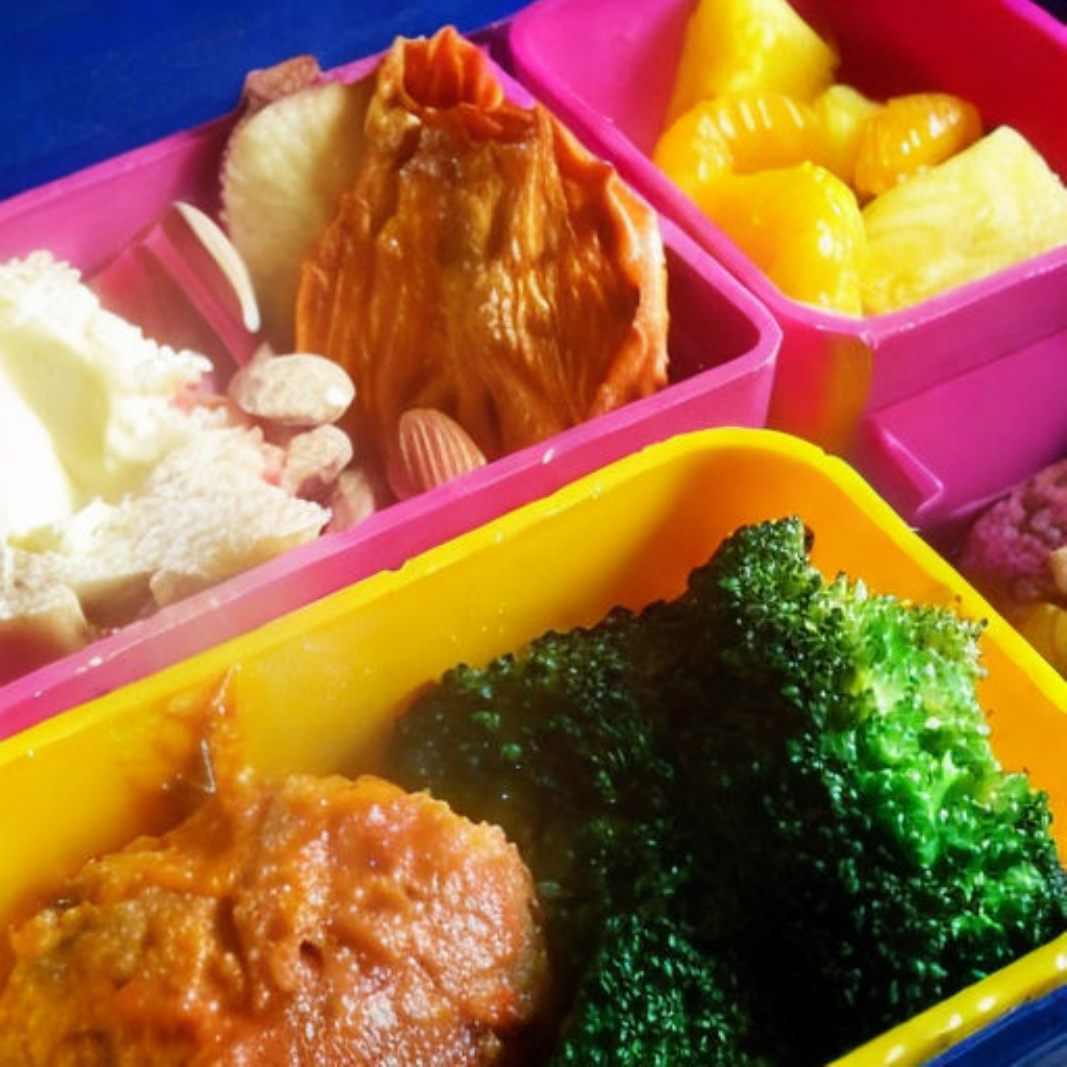}
        \caption{with whitening}
    \end{subfigure}
    \begin{subfigure}{0.49\linewidth}
        \includegraphics[width=\linewidth]{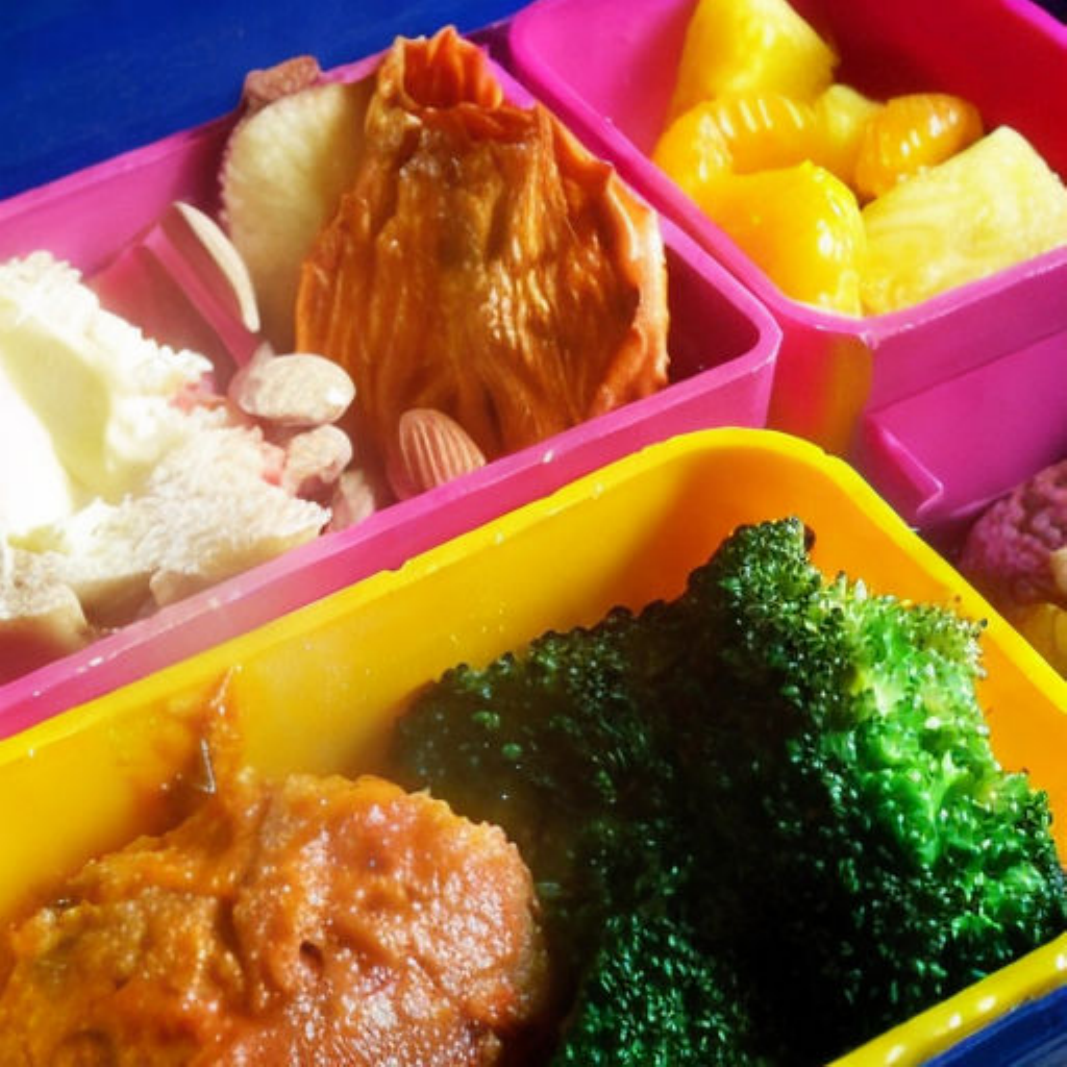}
        \caption{w/o whitening}
    \end{subfigure}
    \caption{Visual Quality of the whitening experiment}
    \label{fig:whitening}
\end{figure}

However, it is necessary to securely transfer the mean and variance of $\mathbf{x}_{i-1}$ for this process. Inspired by existing steganography algorithms, we derive a basic method as described in the Alg.~\ref{alg:whitening}, which applies perturbations to the whitened latent $\hat{\mathbf{x}}_{i-1}$ so that the encrypted mean and variance information can be embedded.

\begin{algorithm}[tb]
    \caption{Whitening with Secret Embedding}
    \label{alg:whitening}
    \textbf{Input:} latents $\mathbf{x}_{i-1},\mathbf{x}_i$, key $k$, step $i$, seed $n$ \\
    \textbf{Output:} whitened latent $\hat{\mathbf{x}}_{i-1}$
    \begin{algorithmic}[1]
        \STATE Compute mean and variance of $\mathbf{x}_{i-1},\mathbf{x}_i$.
        \STATE Generate $\delta_W \sim \mathcal{N}(\mu_{\mathbf{x}_i}-\mu_{\mathbf{x}_{i-1}},\ \sigma^2_{\mathbf{x}_i}-\sigma^2_{\mathbf{x}_{i-1}})$ using $\text{PRNG}(k,i,n)$.
        \STATE $\hat{\mathbf{x}}_{i-1} \leftarrow \mathbf{x}_{i-1} + \delta_W$.
        \STATE Generate a 128‑bit mask $B \leftarrow \text{PRNG}(k,i,n)$.
        \STATE Encrypt the 128‑bit parameter $M$ as $E = M \oplus B$, embedding $E$ into the first 128 entries of $\hat{\mathbf{x}}_{i-1}$.
        \STATE For each bit $B[j]$, if $B[j]=0$ then flip the sign of the $j$-th entry of $\hat{\mathbf{x}}_{i-1}$; otherwise leave it unchanged.
        \STATE \textbf{return} $\hat{\mathbf{x}}_{i-1}$
    \end{algorithmic}
\end{algorithm}

\subsection{Error Propagation on Sampling Process}
\label{err_prop_samp}
Similar to the security analysis, we examine the cross-model robustness from the perspective of Error Propagation Dynamics. This is because robustness across different models can also be modeled as error induced by the discrepancy in model-predicted noise. However, unlike the key-induced error, the model-prediction errors are introduced during the sampling/decryption process. Consequently, the behavior of model-prediction errors may differ from the analysis presented in Sec .~\ref {sec_proof}.

Our error propagation analysis again starts from Eqs.~\eqref{eq:key_inv} and~\eqref{eq:key_samp}, but here we modify the model prediction noise term $\varepsilon_\theta$. Let $\theta^{(E)}$ and $\theta^{(D)}$ denote the parameters of the encryption model and the decryption model, respectively. By substituting Eq.~\eqref{eq:key_inv} into Eq.~\eqref{eq:key_samp}, we can get the model-prediction error below:
\begin{equation}
    \label{model_predict_err}
    \mathbf{e}^{(D)}_{i-1}=a_i\mathbf{e}^{(D)}_{i+1} + b_i\mathbf{e}^{(D)}_{i} + c_i[\boldsymbol{\varepsilon}^{(E)}_{\theta}(\mathbf{x}_i,i)-\boldsymbol{\varepsilon}^{(D)}_{\theta}(\mathbf{x}_i^{(D)},i)]
\end{equation}
Consistent with Eq.~\eqref{error_prop}, we denote $\mathbf{e}^{(D)}_{i}$ as the error at step $i$, i.e., $\mathbf{e}^{(D)}_{i} = \mathbf{x}_i^{(D)} - \mathbf{x}_i$, where $\mathbf{x}_i^{(D)}$ is the latent at step $i$ utilizing the decryption model $\theta^{(D)}$. Without losing generality, we add a "mixed prediction noise" for linear approximation, i.e., Eq.~\eqref{mix_pre}.
\begin{align}
    \label{mix_pre}
    &\boldsymbol{\varepsilon}^{(E)}_{\theta}(\mathbf{x}_i,i)-\boldsymbol{\varepsilon}^{(D)}_{\theta}(\mathbf{x}_i,i) + \boldsymbol{\varepsilon}^{(D)}_{\theta}(\mathbf{x}_i,i)-\boldsymbol{\varepsilon}^{(D)}_{\theta}(\mathbf{x}_i^{(D)},i)\\
    \label{lin_approx}
    &\boldsymbol{\varepsilon}^{(D)}_{\theta}(\mathbf{x}_i,i)-\boldsymbol{\varepsilon}^{(D)}_{\theta}(\mathbf{x}_i^{(D)},i) \approx \mathbf{E}_i^{(D)}(\mathbf{x}_i^{(D)} - \mathbf{x}_i)=\mathbf{E}_i^{(D)}\mathbf{e}^{(D)}_{i}
\end{align}
Thus, similar to Eq.~\eqref{lin_error_prop}, we can obtain a second-order linear system through a linear approximation of the decryption model (Eq.~\eqref{lin_approx}). However, the driving term is now replaced by $\hat{\boldsymbol{\varepsilon}}_{ED}^{(i)}$ given in Eq.~\eqref{model_err}. This term represents the difference between the noises predicted by the encryption model and the decryption model for the same latent.

\begin{align}
\label{lin_model_error_prop}
    \mathbf{e}^{(D)}_{i-1}&=a_i\mathbf{e}^{(D)}_{i+1} + (b_i+c_i\mathbf{E}_i^{(D)})\mathbf{e}^{(D)}_{i} + c_i\hat{\boldsymbol{\varepsilon}}_{ED}^{(i)}\\
\label{model_err}
\hat{\boldsymbol{\varepsilon}}_{ED}^{(i)} &= \boldsymbol{\varepsilon}^{(E)}_{\theta}(\mathbf{x}_i,i)-\boldsymbol{\varepsilon}^{(D)}_{\theta}(\mathbf{x}_i,i)
\end{align}

Consequently, the error introduced by model discrepancy can be analyzed theoretically using a method similar to Sec.~\ref{sec_proof}. However, it should be noted that the final error expression is $\mathbf{e}_0^{(D)}$, since the error is introduced during the decryption/sampling process. 

\subsection{From Zero-stability to Cross-Model Robustness}
\label{cross_model_analysis}
Since the reversibility between encryption and decryption, it is intuitively expected that the impact of noise error decreases during the decryption process under Asm.~\ref{asm:linear_growth}. This implies that if the initial noise is identical, the disturbance introduced by the sampling process is often bounded. Notably, this intuition is formally supported by the properties of O-BELM below.

\begin{definition}[Zero-stability]
\label{def:zero_stability}
    A system described in Eq.~\eqref{lin_model_error_prop} is said to be zero-stable if, when its driving term is zero, i.e., $c_i=0$, there exists a constant $K$ such that for any two sequences $\{\mathbf{x}_i\}_{i=0}^N$ and $\{\mathbf{y}_i\}_{i=0}^N$ generated by the same sampling scheme but with different initial values, the following inequality holds:
    \begin{equation}
        \label{zero-stable}
        \|\mathbf{x}_i-\mathbf{y}_i\|\leq K \max_{i<j\leq N} \|\mathbf{x}_j-\mathbf{y}_j\|
    \end{equation}
\end{definition}

\begin{theorem}\cite{wang2024belm}
\label{zero_stab}
    O-BELM is zero-stable.
\end{theorem}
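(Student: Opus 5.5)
The plan is to verify the root condition for the homogeneous part of the O-BELM sampling recurrence and then bound the solution through its companion matrix. Setting the driving term to zero, i.e.\ $c_i=0$ in Eq.~\eqref{lin_model_error_prop}, the difference $\mathbf{d}_i=\mathbf{x}_i-\mathbf{y}_i$ of two trajectories obeys the linear two-step recursion
\begin{equation*}
\mathbf{d}_{i-1}=a_i\mathbf{d}_{i+1}+b_i\mathbf{d}_i .
\end{equation*}
The $\boldsymbol{\varepsilon}_\theta$ contribution is removed because it is exactly the part multiplied by $c_i$.

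I will stack this into a first-order system with companion matrix
\begin{equation*}
\mathbf{S}_i=\begin{bmatrix} b_i & a_i\\ 1 & 0\end{bmatrix}
\end{equation*}
acting on $(\mathbf{d}_i,\mathbf{d}_{i+1})$. Then $(\mathbf{d}_{i-1},\mathbf{d}_i)=\mathbf{S}_i(\mathbf{d}_i,\mathbf{d}_{i+1})$, and
\begin{equation*}
\|\mathbf{d}_i\|\le \Bigl\|\prod_{j}\mathbf{S}_j\Bigr\|\max_{i<j\le N}\|\mathbf{d}_j\| .
\end{equation*}
So it suffices to show that the products $\prod_j\mathbf{S}_j$ are uniformly bounded; $K$ is then that bound.

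Next I will check the classical root condition using the explicit coefficients of Eq.~\eqref{param_val}. Ignoring the $\gamma$ ratios, which tend to $1$ as the step size shrinks, the characteristic polynomial is
\begin{equation*}
\zeta^2-b_i\zeta-a_i .
\end{equation*}
With $a_i\approx h_i^2/h_{i+1}^2$ and $b_i\approx 1-h_i^2/h_{i+1}^2$, we get $a_i+b_i\approx 1$. Hence $\zeta=1$ is a root, and the other root is $-a_i\approx -h_i^2/h_{i+1}^2$. For a quasi-uniform grid ($h_i/h_{i+1}\to r\le 1$, or bounded variation of the ratio), the parasitic root lies in the closed unit disc, with any root on the circle simple.

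I will then absorb the $\gamma$ factors multiplicatively. This is done by rescaling $\mathbf{d}_i\mapsto \mathbf{d}_i/\gamma_i$, which turns the coefficients exactly into the $\gamma$-free form. The constant $K$ thus picks up only $\max\gamma/\min\gamma$, which is finite on the schedule.

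The main obstacle is that the coefficients vary with $i$, so the root condition at each step does not by itself bound the product of the $\mathbf{S}_j$. My first attempt is to diagonalize each $\mathbf{S}_j$ in the eigenbasis of $(1,1)$ and $(-a_j,1)$. The change-of-basis matrices vary with bounded total variation $\sum_j|a_{j+1}-a_j|<\infty$, because the schedule is smooth in $\bar\sigma$. A standard discrete Grönwall argument (as in the zero-stability theory of variable-step linear multistep methods) then bounds the product by $\exp\bigl(C\sum_j|a_{j+1}-a_j|\bigr)$. If this is delicate, I will fall back on the paper's citation of~\cite{wang2024belm}, where exactly this variable-step root-condition argument is carried out.
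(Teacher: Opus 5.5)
The paper gives no argument for this statement; it imports it from \cite{wang2024belm}. Your attempt is therefore an independent route, and its skeleton is sound. The rescaling $\mathbf d_i\mapsto\mathbf d_i/\gamma_i$ is exact and gives $a_i+b_i=1$, so the roots are $1$ and $-h_i^2/h_{i+1}^2$. Zero-stability then does reduce to uniform boundedness of products of the $\mathbf S_j$. (Incidentally, if the maximum in Definition~\ref{def:zero_stability} is read literally over all $i<j\le N$, the inequality for $i\le N-2$ already follows from a single step with $K=\sup_i(|a_i|+|b_i|)$. Your product bound targets the standard version, stated in terms of the two starting values, which is the one with content.)

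The gap is in the product bound. Your estimate $\exp\bigl(C\sum_j|a_{j+1}-a_j|\bigr)$ uses $\|\mathrm{diag}(1,-a_j)\|=1$, i.e.\ $h_j\le h_{j+1}$ at \emph{every} step. Neither $h_i/h_{i+1}\to r\le 1$ nor bounded variation of the ratios implies that. There is also no slack to absorb violations. On grids refining a smooth schedule $h_j/h_{j+1}\to 1$, so the parasitic root tends to $-1$ on the unit circle (leapfrog-type weak stability). Each step with $h_j>h_{j+1}$ then contributes a factor $a_j>1$ that your bound drops. Such steps do occur on standard schedules near $t=0$: there $1-\bar\alpha_t$ is roughly linear in $t$, so $\bar\sigma$ grows like $\sqrt t$ and is concave. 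As written, your argument covers only grids with nondecreasing $h_j$. Otherwise you must also control $\prod_j\max(1,a_j)$, e.g.\ by $\exp\bigl(2\sum_j|\log h_{j+1}-\log h_j|\bigr)$.

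A cleaner repair uses the exact structure instead of perturbation theory. Since $a_i+b_i=1$, the vector $(1,1)$ is a common eigenvector of all $\mathbf S_j$. For the rescaled differences the recursion reads $\bar{\mathbf d}_{i-1}-\bar{\mathbf d}_i=-\frac{h_i^2}{h_{i+1}^2}(\bar{\mathbf d}_i-\bar{\mathbf d}_{i+1})$. The product of ratios telescopes, giving $\bar{\mathbf d}_{i-1}-\bar{\mathbf d}_i=(-1)^{N-i}\frac{h_i^2}{h_N^2}(\bar{\mathbf d}_{N-1}-\bar{\mathbf d}_N)$. Summing, $\bar{\mathbf d}_{i-1}=\bar{\mathbf d}_{N-1}+s_i(\bar{\mathbf d}_{N-1}-\bar{\mathbf d}_N)$ with $s_i=\sum_{j=i}^{N-1}(-1)^{N-j}h_j^2/h_N^2$.

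For nondecreasing $h_j$ this alternating sum satisfies $|s_i|\le h_{N-1}^2/h_N^2\le 1$. That gives $K=3\max_i\gamma_i/\min_i\gamma_i$ with no bounded-variation hypothesis at all. For general grids, Abel summation gives $|s_i|\le\max_j h_j^2/h_N^2$ plus the total variation of the sequence $(h_j^2/h_N^2)_j$. This also makes explicit the grid condition that the unconditional statement tacitly needs: it fails if $h_N\ll\max_j h_j$ is allowed. The paper's one-line citation does not spell this condition out.
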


In Sec.~\ref{err_prop_samp}, we have shown that the error introduced by the model discrepancy can be analyzed theoretically using a similar approach as in Sec.~\ref{sec_proof}. Likewise, the total error can also be written as the sum of noise injected at each step with a list of specific coefficients $\mathbf{\Psi}_t^{(D)}$ as Eq.~\eqref{total_error}, which is shown below:
\begin{equation}
    \label{total_err_samp}
    \mathbf{e}_0^{(D)} = \sum_{t=0}^{T-1}\mathbf{\Psi}_t^{(D)}\hat{\boldsymbol{\varepsilon}}_{ED}^{(t)}
\end{equation}

To simplify the analysis, we assume that the effect of this noise becomes negligible in the subsequent predictions of the diffusion model, i.e., there exists a negligible $\varepsilon >0$ such that $\|c_i\mathbf{E}_i^{(D)}\|<\varepsilon$. Thus, we obtain the following proposition:
\begin{proposition}
\label{robust_bound}
    Under the above assumption, the propagation of the model-prediction error noise remains bounded.
\end{proposition}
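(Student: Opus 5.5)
The plan is to treat the decryption-side error recursion Eq.~\eqref{lin_model_error_prop} as a perturbation of the driving-free O-BELM recursion, whose behaviour is controlled by zero-stability (Thm.~\ref{zero_stab}). Then the driving term $c_i\hat{\boldsymbol{\varepsilon}}_{ED}^{(i)}$ can be summed through a discrete variation-of-constants (Duhamel) formula, which is exactly Eq.~\eqref{total_err_samp}. Concretely, I would write Eq.~\eqref{lin_model_error_prop} in companion form
\[
\begin{bmatrix}\mathbf{e}^{(D)}_{i-1}\\ \mathbf{e}^{(D)}_{i}\end{bmatrix}
=\bigl(\mathbf{M}^{(D)}_{0,i}+\mathbf{R}_i\bigr)\begin{bmatrix}\mathbf{e}^{(D)}_{i}\\ \mathbf{e}^{(D)}_{i+1}\end{bmatrix}
+c_i\begin{bmatrix}\hat{\boldsymbol{\varepsilon}}_{ED}^{(i)}\\ \mathbf{0}\end{bmatrix}.
\]
Here $\mathbf{M}^{(D)}_{0,i}=\begin{bmatrix} b_i\mathbf{I} & a_i\mathbf{I}\\ \mathbf{I}&\mathbf{0}\end{bmatrix}$ is the pure O-BELM transition, and $\mathbf{R}_i$ carries the Jacobian block $c_i\mathbf{E}^{(D)}_i$, so that $\|\mathbf{R}_i\|<\varepsilon$ by the standing assumption.

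\textbf{Step 1: bounded homogeneous transitions.} Zero-stability (Def.~\ref{def:zero_stability}) says that differences of two driving-free trajectories are bounded by $K$ times the maximal initial difference. Since those differences obey exactly the homogeneous recursion, every product $\mathbf{\Phi}_0^{(D)}(t,s)=\prod_{i=t+1}^{s}\mathbf{M}^{(D)}_{0,i}$ satisfies $\|\mathbf{\Phi}_0^{(D)}(t,s)\|\le K'$ uniformly in $t<s\le T$. Here $K'$ is a constant depending only on $K$ and the norm equivalence on the two-block state.

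\textbf{Step 2: absorb the Jacobian perturbation.} Expanding $\prod_i(\mathbf{M}^{(D)}_{0,i}+\mathbf{R}_i)$ and applying a discrete Gronwall argument gives $\|\mathbf{\Phi}^{(D)}(t,s)\|\le K'(1+K'\varepsilon)^{s-t}\le K'e^{K'\varepsilon T}$. Because $\varepsilon$ is negligible, this is $K'(1+o(1))$ for any polynomial number of steps $T$.

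\textbf{Step 3: assemble the bound.} From Eq.~\eqref{total_err_samp} with $\mathbf{\Psi}_t^{(D)}=\mathbf{P}\mathbf{\Phi}^{(D)}(0,t)\mathbf{N}^{(D)}_t$, the triangle inequality (as in Cor.~\ref{cor:error_prop_bound}) yields
\[
\|\mathbf{e}_0^{(D)}\|\le K'e^{K'\varepsilon T}\,\max_t|c_t|\sum_{t=0}^{T-1}\|\hat{\boldsymbol{\varepsilon}}_{ED}^{(t)}\|.
\]
This is linear in the per-step model discrepancy with no exponential amplification, which is the claimed boundedness. It also matches the observation that nearer model versions (smaller $\hat{\boldsymbol{\varepsilon}}_{ED}$) reconstruct better.

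\textbf{Main obstacle.} The hard step is Step 1. Definition~\ref{def:zero_stability} is stated as a bound on trajectories, and it must be converted into a uniform operator-norm bound on the transition products in the backward (sampling) direction. This has to be squared with Asm.~\ref{asm:linear_growth}, which asserts exponential growth of $\mathbf{\Phi}$ in the inversion direction. My first attempt would be to use the exact inverse relations Eq.~\eqref{coef_con}, which give $\mathbf{M}^{(D)}_{0,i}=(\mathbf{M}'_{0,i})^{-1}$ for the driving-free parts, together with linearity of the homogeneous system: applying the zero-stability inequality to trajectories started from each basis direction of the initial state would give the operator bound. The remaining subtlety is the factor $e^{K'\varepsilon T}$, which needs $\varepsilon T$ bounded; I would state that requirement explicitly as part of the assumption.
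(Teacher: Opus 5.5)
Your proposal is correct and follows essentially the same route as the paper. Both decompose $\mathbf{e}_0^{(D)}$ into single-step contributions via Eq.~\eqref{total_err_samp}, bound each contribution using the zero-stability of O-BELM (Thm.~\ref{zero_stab}), and sum with the triangle inequality to obtain a bound linear in the per-step discrepancies $\|\hat{\boldsymbol{\varepsilon}}_{ED}^{(t)}\|$. The paper simply drops the Jacobian block $c_i\mathbf{E}_i^{(D)}$ by the negligibility assumption and uses step-dependent constants $K_i$; your discrete Gronwall step accounts for that block explicitly, at the cost of the factor $e^{K'\varepsilon T}$ and the stated requirement that $\varepsilon T$ stay bounded.
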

\begin{proof}
Under this assumption, each term in the summation of Eq.~\eqref{total_err_samp} can be regarded as the scenario where noise is introduced only at step $i$. Consequently, the effect of the error can be viewed as the combined contribution of $N$ independent sampling processes with sampling lengths of $1,2,\cdots, N$ steps, respectively. Each sampling process is initialized with an error (Eq.~\eqref{samp_err}).
\begin{equation}
    \label{samp_err}
    \|\mathbf{x}_i-\mathbf{y}_i\|=\|c_i\hat{\boldsymbol{\varepsilon}}_{ED}^{(i)}\|
\end{equation}

According to the zero-stability of O-BELM (Thm.~\ref{zero_stab}), we can infer that there exists a constant $K_i$ to bound the contribution of i-th model-prediction error, i.e.
\begin{equation}
    \label{err_bnd_samp}
    \|\mathbf{\Psi}_i^{(D)}\hat{\boldsymbol{\varepsilon}}_{ED}^{(i)}\| \leq c_iK_i\|\hat{\boldsymbol{\varepsilon}}_{ED}^{(i)}\|
\end{equation}

Substituting Eq.~\eqref{err_bnd_samp} into Eq.~\eqref{total_err_samp} will result in the goal of the proof, that is:
\begin{equation}
    \|\mathbf{e}_0^{(D)}\|\leq \sum_{t=0}^{T-1} \|\mathbf{\Psi}_t^{(D)}\hat{\boldsymbol{\varepsilon}}_{ED}^{(t)}\|\leq \sum_{t=0}^{T-1} c_tK_t\|\hat{\boldsymbol{\varepsilon}}_{ED}^{(t)}\|
\end{equation}
which indicates a specific upper bound.
\end{proof}

Based on Prop.~\ref{robust_bound}, we conclude that the cross-model robustness essentially stems from the boundedness of the model-prediction error. This analysis is supported by the experimental observation in Sec.~\ref{cross_model_robust} that the reconstruction quality with the key is positively correlated with the quality without the key.

\subsection{Practical Limitations: Chosen-Ciphertext Security}
Our cross‑model robustness analysis reveals a fundamental property of the sampling process: it satisfies zero‑stability (Def.~\ref{def:zero_stability} and Thm.~\ref{zero_stab}). For the O-BELM sampling, the distance between two decryption paths should satisfy Eq.~\eqref{zero-stable}. In the context of decryption, if the ciphertext $\mathbf{x}_T^{(\delta)}$ is perturbed, the resulting error at any intermediate step $\mathbf{x}_i^{(\delta)}$ cannot exceed a constant multiple of the initial perturbation. Consequently, our scheme does not aim to provide chosen-ciphertext (CCA) security, and intermediate latents should be treated as sensitive.

\textbf{Practical attack: partial decryption.}
An adversary who gains access to $\mathbf{x}_i^{(\delta)}$ (e.g., through a side channel or memory dump) can treat it as a new ciphertext with fewer remaining steps. Since the noise has already been partially applied, the remaining steps may not provide sufficient amplification to hide the key mismatch. A concrete countermeasure is to never expose intermediate latents to untrusted parties and to run the entire decryption in a trusted environment.

\textbf{Why IND‑CPA remains unaffected.}
The IND‑CPA security of our scheme relies on the difficulty of distinguishing two ciphertexts corresponding to two different plaintexts. From the perspective of a chosen‑plaintext adversary, the only available information is the final ciphertext $\mathbf{x}_T^{(\delta)}$. Constructing a pair of plaintexts whose ciphertexts are statistically close is computationally hard because the forward (encryption) direction exhibits exponential error amplification. The zero‑stability property does not help an adversary in the CPA setting because the adversary cannot query the decryption oracle or obtain intermediate states. Hence, the IND‑CPA guarantee stands independent of the CCA weakness.

\section{Related Works}
\subsection{Securing Generative Content}
\label{Sec_Gen}

Current image protection methods for diffusion models can be broadly classified into three categories: post-hoc tracing, proactive defenses, and privacy‑preserving computation.

\textbf{Post‑hoc tracing.}
Techniques such as image watermarking~\cite{zhu2018hidden} embed invisible marks into generated content for ownership verification. Model watermarking~\cite{fernandez2023stable} fingerprints the diffusion model itself. These methods are reactive: they detect misuse after the fact but cannot prevent unauthorized reconstruction or inversion.

\textbf{Proactive defenses.}
Adversarial perturbation methods (e.g., EditShield~\cite{chen2024editshield}) disrupt specific attacks but do not offer a generic key‑controlled encryption mechanism. Other access‑control frameworks~\cite{gai2025pcdiff, Takana2025access, lei2025secure} rely on fine‑tuning the decoder or modifying the sampling process, yet they lack formal security analysis.

\textbf{Privacy‑preserving diffusion.}
A separate line of work uses homomorphic encryption (HE)~\cite{chen2024privacy, he2025private} or secure MPC~\cite{zhao2024cipherdm} to protect text prompts or model weights. These approaches assume a black‑box setting, incur huge overhead, and are designed for text‑to‑image generation. More importantly, the non‑linear operations in the U‑Net (activation functions) make HE‑based image‑to‑image (I2I) reconstruction infeasible.

\textbf{Existing encryption methods.}
Some works encrypt only the final latent $\mathbf{x}_T$~\cite{kaur2020comprehensive}, leaving the entire denoising (or inversion) pipeline unprotected. Others use DNN‑based encryption~\cite{guo2024image, huang2025new} without rigorous security reductions. None of the above addresses the white‑box scenario where model parameters are fully public, and an adversary can simply download the model and invert any obtained latent.

\subsection{Our Positioning}
\label{Our_Pos}

To overcome the limitations above, we propose a framework that embeds cryptographic control directly into the diffusion inversion process. Compared to existing methods, our framework has the following features:

\begin{itemize}
    \item \textbf{White‑box setting.} It assumes the diffusion model parameters are public; security relies only on the secret key, not on hiding the model.
    \item \textbf{Exact reversibility.} Based on the strictly invertible O‑BELM sampler (Sec.~\ref{Inv_Re}), we inject key‑dependent noise $\delta_t$ into the noise prediction term. The correct key cancels the noise exactly; an incorrect key leads to exponential amplification of error.
    \item \textbf{Provable IND‑CPA security.} In Sec.~\ref{sec_proof}, we derive an upper bound on the adversary's distinguishing advantage, which is exponentially small in the number of steps.
    \item \textbf{Lightweight computation.} Compared to HE/MPC‑based solutions, our method only adds PRNG operations and a few operations per step. Runtime is comparable to standard inversion (O‑BELM).
    \item \textbf{Designed for image‑to‑image reconstruction.} The scheme targets reconstructing the original image from a latent, serving as a building block for downstream applications such as editing.
\end{itemize}
\section{Conclusion}
We presented a key-controlled inversion framework for diffusion models that enables conditional image reconstruction without hiding model parameters. By injecting key-dependent noise into the noise prediction term of a strictly invertible sampler, our scheme guarantees exact reconstruction with the correct key and exponential error amplification otherwise. 

We analyzed its security by reducing IND-CPA attacks to a practical hypothesis test and deriving an advantage bound that depends on the accumulated error variance $\mathbf{\Sigma}_T$, thereby ensuring the existence of a security parameter $\lambda$. We also empirically validate it through experiments and further discuss cross-model robustness, where decryption quality depends primarily on the inter-model differences, and key-noise injection does not substantially amplify the performance drop inherent to cross-model reconstruction.

Our results demonstrate that error propagation in diffusion models can be turned into a powerful security asset. Future work may explore access-conditioned image editing and leverage its cross-model robustness to develop data hiding in the encrypted domain~\cite{ke2021reversible} and watermarking with client-side embedding~\cite{xiao2025preview}, further expanding its application scenarios.

\newpage

\IEEEtriggercmd{\enlargethispage{-5in}}


\bibliographystyle{IEEEtran}
\bibliography{ref}

@inproceedings{sohl2015deep,
  title={Deep unsupervised learning using nonequilibrium thermodynamics},
  author={Sohl-Dickstein, Jascha and Weiss, Eric and Maheswaranathan, Niru and Ganguli, Surya},
  booktitle={International conference on machine learning},
  pages={2256--2265},
  year={2015},
  organization={pmlr}
}

@article{jiang2024survey,
  title={A survey of multimodal controllable diffusion models},
  author={Jiang, Rui and Zheng, Guang-Cong and Li, Teng and Yang, Tian-Rui and Wang, Jing-Dong and Li, Xi},
  journal={Journal of Computer Science and Technology},
  volume={39},
  number={3},
  pages={509--541},
  year={2024},
  publisher={Springer}
}

@article{ho2020denoising,
  title={Denoising diffusion probabilistic models},
  author={Ho, Jonathan and Jain, Ajay and Abbeel, Pieter},
  journal={Advances in neural information processing systems},
  volume={33},
  pages={6840--6851},
  year={2020}
}

@article{song2020denoising,
  title={Denoising diffusion implicit models},
  author={Song, Jiaming and Meng, Chenlin and Ermon, Stefano},
  journal={arXiv preprint arXiv:2010.02502},
  year={2020}
}

@inproceedings{mokady2023null,
  title={Null-text inversion for editing real images using guided diffusion models},
  author={Mokady, Ron and Hertz, Amir and Aberman, Kfir and Pritch, Yael and Cohen-Or, Daniel},
  booktitle={Proceedings of the IEEE/CVF conference on computer vision and pattern recognition},
  pages={6038--6047},
  year={2023}
}

@inproceedings{wallace2023edict,
  title={Edict: Exact diffusion inversion via coupled transformations},
  author={Wallace, Bram and Gokul, Akash and Naik, Nikhil},
  booktitle={Proceedings of the IEEE/CVF Conference on Computer Vision and Pattern Recognition},
  pages={22532--22541},
  year={2023}
}

@inproceedings{zhang2024exact,
  title={Exact diffusion inversion via bidirectional integration approximation},
  author={Zhang, Guoqiang and Lewis, Jonathan P and Kleijn, W Bastiaan},
  booktitle={European Conference on Computer Vision},
  pages={19--36},
  year={2024},
  organization={Springer}
}

@article{wang2024belm,
  title={Belm: Bidirectional explicit linear multi-step sampler for exact inversion in diffusion models},
  author={Wang, Fangyikang and Yin, Hubery and Dong, Yue-Jiang and Zhu, Huminhao and Zhao, Hanbin and Qian, Hui and Li, Chen and others},
  journal={Advances in Neural Information Processing Systems},
  volume={37},
  pages={46118--46159},
  year={2024}
}

@article{gretton2012kernel,
  title={A kernel two-sample test},
  author={Gretton, Arthur and Borgwardt, Karsten M and Rasch, Malte J and Sch{\"o}lkopf, Bernhard and Smola, Alexander},
  journal={The journal of machine learning research},
  volume={13},
  number={1},
  pages={723--773},
  year={2012},
  publisher={JMLR. org}
}

@article{huang2025diffusion,
  title={Diffusion model-based image editing: A survey},
  author={Huang, Yi and Huang, Jiancheng and Liu, Yifan and Yan, Mingfu and Lv, Jiaxi and Liu, Jianzhuang and Xiong, Wei and Zhang, He and Cao, Liangliang and Chen, Shifeng},
  journal={IEEE Transactions on Pattern Analysis and Machine Intelligence},
  year={2025},
  publisher={IEEE}
}

@inproceedings{zhu2018hidden,
  title={Hidden: Hiding data with deep networks},
  author={Zhu, Jiren and Kaplan, Russell and Johnson, Justin and Fei-Fei, Li},
  booktitle={Proceedings of the European conference on computer vision (ECCV)},
  pages={657--672},
  year={2018}
}

@inproceedings{fernandez2023stable,
  title={The stable signature: Rooting watermarks in latent diffusion models},
  author={Fernandez, Pierre and Couairon, Guillaume and J{\'e}gou, Herv{\'e} and Douze, Matthijs and Furon, Teddy},
  booktitle={Proceedings of the IEEE/CVF International Conference on Computer Vision},
  pages={22466--22477},
  year={2023}
}

@inproceedings{chen2024editshield,
  title={Editshield: Protecting unauthorized image editing by instruction-guided diffusion models},
  author={Chen, Ruoxi and Jin, Haibo and Liu, Yixin and Chen, Jinyin and Wang, Haohan and Sun, Lichao},
  booktitle={European Conference on Computer Vision},
  pages={126--142},
  year={2024},
  organization={Springer}
}

@inproceedings{liu2024stable,
  title={Stable unlearnable example: Enhancing the robustness of unlearnable examples via stable error-minimizing noise},
  author={Liu, Yixin and Xu, Kaidi and Chen, Xun and Sun, Lichao},
  booktitle={Proceedings of the AAAI Conference on Artificial Intelligence},
  volume={38},
  number={4},
  pages={3783--3791},
  year={2024}
}

@article{he2025private,
  title={Private Sampling of Latent Diffusion Models for Encrypted Prompt},
  author={He, Guanghui and Ren, Yanli and Cai, Xiaoqiu and Feng, Guorui and Zhang, Xinpeng},
  journal={IEEE Transactions on Circuits and Systems for Video Technology},
  year={2025},
  publisher={IEEE}
}

@inproceedings{zhao2024cipherdm,
  title={Cipherdm: Secure three-party inference for diffusion model sampling},
  author={Zhao, Xin and Chen, Xiaojun and Chen, Xudong and Li, He and Fan, Tingyu and Zhao, Zhendong},
  booktitle={European Conference on Computer Vision},
  pages={288--305},
  year={2024},
  organization={Springer}
}

@article{kaur2020comprehensive,
  title={A comprehensive review on image encryption techniques.},
  author={Kaur, Manjit and Kumar, Vijay},
  journal={Archives of Computational Methods in Engineering},
  volume={27},
  number={1},
  year={2020}
}

@article{chen2024privacy,
  title={Privacy-preserving diffusion model using homomorphic encryption},
  author={Chen, Yaojian and Yan, Qiben},
  journal={arXiv preprint arXiv:2403.05794},
  year={2024}
}

@article{he2025private1,
  title={Private image synthesis of latent diffusion model with the ciphertext of prompt},
  author={He, Guanghui and Ren, Yanli and Li, Gaojian and Feng, Guorui and Zhang, Xinpeng},
  journal={Neural Networks},
  pages={107678},
  year={2025},
  publisher={Elsevier}
}

@inproceedings{guo2024image,
  title={Image Encryption and Compression Based on Reversed Diffusion Model},
  author={Guo, Yilin and Chang, Jianhui and Zhang, Yuhuai and Zhang, Jian and Ma, Siwei},
  booktitle={2024 Picture Coding Symposium (PCS)},
  pages={1--5},
  year={2024},
  organization={IEEE}
}

@article{huang2025new,
  title={New Framework of Robust Image Encryption},
  author={Huang, Lin and Qin, Chuan and Feng, Guorui and Luo, Xiangyang and Zhang, Xinpeng},
  journal={ACM Transactions on Multimedia Computing, Communications and Applications},
  volume={21},
  number={3},
  pages={1--22},
  year={2025},
  publisher={ACM New York, NY}
}

@article{gai2025pcdiff,
  title={PCDiff: Proactive Control for Ownership Protection in Diffusion Models with Watermark Compatibility},
  author={Gai, Keke and Shen, Ziyue and Yu, Jing and Zhu, Liehuang and Wu, Qi},
  journal={arXiv preprint arXiv:2504.11774},
  year={2025}
}

@inproceedings{Takana2025access,
  title={Access Control for Diffusion Models by Random Masking the Covariance of Initial Noise Distribution},
  author={Temma Tanaka and Kazuaki Nakamura},
  booktitle={Asia Pacific Signal and Information Processing Association Annual Summit and Conference},
  pages={2062--2067},
  year={2025},
  organization={IEEE}
}

@inproceedings{deng2009imagenet,
  title={Imagenet: A large-scale hierarchical image database},
  author={Deng, Jia and Dong, Wei and Socher, Richard and Li, Li-Jia and Li, Kai and Fei-Fei, Li},
  booktitle={2009 IEEE conference on computer vision and pattern recognition},
  pages={248--255},
  year={2009},
  organization={Ieee}
}

@article{karras2017progressive,
  title={Progressive growing of gans for improved quality, stability, and variation},
  author={Karras, Tero and Aila, Timo and Laine, Samuli and Lehtinen, Jaakko},
  journal={arXiv preprint arXiv:1710.10196},
  year={2017}
}

@inproceedings{lin2014microsoft,
  title={Microsoft coco: Common objects in context},
  author={Lin, Tsung-Yi and Maire, Michael and Belongie, Serge and Hays, James and Perona, Pietro and Ramanan, Deva and Doll{\'a}r, Piotr and Zitnick, C Lawrence},
  booktitle={European conference on computer vision},
  pages={740--755},
  year={2014},
  organization={Springer}
}

@article{rombach2021high,
  title={High-Resolution Image Synthesis with Latent Diffusion Models},
  author={Robin Rombach and A. Blattmann and Dominik Lorenz and Patrick Esser and Bj{\"o}rn Ommer},
  journal={2022 IEEE/CVF Conference on Computer Vision and Pattern Recognition (CVPR)},
  year={2021},
  pages={10674-10685},
}

@inproceedings{lei2025secure,
  title={Secure and Efficient Watermarking for Latent Diffusion Models in Model Distribution Scenarios},
  author={Li Lei and Keke Gai and Jing Yu and Liehuang Zhu and Qi Wu},
  booktitle={International Joint Conference on Artificial Intelligence},
  year={2025},
}

@inproceedings{nichol2021improved,
  title={Improved denoising diffusion probabilistic models},
  author={Nichol, Alexander Quinn and Dhariwal, Prafulla},
  booktitle={International conference on machine learning},
  pages={8162--8171},
  year={2021},
  organization={PMLR}
}

@article{heusel2017gans,
  title={Gans trained by a two time-scale update rule converge to a local nash equilibrium},
  author={Heusel, Martin and Ramsauer, Hubert and Unterthiner, Thomas and Nessler, Bernhard and Hochreiter, Sepp},
  journal={Advances in neural information processing systems},
  volume={30},
  year={2017}
}

@article{xiao2025preview,
  title={Preview Helps Selection: Previewable Image Watermarking with Client-Side Embedding},
  author={Xiao, Xiangli and Zhang, Yushu and Hua, Zhongyun and Xia, Zhihua and Weng, Jian},
  journal={IEEE Transactions on Dependable and Secure Computing},
  year={2025},
  publisher={IEEE}
}

@article{ke2021reversible,
  title={A reversible data hiding scheme in encrypted domain for secret image sharing based on Chinese remainder theorem},
  author={Ke, Yan and Zhang, Minqing and Zhang, Xinpeng and Liu, Jia and Su, Tingting and Yang, Xiaoyuan},
  journal={IEEE Transactions on Circuits and Systems for Video Technology},
  volume={32},
  number={4},
  pages={2469--2481},
  year={2021},
  publisher={IEEE}
}

@article{li2023error,
  title={On error propagation of diffusion models},
  author={Li, Yangming and van der Schaar, Mihaela},
  journal={arXiv preprint arXiv:2308.05021},
  year={2023}
}
%

\end{document}